\newif\ifarxiv
\crefname{hypothesis}{Hypothesis}{Hypotheses}
\crefname{subsection}{Section}{Sections}
\crefname{section}{Section}{Sections}
\tikzstyle{vertex}=[circle,fill=black,inner sep=1.5,draw,minimum size=.1cm]
\tikzstyle{vertex2}=[circle,fill=white,inner sep=1.5,draw,minimum size=.25cm,thick]
\tikzstyle{edge}=[]
\newcommand{\G}{\mathcal G}
\newcommand{\M}{\mathcal M}
\newcommand{\Q}{\mathcal Q}
\newcommand{\T}{\mathcal T}
\newcommand{\mathS}{\mathcal S}
\newcommand{\yes}{\emph{yes}}
\newcommand{\no}{\emph{no}}
\newcommand{\dgg}[2]{\ensuremath{\widehat{Z}_{{#1}, {#2}}}}
\newcommand{\MaxTempMatchingSize}{\mu}
\newcommand{\optproblemdef}[3]{
\begin{center}   
    \fbox{~\begin{minipage}{.9\textwidth}
      \vspace{2pt} 
     
      \noindent
      \normalsize\textsc{#1}
      
      \vspace{4pt}
      \setlength{\tabcolsep}{3pt}
      \renewcommand{\arraystretch}{1.0}
      \begin{tabularx}{\textwidth}{@{}lX@{}}
	\normalsize\textbf{Input:} 	& \normalsize#2 \\
	\normalsize\textbf{Output:} 	& \normalsize#3
      \end{tabularx}
    \end{minipage}}
    \end{center}
       \vspace{2pt}
}
\theoremstyle{plain}
\newtheorem{observation}[theorem]{Observation}
\theoremstyle{definition}
\newtheorem{construction}{Construction}
		\title{Computing Maximum Matchings in Temporal Graphs}
\newcommand{\tuaddress}{Technische Universit\"at Berlin, Algorithmics and Computational Complexity, Berlin, Germany}
\title{Computing Maximum Matchings in Temporal Graphs} %
\author{George B. Mertzios}
{Department of Computer Science, Durham University, UK}
{george.mertzios@durham.ac.uk}
{https://orcid.org/0000-0001-7182-585X}
{Supported by the EPSRC grant EP/P020372/1.}
\author{Hendrik Molter}
{\tuaddress}
{h.molter@tu-berlin.de}
{https://orcid.org/0000-0002-4590-798X}
{Supported by the DFG, project MATE (NI369/17).}
\author{Rolf Niedermeier}
{\tuaddress}
{rolf.niedermeier@tu-berlin.de}
{https://orcid.org/0000-0003-1703-1236}
{}
\author{Viktor Zamaraev}{Department of Computer Science, University of Liverpool, UK}
{viktor.zamaraev@liverpool.ac.uk}
{https://orcid.org/0000-0001-5755-4141}
{Supported by the EPSRC grant EP/P020372/1. The main part of this paper was prepared while affiliated with the Department of Computer Science, Durham University, UK.}
\author{Philipp Zschoche}
{\tuaddress}
{zschoche@tu-berlin.de}
{https://orcid.org/0000-0001-9846-0600}
{}
\authorrunning{G.\ B.\ Mertzios, H.\ Molter, R.\ Niedermeier, V.\ Zamaraev, and P.\ Zschoche}%
\keywords{Temporal Graph, Link Stream, Temporal Line Graph, NP-hardness, APX-hardness, Approximation Algorithm, Fixed-parameter Tractability, Independent Set.}%
\authorrunning{G. B. Mertzios, H. Molter, R. Niedermeier, V. Zamaraev, and P. Zschoche}%
\title{Computing Maximum Matchings in Temporal Graphs\thanks{An extended abstract of this work appears in the proceedings of STACS 2020 \cite{us}. This version provides full proof details. The main part of this paper was prepared while VZ was affiliated with the Department of Computer Science, Durham University, UK.}
		\funding{Supported by the EPSRC (grant EP/P020372/1) 
and the DFG (project MATE (NI369/17)).}}
\author{George B. Mertzios\thanks{Department of Computer Science, Durham University, UK (\email{george.mertzios@durham.ac.uk})}
\and
Hendrik Molter\thanks{Technische Universit\"at Berlin, Faculty~IV, Algorithmics
and Computational~Complexity, Berlin, Germany (\email{h.molter@tu-berlin.de}, \email{rolf.niedermeier@tu-berlin.de}, \email{zschoche@tu-berlin.de})}
\and
Rolf Niedermeier\footnotemark[3]
\and
Viktor~Zamaraev\thanks{Department of Computer Science, University of Liverpool, UK (\email{viktor.zamaraev@liverpool.ac.uk})}
\and
Philipp Zschoche\footnotemark[3]
}
\begin{document}

\maketitle

\begin{abstract}
Temporal graphs are graphs whose topology 
is subject to discrete changes over time. Given a static underlying
graph $G$, a temporal graph is represented by assigning a set of integer 
time-labels to every edge $e$ of $G$, indicating the discrete time steps at 
which $e$ is active. 
We introduce and study the complexity of a natural temporal 
extension of the classical graph problem \textsc{Maximum Matching}, 
taking into account the dynamic nature of temporal graphs. 
In our problem, \textsc{Maximum Temporal Matching}, we are looking for the 
largest possible number of time-labeled edges (simply \emph{time edges}) 
$(e,t)$ such that no vertex is matched more than once within any time 
window of~$\Delta$~consecutive time slots, where $\Delta \in \mathbb{N}$ 
is given. 
The requirement that a vertex cannot be matched twice in any $\Delta$-window 
models some necessary ``recovery'' period that needs 
to pass for an entity (vertex) after being paired up for some activity with another entity. 
We prove strong computational hardness results for \textsc{Maximum Temporal Matching}, even for elementary cases. 
To cope with this computational hardness, we mainly focus on fixed-parameter algorithms with respect to natural parameters, as well as on polynomial-time approximation algorithms. 
\end{abstract}

\ifarxiv{}\else{}
\begin{keywords}
		Link Streams, Temporal Line Graphs, NP-hardness,
		APX-hardness, Approximation Algorithms, Fixed-Parameter Tractability,
		Kernelization, Matroid Theory, Independent Set
\end{keywords}

\begin{AMS}
			05C85, %
			05C90, %
			68W25 %
\end{AMS}
\fi{}

\section{Introduction}
Computing a maximum matching in an undirected graph (a maximum-cardinality set of ``independent edges'', 
i.e., edges which do not share any endpoint) is one of the most fundamental graph-algorithmic primitives.
In this work, we lift the study of the algorithmic complexity of computing maximum matchings from static graphs
to the---recently strongly growing---field of \emph{temporal
graphs}~\cite{akridaGMS16,Stochastic_Temporal_AkridaMSZ19,AkridaMSZ18,AxiotisF16,BHMMNS18,ErlebachHK15,mertziosMCS19,MMZ2019}.
In a nutshell, a temporal graph is a graph whose topology is subject to discrete changes over time. 
We adopt a simple and natural model for temporal graphs which originates from
the foundational work of Kempe et al.~\cite{kempe}.
According to this model, every edge of a static graph is given along with a set of time labels, while the vertex set remains unchanged.

\begin{definition}[Temporal Graph]
\label{temp-graph-def} 
A \emph{temporal graph} $\G=(G,\lambda)$ is a pair $(G,\lambda)$,
where $G=(V,E)$ is an underlying (static) graph and $\lambda :E\rightarrow
2^{\mathbb{N}}\setminus\{\emptyset\}$ is a \emph{time-labeling} function that specifies which edge is \emph{active} at what time.
\end{definition}
An alternative way to view a temporal graph is to see it as an ordered set (according to the discrete time slots) 
of graph instances (called \emph{snapshots}) on a fixed vertex set. 
Due to their vast applicability in many areas, temporal graphs 
have been studied from different perspectives under various names 
such as \emph{time-varying}~\cite{TangMML10-ACM}, %
\emph{evolving}~\cite{clementi,Ferreira-MANETS-04}, %
\emph{dynamic}~\cite{CasteigtsFloccini12,GiakkoupisSS14}, 
and \emph{graphs over time}~\cite{Leskovec-Kleinberg-Faloutsos07}; 
see also the survey papers~\cite{flocchini1,flocchini2,CasteigtsFloccini12} and
the references therein.

In this paper we introduce and study the complexity of a natural temporal extension of the classical problem \textsc{Maximum Matching}, 
which takes into account the dynamic nature of temporal graphs. 
To this end, we extend the notion of ``edge independence'' to the temporal setting: 
two time-labeled edges (simply \emph{time edges}) $(e,t)$ and $(e',t')$ are \emph{$\Delta$-independent} 
whenever (i)~the edges $e,e'$ do not share an endpoint or (ii)~their time labels $t,t'$ are at least~$\Delta$ time units apart 
from each other.\footnote{Throughout the paper, $\Delta$ always refers to that number, 
and never to the maximum degree of a static graph (which is another common use of $\Delta$).}
Then, for any given $\Delta$, the problem \textsc{Maximum Temporal Matching} asks for the largest possible set 
of pairwise $\Delta$-independent edges in a temporal graph. 
That is, in a feasible solution, no vertex can be matched more than once within any time window of length~$\Delta$. 
This requirement can model settings where a short ``recovery'' period is needed for every vertex that participates in the matching, 
e.g.,~a short rest after an energy-demanding activity. 
Thus it makes particular sense to study the complexity of the problem for
small (constant) values of $\Delta$.
Note that the concept of $\Delta$-windows has also been employed in other
temporal graph problem settings~\cite{AkridaMSZ18,CHMZ19,HimmelMNS17,MMZ2019}.

Our main motivation for studying \textsc{Maximum Temporal Matching} is of theoretical nature, 
namely to lift one of the most classical optimization problems, \textsc{Maximum Matching}, to the temporal setting. 
As it turns out, \textsc{Maximum Temporal Matching} is computationally hard to approximate: 
we prove that the problem is APX-hard, even when $\Delta=2$ and the lifetime $T$ of the temporal graph 
(i.e.,~the maximum edge label) is $3$ (see Section~\ref{sec:apx}). 
That is, unless P=NP, there is no Polynomial-Time Approximation Scheme (PTAS) for any $\Delta\geq 2$ and $T\geq 3$. 
In addition, we show that the problem remains NP-hard even if the underlying graph $G$ is just a path (see Section~\ref{sec:pathhardness}). 
Consequently, we mainly turn our attention to approximation and to fixed-parameter algorithms (see \cref{sec:algos}). 
In order to prove our hardness results, we introduce the notion of
\emph{temporal line graphs} which form a class of (static) graphs of independent
interest and may prove useful in other contexts, too.
This notion paves the way to reduce \textsc{Maximum Temporal Matching} to the
problem of computing a large independent set in a static graph (i.e.,~in the temporal line graph that is defined from the input temporal graph). 
Moreover, as an intermediate result, we show (see Theorem~\ref{prop:nphgridis}) that the classic problem 
\textsc{Independent Set} (on static graphs) remains 
NP-hard on induced subgraphs of \emph{diagonal grid} graphs, thus 
strengthening an old result of Clark et al.~\cite{clark1990unit} for unit disk graphs.

During the last few decades it has been repeatedly observed that 
for many variations of \textsc{Maximum Matching} it is straightforward 
to obtain online (resp.\ greedy offline approximation) algorithms which achieve 
a competitive (resp.\ an approximation) ratio of $\frac{1}{2}$, 
while great research efforts have been made to increase the ratio 
to $\frac{1}{2} + \varepsilon$, for \emph{any} constant $\varepsilon >0$. 
Originating in the foundational work of Karp et al.~\cite{karp1990optimal} 
on the randomized online algorithm \textsc{Ranking} for the \textsc{Online
Bipartite Matching} problem, there has been a long line of recent research on
providing a sequence of $(\frac{1}{2} + \varepsilon)$-competitive algorithms for
many different variations of \textsc{Online Matching}, see
e.g.~\cite{BuchbinderST2019,gamlath2019-arxiv,huang2018match,huang2019tight}.
This difficulty of breaking the barrier of the ratio $\frac{1}{2}$ also appears
in offline variations of the \textsc{Matching} problem.
It is well known that an arbitrary greedy algorithm for \textsc{Matching} gives approximation ratio at least $\frac{1}{2}$~\cite{HK78,KorteH78}, 
while it remains a long-standing open problem to determine how well a randomized greedy algorithm can perform. 
Aronson et al.~\cite{aronson1995randomized} provided the so-called Modified Randomized Greedy (MRG) algorithm 
which approximates the maximum matching within a factor of at least $\frac{1}{2} + \frac{1}{400,000}$. 
Recently, Poloczek and Szegedy~\cite{poloczek2012randomized} proved that MRG actually provides 
an approximation ratio of $\frac{1}{2} + \frac{1}{256}$. 
Similarly to the above problems, it is straightforward\footnote{To achieve the straightforward $\frac{1}{2}$-approximation 
it suffices to just greedily compute at every time slot a maximal matching among the edges that are $\Delta$-independent 
with the current solution.} %
to approximate \textsc{Maximum Temporal Matching} in polynomial time within a factor of $\frac{1}{2}$. 
However, we manage to
provide a simple (non-randomized) approximation algorithm which, for any
constant~$\Delta$, achieves an approximation ratio $\frac{1}{2} + \varepsilon$ for some $\varepsilon = \varepsilon(\Delta)$. For $\Delta=2$ this ratio is $\frac{2}{3}$, 
while for an arbitrary constant $\Delta$ it becomes 
$\frac{\Delta}{2\Delta-1} = \frac{1}{2} + \frac{1}{2(2\Delta-1)}$ (see Section~\ref{sec:approxAlg}).

Apart from polynomial-time approximation algorithms, the classical (static)
matching problem (which is polynomial-time solvable) has recently also attracted 
many research efforts in the area of parameterized algorithms 
for polynomial-time solvable problems. Parameters which have been studied
include the solution size~\cite{GiannopoulouMN15}, the modular-width~\cite{KratschNelles18}, the clique-width~\cite{CoudertDP18}, 
the treewidth~\cite{FominLSPW-18}, the feedback vertex 
number~\cite{MertziosNN17}, and the feedback edge
number~\cite{KorenweinNNZ18,MertziosNN17}.
Given that \textsc{Maximum Temporal Matching} is NP-hard, 
we show fixed-parameter tractability with respect to 
the solution size as a parameter.
Finally, we show fixed-parameter tractability
with respect to the combined parameter of time-window size $\Delta$ and the size
of a maximum matching of the underlying graph (which may be significantly
smaller than the cardinality of a maximum temporal matching of the temporal graph). 
Our algorithmic techniques are essentially based on efficient and effective data
reduction (kernelization) and matroid theory (see \cref{sec:algos}).

It is worth mentioning that another temporal variation 
of \textsc{Maximum Matching}  
was recently proposed by Baste~et~al.~\cite{baste2018temporal}. 
The main difference to our model is that their model requires edges to exist 
in at least $\Delta$~\emph{consecutive} snapshots 
in order for them to be eligible for a matching. 
Thus, their matchings need to consist of time-consecutive edge blocks, which requires some data cleaning
on real-word instances in order to perform meaningful experiments~\cite{baste2018temporal}. %

It turns out that the model of Baste et al.~is a special case of our model, as there is an easy reduction from their model to ours, 
and thus their results are also implied by ours. 
Baste et al.~\cite{baste2018temporal} showed that solving 
(using their definition) \textsc{Maximum Temporal Matching} is NP-hard 
for $\Delta\geq 2$. 
In terms of parameterized complexity, they provided 
a polynomial-sized kernel for the combined parameter~$(k, \Delta)$, 
where~$k$ is the size of the desired solution. 

To the best of our knowledge, the main alternative model for temporal matchings in temporal graphs 
is the concept of multistage (perfect) matchings which was introduced by Gupta~et~al.~\cite{gupta2014changing}. 
This model, which is inspired by reconfiguration or reoptimization problems, is not directly related to ours: 
roughly speaking, their goal is to find perfect matchings for every snapshot of
a temporal graph such that the matchings only \emph{slowly change} over time.
In this setting one mostly encounters computational intractability, which leads 
to several results on approximation hardness and
approximation algorithms~\cite{bampis2018multistage,gupta2014changing}.

\section{Preliminaries}
\label{sec:preliminaries}
In this section we present all necessary notation and terminology as well as some easy initial observation about our problem setting.

\subsection{Notation and Terminology}
Let $\mathbb{N}$ denote the natural numbers without zero. 
We refer to a set of consecutive natural numbers $[i,j]=\{i,i+1,\ldots ,j\}$ for some $i,j\in \mathbb{N}$ with $i\leq j$ as an \textit{interval}, and to the number~$j-i+1$ as the \textit{length} of the interval.
If $i=1$, then we denote $[i,j]$ simply by~$[j]$.
By~$\mathbb F_p$ we denote the finite field on $p$~elements.
For the sake of brevity, the notation $A \uplus B$ denotes the union of two sets $A$ and $B$ and implicitly indicates that the sets are disjoint.
We call a family of sets \(Z_1,\dots,Z_\ell\) a \emph{partition} of a set~\(A\)
if~\(Z_1\uplus\dots\uplus Z_\ell=A\)
and \(Z_i\ne\emptyset\) for each~\(i\in\{1,\dots,\ell\}\).
A \textit{$p$-family} is a family of sets where each set is of size exactly~$p$.
\ifarxiv{}
\subparagraph*{Static graphs.}
\else{}
\paragraph{Static graphs}
\fi{}
We use standard notation and terminology from graph theory~\cite{diestel2000graphentheory}. 
Given an undirected (static) graph $G=(V,E)$ with $E\subseteq \binom{V}{2}$, we denote by~$V(G)=V$ and $E(G)=E$ the sets of its
vertices and edges, respectively. 
We call two vertices $u,v\in V$ \emph{adjacent} if $\{u,v\}\in E$. We call two edges $e_1,e_2\in E$ \emph{adjacent} if $e_1\cap e_2\neq \emptyset$.
By $P_n$ we denote a graph that is a path with $n$ vertices. By $\nu(G)$ we denote the size of a maximum matching in $G$. Whenever it is clear from the context, we omit $G$.
Two graphs $G_1=(V_1,E_1)$ and $G_2=(V_2,E_2)$ are \emph{isomorphic} if there is a bijection $\sigma : V_1 \rightarrow V_2$ such that for all $u,v\in V_1$ we have that $\{u,v\}\in E_1$ if and only if $\{\sigma(u),\sigma(v)\}\in E_2$.
Given a graph $G=(V,E)$ and an edge $\{u,v\}\in E$, \emph{subdividing} the edge $\{u,v\}$ results in a graph isomorphic to $G'=(V', E')$ with $V'=V\cup\{w\}$ for some $w\notin V$ and $E'=(E\setminus \{\{u,v\}\})\cup\{\{v,w\},\{u,w\}\}$. A graph $H$ is a \emph{subdivision} of a graph $G$ if there is a sequence of graphs $G_1, G_2, \ldots, G_x$ with $G_1=G$ such that for each $G_i=(V_i, E_i)$ with $i<x$ there is an edge $e\in E_i$ and subdividing $e$ results in a graph isomorphic to $G_{i+1}$, and~$G_x$~is isomorphic to $H$. 
A graph $H$ is a \emph{topological minor} of $G$ if there is a subgraph $G'$ of $G$ that is a subdivision of $H$. 
A graph $H$ is an \emph{induced topological minor} of $G$ if there is an \emph{induced} subgraph $G'$ of $G$ that is a subdivision of $H$. %
A \emph{line graph} of a (static) graph $G=(V,E)$ is the graph~$L(G)$ with~$V(L(G))=\{v_e\mid e\in E\}$ and for all $v_e, v_{e'}\in V(L(G))$ we have that~$\{v_e,v_{e'}\}\in E(L(G))$ if and only if $e\cap e'\neq \emptyset$~\cite{diestel2000graphentheory}. Recall that a \emph{maximum independent set} of a (static) graph $G=(V,E)$ is a vertex set~$V'\subseteq V$ of maximum cardinality such that for all $u,v\in V'$ we have that $\{u,v\}\notin E$. In the context of matchings, line graphs are of special interest since the cardinality of a maximum matching in a graph equals the cardinality of a maximum independent set in its line graph. Indeed, a matching in a graph can directly be translated into an independent set in its line graph and vice versa~\cite{diestel2000graphentheory}.

\ifarxiv{}
\subparagraph*{Parameterized complexity.} 
\else{}
\paragraph{Parameterized complexity} 
\fi{}
We use standard notation and terminology from parameterized
complexity~\cite{cygan2015parameterized,downey2013fundamentals}. %
A \emph{parameterized problem} is a language $L\subseteq \Sigma^* \times \mathbb{N}$, where $\Sigma$ is a finite alphabet. We call the second component
the \emph{parameter} of the problem.
A parameterized problem is \emph{fixed-pa\-ram\-e\-ter tractable} (in the complexity class FPT)
if there is an algorithm that solves each instance~$(I, r)$ in~$f(r) \cdot |I|^{O(1)}$ time,
for some computable function $f$. 
If a parameterized problem $L$ is NP-hard for a constant parameter value, it cannot be contained in FPT\footnote{It cannot even be contained in the larger parameterized complexity class XP unless P$\ =\ $NP.} unless P$\ =\ $NP.
A parameterized problem $L$ admits a \emph{polynomial kernel} if there is a polynomial-time algorithm that transforms each instance $(I,r)$ into an instance $(I', r')$ such that $(I,r)\in L$ if and only if $(I',r')\in L$ and~$|(I', r')|\le r^{O(1)}$. 

\ifarxiv{}
\subparagraph*{Temporal graphs.}
\else{}
\paragraph{Temporal graphs}
\fi{}

Throughout the paper we consider temporal graphs $\G$ with \emph{finite
lifetime}~$T(\G):=\max \{t\in \lambda (e)\mid e\in E\}$, that is, there is a maximum label assigned by~$\lambda$ to an
edge of~$G$. When it is clear from the context, we denote the lifetime of $\G$ simply by $T$.
The \emph{snapshot} (or \emph{instance}) of $\G$ 
\emph{at time}~$t$ is the static graph $G_{t}=(V,E_{t})$, where $%
E_{t}:=\{e\in E\mid t\in \lambda (e)\}$. 
We refer to each integer $t\in [T]$ as a \emph{time slot} of $\G$. 
For every $e\in E$ and every time slot $t\in\lambda(e)$, we denote the \emph{appearance
of edge} $e$ \emph{at time} $t$ by the pair $(e,t)$, which we also call a \emph{time edge}.
We denote the set of edge appearances of a temporal graph $\G =(G=(V,E),\lambda)$ by $\mathcal E(\G) := \{ (e,t) \mid e \in E$ and $t \in \lambda(e) \}$.
For every $v\in V$ and every time slot $t$, we denote the \emph{appearance
of vertex} $v$ \emph{at time} $t$ by the pair $(v,t)$. 
That is, every vertex~$v$ has $T$ different appearances (one for each time slot) during the
lifetime of $\G$. 
For every time slot $t\in [T]$, we denote by $V_{t}:=\{(v,t) : v\in V\}$ the set
of all vertex appearances of~$\G$ at time slot~$t$. 
Note that the set of all vertex appearances in $\G$ is $V\times [T] = \bigcup_{1\leq t\leq T} V_t$.
Two vertex appearances $(v,t)$ and $(w,t)$ are \emph{adjacent} if the temporal graph has the time edge $(\{v,w\},t)$.
For a temporal graph $\G= (G,\lambda)$ and a set of time edges $M$, 
we denote by~$\G \setminus M := (G', \lambda')$ the temporal graph $\G$ without the time edges in $M$, 
where $\lambda'(e) := \lambda(e) \setminus \{ t \mid (e,t) \in M \}$ for all $e\in E'$, and
$G':=(V, E')$ with $E':=\{e\in E\mid \lambda'(e) \neq \emptyset\}$.
For a subset $S\subseteq [T]$ of time slots and 
a time edge set $M$, we denote by~$M|_{S} := \{ (e,t) \in M \mid t\in S \}$ the set of time edges in~$M$ with a label in~$S$.
For a temporal graph $\G$, we denote by $\G|_{S} := \G \setminus (\mathcal E(\G)|_{[T] \setminus S})$ the temporal graph 
where only time edges with a label in $S$ are present.

In the remainder of the paper we denote by
$n$ and $m$  
the number of vertices and edges of the underlying graph $G$, respectively, unless
otherwise stated. 
We assume that there is no compact representation of the 
labeling $\lambda$, that is, $\G$ is given with an
explicit list of labels for every edge,
and hence the \emph{size} of a temporal graph $\G$ is~$|\G| := |V|+\sum_{t=1}^{T}|E_{t}| \in O(n+mT)$. 
Furthermore, in accordance with 
the literature~\cite{wu2016efficient,zschocheFMN18} we assume that the lists of labels are given in ascending order.

\ifarxiv{}
\subparagraph*{Temporal matchings.}
\else{}
\paragraph{Temporal matchings}
\fi{}
A \emph{matching} in a (static) graph $G=(V,E)$ is a set $M\subseteq E$ of edges such that for all $e, e'\in M$ we have that $e\cap e'=\emptyset$. In the following, we transfer this concept to temporal graphs.

For a natural number $\Delta$, 
two time edges $(e,t)$, $(e',t')$ are \emph{$\Delta$-independent} 
if $e \cap e' = \emptyset$ or~$|t - t'| \geq \Delta$. If two time edges are not $\Delta$-independent, then we say that they are \emph{in conflict}.
A time edge $(e,t)$ \emph{$\Delta$-blocks} a vertex appearance~$(v,t')$ 
(or $(v,t')$ is \emph{$\Delta$-blocked} by~$(e,t)$) 
if $v \in e$ and $|t - t'| \leq \Delta -1$.
A~\emph{$\Delta$-temporal matching} $M$ of a temporal graph $\G$ is a set of
time edges of $\G$ which are pairwise $\Delta$-independent. Formally, it is defined as follows.
\begin{definition}[$\Delta$-Temporal Matching]
A \emph{$\Delta$-temporal matching} of a temporal graph $\G$ is a set 
$M$
of time edges of $\G$ such that for every pair of distinct time edges $(e,t),(e',t')$ in $M$ we have that $e \cap e' = \emptyset$ or $|t - t'| \geq \Delta$.
\end{definition}
A $\Delta$-temporal matching is called \textit{maximal} if it is not properly contained in any other $\Delta$-temporal matching.
A $\Delta$-temporal matching is called \textit{maximum} if there is no $\Delta$-temporal matching of larger cardinality.
We denote by $\MaxTempMatchingSize_{\Delta}(\G)$ 
the size of a maximum $\Delta$-temporal matching in $\G$.

Having defined temporal matchings, we naturally arrive at the following central problem.

\medskip

\optproblemdef{Maximum Temporal Matching}
{A temporal graph $\G = (G,\lambda)$ and an integer $\Delta \in \mathbb N$.}
{A $\Delta$-temporal matching in $\G $ of maximum cardinality.}

\medskip

We refer to the problem of deciding whether a given temporal graph admits a $\Delta$-temporal matching of a given size $k$ by \textsc{Temporal Matching}.

We remark that our definition of a $\Delta$-temporal matching is similar to
\emph{$\gamma$-matchings} by Baste~et~al.~\cite{baste2018temporal}. 
We discuss some basic observations about our problem settings in \cref{app:basic} and discuss the relation between our model and the model of Baste et al.~\cite{baste2018temporal} in \cref{relation-to-baste}.

\ifarxiv{}
\subparagraph*{Temporal line graphs.}
\else{}
\paragraph{Temporal line graphs}
\fi{}
In the following, we transfer the concept of line graphs to temporal graphs and temporal matchings.
We make use of temporal line graphs in the NP-hardness result of \cref{sec:pathhardness}.

The \emph{$\Delta$-temporal line graph} of a temporal graph $\G$ is a static
graph that has a vertex for every time edge of $\G$ and two vertices are connected by an edge if the corresponding time edges are in conflict, i.e., they cannot be both part of a $\Delta$-temporal matching of $\G$.
We say that a graph $H$ is a \emph{temporal line graph} if there exists $\Delta$ and a temporal graph $\G$ such that $H$ is isomorphic to the $\Delta$-temporal line graph of $\G$. Formally, temporal line graphs and $\Delta$-temporal line graphs are defined as follows.

\begin{definition}[Temporal Line Graph]
Given a temporal graph $\G=(G=(V,E), \lambda)$ and a natural number~$\Delta$, the \emph{$\Delta$-temporal line graph} $L_\Delta(\G)$ of $\G$ has vertex set
$V(L_\Delta(\G)) := \{ e_t \mid e\in E \wedge t\in\lambda(e)\}$ and edge set
	$E(L_\Delta(\G)) :s= \{\{e_t,e'_{t'}\} \mid e\cap e' \neq \emptyset \wedge
	|t-t'|<\Delta\}$.
We say that a graph $H$ is a \emph{temporal line graph} if there is a temporal graph $\G$ and an integer~$\Delta$ such that $H=L_\Delta(\G)$.
\end{definition}

By definition, $\Delta$-temporal line graphs have the following property.
\begin{observation}\label{obs:TMequivMIS}
Let $\G$ be a temporal graph and let $L_\Delta(\G)$ be its $\Delta$-temporal line graph. The cardinality of a maximum independent set in $L_\Delta(\G)$ equals the size of a maximum $\Delta$-temporal matching of $\G$.
\end{observation}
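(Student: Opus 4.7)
The plan is to exhibit the natural bijection implicit in the definition of $L_\Delta(\G)$ and verify that it turns $\Delta$-temporal matchings into independent sets and vice versa. Concretely, I would define the map $\varphi$ sending each time edge $(e,t)$ of $\G$ to the vertex $e_t \in V(L_\Delta(\G))$. By the very definition of the vertex set of $L_\Delta(\G)$, this map is a bijection between $\mathcal E(\G)$ and $V(L_\Delta(\G))$, and it extends to a cardinality-preserving bijection between subsets of $\mathcal E(\G)$ and subsets of $V(L_\Delta(\G))$.

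Next I would verify the central equivalence: for any two distinct time edges $(e,t), (e',t')$, the vertices $e_t$ and $e'_{t'}$ are adjacent in $L_\Delta(\G)$ if and only if $(e,t)$ and $(e',t')$ are in conflict. This is immediate from the definitions, since $\{e_t, e'_{t'}\} \in E(L_\Delta(\G))$ iff $e \cap e' \neq \emptyset$ and $|t - t'| < \Delta$, which is exactly the negation of $\Delta$-independence. Hence a set $M \subseteq \mathcal E(\G)$ is a $\Delta$-temporal matching (every two of its time edges are $\Delta$-independent) if and only if $\varphi(M) \subseteq V(L_\Delta(\G))$ contains no edge of $L_\Delta(\G)$, i.e., is an independent set.

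Since $\varphi$ is a cardinality-preserving bijection that maps $\Delta$-temporal matchings of $\G$ exactly onto independent sets of $L_\Delta(\G)$, a maximum $\Delta$-temporal matching is mapped to an independent set of the same size, and likewise the preimage of a maximum independent set is a $\Delta$-temporal matching of the same size. Therefore the two maxima coincide. The only thing to be slightly careful about is the distinctness condition: the definition of $\Delta$-temporal matching quantifies over distinct pairs $(e,t) \neq (e',t')$, which matches the absence of self-loops in $L_\Delta(\G)$, so no issue arises. There is essentially no nontrivial obstacle here; the observation is a direct unfolding of the two definitions.
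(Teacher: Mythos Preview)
Your proposal is correct and is exactly the direct definition-unfolding the paper intends; the paper itself does not even spell out a proof, simply prefacing the observation with ``By definition, $\Delta$-temporal line graphs have the following property.'' Your careful check that conflict coincides with adjacency, together with the bijection $\varphi$, is precisely the argument that justifies the observation.
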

It follows that solving \textsc{Temporal Matching} on a temporal graph $\G$ is equivalent to solving \textsc{Independent Set} on $L_\Delta(\G)$.

\subsection{Preliminary results and observations}\label{app:basic}
Note that when the input parameter $\Delta$ in \textsc{Maximum Temporal Matching} is equal to 1, the problem
can be solved efficiently, because it reduces to $T$ independent instances of (static) \textsc{Maximum Matching}. 

At the other extreme, there are instances $(\G=(G,\lambda), \Delta,k)$ in which
$\Delta$ coincides with the lifetime~$T$, i.e.,\ $\Delta=T$.
In this case the problem can also be solved in polynomial time. Indeed, a maximum $\Delta$-temporal matching $M$
can be found as follows:
\begin{enumerate}
	\item Find a maximum matching $R$ in the underlying graph $G$;
	\item Initialize $M = \emptyset$. For every edge $e$ in $R$ add in the final solution $M$ exactly one (arbitrary) 
	time edge $(e,t)$, where $t \in \lambda(e)$.
	\item Output $M$.
\end{enumerate}
The time complexity of the above procedure is dominated by the time required to construct the underlying graph $G$ 
and the time needed to find a maximum matching in $G$. 
The former can be done in $O(Tm) = O(\Delta m)$ time.
The latter can be solved in $O(\sqrt{n}m)$ time~\cite{micali1980v}.
Thus, we have the following.

\begin{observation}\label{obs:DeltaEqT}

	Let $\G$ be a temporal graph of lifetime $T$.
	We can compute a~$T$-temporal matching in $\G$ in $O(m (\sqrt{n} + T))$ time.
\end{observation}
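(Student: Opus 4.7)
The plan is to show that in the extreme regime $\Delta=T$ the temporal structure collapses to the static one, so that a maximum $T$-temporal matching is essentially just a maximum matching of the underlying graph $G$ together with an arbitrary choice of time label per selected edge. The key combinatorial step is to observe that for any two distinct time edges $(e,t),(e',t')$ with $t,t'\in[T]$ we have $|t-t'|\leq T-1<T=\Delta$, so $\Delta$-independence forces $e\cap e'=\emptyset$. Consequently, if $M$ is any $T$-temporal matching, then the underlying edges $\{e:(e,t)\in M\}$ form a matching in $G$ (and in particular $M$ contains at most one time edge per underlying edge). Conversely, given any matching $R$ of $G$, choosing for each $e\in R$ an arbitrary label $t\in\lambda(e)$ yields a $T$-temporal matching of size $|R|$. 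Thus $\MaxTempMatchingSize_T(\G)=\nu(G)$, and any maximum matching of $G$ can be lifted to a maximum $T$-temporal matching.

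From here the running-time bound is immediate. The algorithm I would write is: (i)~construct the underlying graph $G=(V,E)$ by a single pass over the input label lists, which takes $O(n+mT)$ time since $|\G|\in O(n+mT)$; (ii)~compute a maximum matching $R$ of $G$ with the Micali--Vazirani algorithm in $O(\sqrt{n}\,m)$ time; (iii)~for each $e\in R$, output the time edge $(e,t)$ where $t$ is, say, the first entry of $\lambda(e)$, which costs $O(|R|)=O(n)$. Summing the three contributions gives a total of $O(mT+\sqrt{n}\,m+n)=O(m(\sqrt{n}+T))$, matching the claimed bound.

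There is essentially no obstacle: the only point requiring care is the strict inequality $|t-t'|\leq T-1<T=\Delta$ that justifies collapsing the temporal problem to the static one, and the correctness of the lifting step in the opposite direction. Everything else is a direct application of a standard maximum-matching algorithm on $G$ and bookkeeping on the label lists.
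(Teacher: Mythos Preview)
Your proposal is correct and follows essentially the same approach as the paper: reduce to a static maximum matching on the underlying graph, then lift each selected edge to an arbitrary time edge. You are somewhat more explicit than the paper in justifying the reduction via the inequality $|t-t'|\le T-1<\Delta$, but the algorithm and the running-time accounting (building $G$ in $O(mT)$, Micali--Vazirani in $O(\sqrt{n}\,m)$) are identical.
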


Furthermore, it is easy to observe that computational hardness of \textsc{Temporal Matching} for some fixed value of $\Delta$ implies hardness for all larger values of $\Delta$.
This allows us to construct hardness reductions for small fixed values of $\Delta$ and still obtain general hardness results.

\begin{observation}%
\label{obs:D1toD}
	For every fixed $\Delta$, the problem \textsc{Temporal Matching} on instances
	$(\G, \Delta+1, k)$ is computationally at least as hard as \textsc{Temporal
	Matching} on instances $(\G, \Delta, k)$.
\end{observation}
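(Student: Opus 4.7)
The plan is to give a polynomial-time many-one reduction from \textsc{Temporal Matching} with parameter $\Delta$ to \textsc{Temporal Matching} with parameter $\Delta+1$, leaving the solution size $k$ unchanged. Given an instance $(\G, \Delta, k)$ with $\G = (G, \lambda)$, I would construct $\G' = (G, \lambda')$ on the same underlying graph by relabeling each time label via the strictly increasing injection
\[
\sigma(t) := t + \left\lfloor \frac{t-1}{\Delta} \right\rfloor,
\]
setting $\lambda'(e) := \{\sigma(t) : t \in \lambda(e)\}$ for every $e \in E(G)$. Since $\sigma$ is injective, $|\lambda'(e)| = |\lambda(e)|$, so the set of time edges of $\G'$ is in natural bijection with that of $\G$.

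The central technical step is to verify the equivalence $|t_1 - t_2| \geq \Delta \iff |\sigma(t_1) - \sigma(t_2)| \geq \Delta + 1$. Assuming $t_1 < t_2$ and writing $t_1 - 1 = q\Delta + r$ with $0 \leq r < \Delta$ and $d := t_2 - t_1$, one computes $\sigma(t_2) - \sigma(t_1) = d + \lfloor (r+d)/\Delta \rfloor$. If $d \leq \Delta - 1$, then $r+d \leq 2\Delta - 2$, so the floor term is in $\{0,1\}$ and $\sigma(t_2) - \sigma(t_1) \leq \Delta$; if $d \geq \Delta$, then $r + d \geq \Delta$, so the floor term is at least $1$ and $\sigma(t_2) - \sigma(t_1) \geq \Delta + 1$. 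This is precisely the desired equivalence.

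From the equivalence, two time edges $(e, t_1), (e', t_2)$ of $\G$ with $e \cap e' \neq \emptyset$ are $\Delta$-independent if and only if their images $(e, \sigma(t_1)), (e', \sigma(t_2))$ in $\G'$ are $(\Delta+1)$-independent; when $e \cap e' = \emptyset$ the two time edges are independent at both parameters. Consequently, $M \mapsto \{(e, \sigma(t)) : (e, t) \in M\}$ is a size-preserving bijection between $\Delta$-temporal matchings of $\G$ and $(\Delta+1)$-temporal matchings of $\G'$, so $(\G, \Delta, k)$ is a yes-instance of \textsc{Temporal Matching} iff $(\G', \Delta+1, k)$ is. The construction takes time linear in $|\G|$.

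The only real obstacle is checking the difference identity for $\sigma$; everything else is bookkeeping. I would prefer this floor-padded shift over the alternative of a uniform dilation $t \mapsto ct$, because a uniform scaling either fails the forward direction (when $c$ is too small) or the backward direction (when $c \geq 2$ and $\Delta \geq 3$), whereas $\sigma$ inserts exactly one ``gap'' slot after every $\Delta$ consecutive slots and therefore works for every $\Delta \geq 1$.
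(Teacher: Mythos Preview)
Your proof is correct and is essentially the same construction as the paper's: the paper inserts one edgeless snapshot after every block of $\Delta$ consecutive snapshots, which is exactly the relabeling $\sigma(t) = t + \lfloor (t-1)/\Delta \rfloor$ you wrote down. Your version simply makes the arithmetic verification of the key equivalence $|t_1 - t_2| \geq \Delta \iff |\sigma(t_1) - \sigma(t_2)| \geq \Delta+1$ explicit, whereas the paper leaves it as an immediate observation.
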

\begin{proof}
	The result immediately follows from  the observation that a temporal graph~$\G$ has a $\Delta$-temporal matching
	of size at least $k$ if and only if the temporal graph $\G'$ has a $(\Delta+1)$-temporal matching of size at least $k$, where
	$\G'$ is obtained from $\G$ by inserting one edgeless snapshot
	after every $\Delta$ consecutive snapshots (see~Figure~\ref{fig:Delta-increase}).
\begin{figure}[t]
	\centering
	\includegraphics[width=\linewidth]{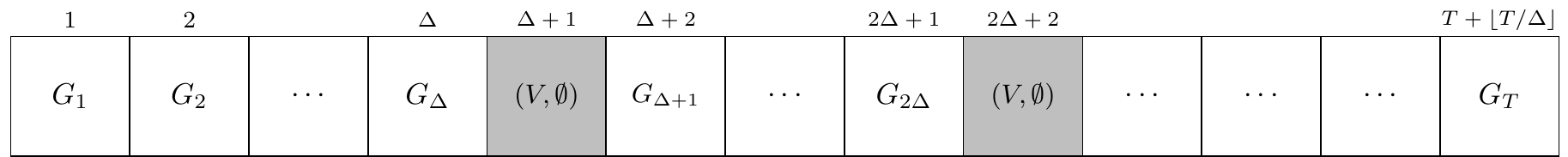}  
	\caption{Inserting ``empty'' snapshots to reduce \textsc{Temporal Matching} on instances $(\G, \Delta, k)$ to 
	\textsc{Temporal Matching} on instances $(\G, \Delta+1, k)$.}
	\label{fig:Delta-increase}
\end{figure}
\end{proof}%

Lastly, it is easy to see that one can check in polynomial time whether a given
set of time edges is a $\Delta$-temporal matching. This implies that
\textsc{Temporal Matching} is contained in NP and in subsequent NP-completeness
statements we will only discuss the hardness part of the proof.

\subsection{Relation to \textsc{$\gamma$-Matching} by Baste et al.~\cite{baste2018temporal}}
\label{relation-to-baste}
We refer to the variant of temporal matching introduced by Baste~et~al.~\cite{baste2018temporal} as \textsc{$\gamma$-Matching}.
They defined $\gamma$-matchings in a very similar way. Their definition requires a time edge to be present~$\gamma$~consecutive time slots to be eligible for a temporal matching.
There is an easy reduction from their model to ours: For every sequence of~$\gamma$~consecutive time edges starting at time slot~$t$, we introduce \emph{only one} time edge at time slot~$t$, and set~$\Delta$ to~$\gamma$. 
This already implies that \textsc{Temporal Matching} is NP-complete~\cite[Theorem~1]{baste2018temporal} and that our algorithmic results also hold for \textsc{$\gamma$-Matching}.
We are not aware of an equally easy reduction in the reverse direction. 

In addition, it is easy to check that the algorithmic results of Baste~et~al.~\cite{baste2018temporal} also carry over to our model.
Hence, there is a 2-approximation algorithm for \textsc{Maximum Temporal Matching}%
~\cite[Corollary~1]{baste2018temporal} and \textsc{Temporal Matching} admits a polynomial kernel when parameterized by $k+\Delta$~\cite[Theorem~2]{baste2018temporal}. Some of our hardness results can also easily be transferred to \textsc{$\gamma$-Matching}.
Whenever this is the case, we will indicate this.%

\section{Hardness Results}\label{sec:hardness}
In this section we present our computational hardness results. In \cref{sec:apx}
we show that the optimization problem \textsc{Maximum Temporal Matching} is
APX-hard and in \cref{sec:pathhardness} we show that \textsc{Temporal
Matching} is NP-hard even if the underlying graph is a path.

\subsection{APX-completeness of \textsc{Maximum Temporal Matching}}
\label{sec:apx}

In this subsection, we look at \textsc{Maximum Temporal Matching} where we want to maximize the cardinality of the temporal matching.
We prove that \textsc{Maximum Temporal Matching} is APX-complete even if $\Delta=2$ and~$T=3$.
For this we provide an \emph{L-reduction}%
~\cite{ausiello2012complexity} %
from the APX-complete \textsc{Maximum Independent Set} problem on cubic graphs~\cite{alimonti2000some} to \textsc{Maximum Temporal Matching}. 
Together with the constant-factor approximation algorithm that we present in
Section \ref{sec:approxAlg}, this implies APX-completeness for \textsc{Maximum
Temporal Matching}. The reduction also implies NP-completeness of \textsc{Temporal Matching}. Formally, we show the following result.

\begin{theorem}%
	\label{th:apx}
	\textsc{Temporal Matching} is NP-complete and \textsc{Maximum Temporal Matching} is APX-complete even if
	$\Delta=2$, $T=3$, and every edge of the underlying graph appears only once.
	Furthermore, for any $\delta \geq \frac{664}{665}$, there is no polynomial-time $\delta$-approximation algorithm for
	{\sc Maximum Temporal Matching}, unless P $=$ NP.
	\end{theorem}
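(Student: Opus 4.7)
The plan is to establish the result by an L-reduction from \textsc{Maximum Independent Set on cubic graphs}, which is APX-complete~\cite{alimonti2000some}. An L-reduction consists of polynomial-time maps $(f,g)$ and constants $\alpha,\beta>0$ satisfying $\mathrm{OPT}(f(G))\le \alpha\cdot\mathrm{OPT}(G)$ and $|\mathrm{OPT}(G)-|g(M)||\le \beta\cdot|\mathrm{OPT}(f(G))-|M||$; this preserves APX-hardness and lets one transport an explicit inapproximability gap quantitatively. The concrete threshold $\tfrac{664}{665}$ in the statement will pop out at the end by composing the best known inapproximability ratio for cubic \textsc{Maximum Independent Set} with the specific values of $\alpha$ and $\beta$ that the construction achieves.

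For the construction $f$, given a cubic graph $G=(V,E)$ with $|V|=n$ and $|E|=3n/2$, I would build a temporal graph $\G$ with $\Delta=2$ and $T=3$. The key observation exploited by the gadgets is that, because two time edges are in conflict iff they share a vertex and their labels differ by at most one, labels $1$ and $3$ can meet at a common vertex without conflicting, while label $2$ blocks both of the other slots at its endpoints. Using this, I would place at each vertex $v\in V$ a vertex gadget whose internal $2$-labels force that at most one of three ``incident slots'' is used in any $2$-temporal matching, and for each edge $\{u,v\}\in E$ an edge gadget that joins the gadgets of $u$ and $v$ in a way that forbids simultaneous activation on both sides. The gadgets will be tuned so that each contributes a fixed \emph{default} number $c$ of time edges to any maximum $2$-temporal matching, plus one extra time edge precisely when the corresponding vertex of $G$ is ``chosen'', so that every independent set $S\subseteq V$ of $G$ lifts to a $2$-temporal matching of $\G$ of size $c\cdot m + |S|$.

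The correctness argument then splits into two lemmas. The forward direction is direct: given an independent set $S$, select the default time edges of each gadget and add the ``activation'' time edge at each $v\in S$, verifying that no pair of chosen time edges is in conflict. The backward direction requires an exchange argument: any $2$-temporal matching $M$ of $\G$ can be rearranged inside each gadget, without decreasing its size, into a canonical form whose set of ``activated'' vertices forms an independent set $S$ of $G$; the map $g$ returns this $S$. Combined with $\mathrm{OPT}_{\mathrm{IS}}(G)\ge n/4$ for cubic graphs, which absorbs the gadget overhead $c\cdot m$ into multiplicative constants, this yields an L-reduction with explicit small $\alpha,\beta$, and plugging in the inapproximability ratio for cubic \textsc{Maximum Independent Set} produces the $\tfrac{664}{665}$ threshold.

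The main obstacle is designing the gadgets so that they simultaneously (i) respect the very tight budget of only three time slots with each edge used once, (ii) admit a clean exchange lemma giving a constant $\beta$, and (iii) have additive per-gadget contributions so that the L-reduction parameters are explicit. Once this is in place, membership in NP is straightforward, since a $\Delta$-temporal matching of given size can be verified in polynomial time, and APX-completeness of \textsc{Maximum Temporal Matching} follows by combining the APX-hardness just shown with the constant-factor approximation of Section~\ref{sec:approxAlg}, which for $\Delta=2$ achieves ratio $\tfrac{2}{3}$ and in particular places the problem in APX.
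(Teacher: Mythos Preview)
Your high-level plan coincides with the paper's: an L-reduction from \textsc{Maximum Independent Set} on cubic graphs, exploiting that labels $1$ and $3$ are $2$-independent while label $2$ blocks both, together with the $\alpha(G)\ge n/4$ bound to absorb additive overhead, and then transporting the $94/95$ inapproximability of cubic IS to obtain the $664/665$ threshold. The framing via an exchange lemma and the derivation of APX-completeness from the approximation in Section~\ref{sec:approxAlg} are also exactly what the paper does.

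The genuine gap is the gadget itself, which you correctly flag as ``the main obstacle'' but do not resolve. Your sketch speaks of a per-vertex gadget whose ``internal $2$-labels force that at most one of three incident slots is used''; with only three time slots and single edge appearances this is hard to realize as stated, and the paper does \emph{not} build such a gadget. Instead, the construction is global: it first computes a proper $4$-edge-coloring of the cubic graph, splits each vertex $v$ into two copies $v_0,v_1$ joined by a time-$2$ edge, and introduces one vertex $w_e$ per edge $e$; the color of $e$ then decides both which copy $v_\alpha$ the edge $\{w_e,v_\alpha\}$ attaches to (parity of the color) and whether it appears at time $1$ or time $3$ (colors $\{1,2\}$ vs.\ $\{3,4\}$). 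This coloring trick is precisely what makes the three incidences at each vertex pairwise $2$-independent while keeping $T=3$ and single appearances, and it is the missing idea in your proposal. With it, one gets the exact identity $\mu_2=\alpha(G)+\tfrac{3n}{2}$, hence L-reduction constants $\beta=7$, $\gamma=1$, from which $7\delta-6\ge 94/95$ yields $\delta\ge 664/665$.
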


It is easy to check that the construction uses only three time steps and every edge appears in exactly one time step.
\begin{construction}\label{constr:lreduction}
Let $G = (V,E)$ be an $n$-vertex cubic graph. We construct in polynomial time a corresponding temporal graph $(H, \lambda)$
of lifetime three as follows. First, we find a proper 4-edge coloring $c: E \rightarrow \{1,2,3,4\}$ of $G$.
Such a coloring exists by Vizings's theorem and can be found in $O(|E|)$ time
\cite{schrijver1998bipartite}.
Now the underlying graph $H=(U,F)$ contains two vertices $v_0$ and $v_1$ for every vertex $v$ of $G$, 
and one vertex $w_e$ for every edge $e$ of $G$. 
The set~$F$ of the edges of $H$ contains $\{ v_0, v_1 \}$ for every $v \in V$, and for every edge $e=\{u,v\} \in E$ it
contains $\{w_e,u_{\alpha}\}, \{w_e,v_{\alpha}\}$, where $c(e) \equiv \alpha\ (\mathrm{mod}\ 2)$.
In the temporal graph $(H, \lambda)$ every edge of the underlying graph appears
in exactly one of the three time slots:
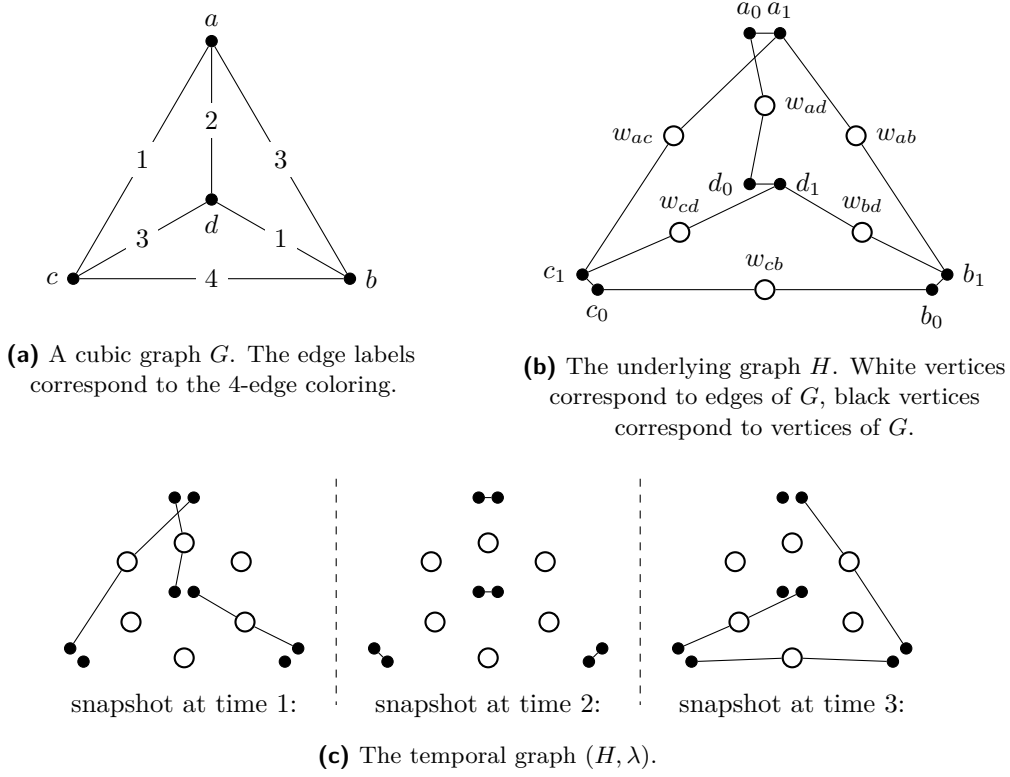
\begin{figure}[t]
	\centering
	\begin{subfigure}{.48\linewidth}\centering
			\begin{tikzpicture}[scale=0.7]
		\node[vertex,label=$a$] (a) at (0,3) {};
		\node[vertex,label=right:$b$] (b) at (2.6,-1.5) {};
		\node[vertex,label=left:$c$] (c) at (-2.6,-1.5) {};
		\node[vertex,label=below:$d$] (d) at (0,0) {};

		\coordinate [label=center:$ $] (G) at (0,-2.2);

		\draw (a) -- node[fill=white] {3} (b);
		\draw (a) -- node[fill=white] {1} (c);
		\draw (a) -- node[fill=white] {2} (d);
		\draw (b) -- node[fill=white] {4} (c);
		\draw (b) -- node[fill=white] {1} (d);
		\draw (c) -- node[fill=white] {3} (d);

	\end{tikzpicture}

    		\caption{\centering A cubic graph $G$. The edge labels correspond to the 4-edge coloring. \phantom{asdfasdfasdfasdfasdfasdf}}\label{fig:cubic}
    	\end{subfigure}
	~~~
    	\begin{subfigure}{.48\linewidth}\centering
				 	\begin{tikzpicture}[scale=0.8]
							\node[vertex,label=$a_0$] (a0) at (-0.25,3) {};
							\node[vertex,label=$a_1$] (a1) at (0.25,3) {};
		
							\node[vertex,label=below:$b_0$] (b0) at  (2.75,-1.25){};
							\node[vertex,label=right:$b_1$] (b1) at (3,-1) {};
		
							\node[vertex,label=below:$c_0$] (c0) at (-2.75,-1.25) {};
							\node[vertex,label=left:$c_1$] (c1) at (-3,-1) {};
		
							\node[vertex,label=left:$d_0$] (d0) at (-0.25,0.5) {};
							\node[vertex,label=right:$d_1$] (d1) at (0.25,0.5) {};

		\node[vertex2,label=$w_{cd}$] (wcd) at (-1.4,-0.3) {};
		\node[vertex2,label=$w_{bd}$] (wbd) at (1.6,-0.3) {};
		\node[vertex2,label=left:$w_{ac}$] (wac) at (-1.5,1.3) {};
		\node[vertex2,label=right:$w_{ab}$] (wab) at (1.5,1.3) {};
		\node[vertex2,label=right:$w_{ad}$] (wad) at (0,1.8) {};
		\node[vertex2,label=$w_{cb}$] (wcb) at (0,-1.25) {};
		
		\foreach \from/\to in {a0/a1,b0/b1,c0/c1,d0/d1,
						wac/a1,wac/c1,wad/a0,wad/d0,wab/a1,wab/b1,
						wbd/b1,wbd/d1,wcb/c0,wcb/b0,wcd/c1,wcd/d1}
		\draw (\from) -- (\to);

	\end{tikzpicture}
	
    		\caption{\centering The underlying graph $H$. White vertices correspond to edges of $G$, black vertices correspond to vertices of $G$.} %
  	 \end{subfigure}

   	\bigskip
   
   	\begin{subfigure}{.99\linewidth}\centering
				 	\begin{tikzpicture}[scale=0.5]
							\node at (0,-2.5) {snapshot at time $1$:};
							\node at (8,-2.5) {snapshot at time $2$:};
							\node at (16,-2.5) {snapshot at time $3$:};
							\draw[dashed] (4,3.5) to (4,-2.5);
							\draw[dashed] (12,3.5) to (12,-2.5);
							\begin{scope}
							\node[vertex] (a0) at (-0.25,3) {};
							\node[vertex] (a1) at (0.25,3) {};
		
							\node[vertex] (b0) at  (2.65,-1.35){};
							\node[vertex] (b1) at (3,-1) {};
		
							\node[vertex] (c0) at (-2.65,-1.35) {};
							\node[vertex] (c1) at (-3,-1) {};
		
							\node[vertex] (d0) at (-0.25,0.5) {};
							\node[vertex] (d1) at (0.25,0.5) {};

		\node[vertex2] (wcd) at (-1.4,-0.3) {};
		\node[vertex2] (wbd) at (1.6,-0.3) {};
		\node[vertex2] (wac) at (-1.5,1.3) {};
		\node[vertex2] (wab) at (1.5,1.3) {};
		\node[vertex2] (wad) at (0,1.8) {};
		\node[vertex2] (wcb) at (0,-1.25) {};
		
		\foreach \from/\to in {wac/a1,wac/c1,wad/a0,wad/d0,
						wbd/b1,wbd/d1}
		\draw (\from) -- (\to);

							\end{scope}
							\begin{scope}[xshift=8cm]
							\node[vertex] (a0) at (-0.25,3) {};
							\node[vertex] (a1) at (0.25,3) {};
		
							\node[vertex] (b0) at  (2.65,-1.35){};
							\node[vertex] (b1) at (3,-1) {};
		
							\node[vertex] (c0) at (-2.65,-1.35) {};
							\node[vertex] (c1) at (-3,-1) {};
		
							\node[vertex] (d0) at (-0.25,0.5) {};
							\node[vertex] (d1) at (0.25,0.5) {};

		\node[vertex2] (wcd) at (-1.4,-0.3) {};
		\node[vertex2] (wbd) at (1.6,-0.3) {};
		\node[vertex2] (wac) at (-1.5,1.3) {};
		\node[vertex2] (wab) at (1.5,1.3) {};
		\node[vertex2] (wad) at (0,1.8) {};
		\node[vertex2] (wcb) at (0,-1.25) {};
		
		\foreach \from/\to in {a0/a1,b0/b1,c0/c1,d0/d1}
		\draw (\from) -- (\to);

							\end{scope}
							\begin{scope}[xshift=16cm]
							\node[vertex] (a0) at (-0.25,3) {};
							\node[vertex] (a1) at (0.25,3) {};
		
							\node[vertex] (b0) at  (2.65,-1.35){};
							\node[vertex] (b1) at (3,-1) {};
		
							\node[vertex] (c0) at (-2.65,-1.35) {};
							\node[vertex] (c1) at (-3,-1) {};
		
							\node[vertex] (d0) at (-0.25,0.5) {};
							\node[vertex] (d1) at (0.25,0.5) {};

		\node[vertex2] (wcd) at (-1.4,-0.3) {};
		\node[vertex2] (wbd) at (1.6,-0.3) {};
		\node[vertex2] (wac) at (-1.5,1.3) {};
		\node[vertex2] (wab) at (1.5,1.3) {};
		\node[vertex2] (wad) at (0,1.8) {};
		\node[vertex2] (wcb) at (0,-1.25) {};

		\foreach \from/\to in {
						wab/a1,wab/b1,
						wcb/c0,wcb/b0,wcd/c1,wcd/d1}
		\draw (\from) -- (\to);

							\end{scope}
	\end{tikzpicture}

    		\caption{\centering  The temporal graph $(H,\lambda)$.}\label{fig:Temporal}
    	\end{subfigure}
    
    	\caption{Example of the reduction from \textsc{Maximum Independent Set} on cubic graphs to \textsc{Maximum Temporal Matching}.}
    	\label{fig:APXreduction}
\end{figure}
\begin{enumerate}
	\item $\lambda(\{w_e,u_{\alpha}\}) = \lambda(\{w_e,v_{\alpha}\}) = 1$, where $c(e) \equiv \alpha\ (\mathrm{mod}\ 2)$, 
	for every edge $e=\{u,v\} \in E$ such that $c(e) \in \{1,2\}$;
	\item $\lambda(\{v_0,v_1\}) = 2$ for every $v \in V$;
	\item $\lambda(\{w_e,u_{\alpha}\}) = \lambda(\{w_e,v_{\alpha}\}) = 3$, where $c(e) \equiv \alpha\ (\mathrm{mod}\ 2)$, 
	for every edge $e=\{u,v\} \in E$ such that $c(e) \in \{3,4\}$.
\end{enumerate}
\hfill$\lhd$
\end{construction}
\cref{constr:lreduction} is illustrated in \cref{fig:APXreduction}. %
We first show that if we find a 2-temporal matching in the constructed graph $(H,\lambda)$, then we can assume w.l.o.g.\ that if $\{u,v\}\in E$, then the temporal matching contains at most one of the time edges $(\{u_0,u_1\},2)$ and $(\{v_0,v_1\},2)$.
This will allow us to construct an independent set for the original graph $G$ from the temporal matching. 
\begin{lemma}%
\label{lem:canonicalMatching}
	Let $G=(V,E)$ be a cubic graph and let $(H,\lambda)$ be the temporal graph obtained by applying \cref{constr:lreduction} to $G$. Let $M$ be a 2-temporal matching of $(H,\lambda)$. 
	Then there exists a 2-temporal matching $M'$ of $(H,\lambda)$ such that $|M'| = |M|$, and for every edge $e=\{u,v\} \in E$
	the matching~$M'$ contains at most one of the time edges $(\{u_0,u_1\},2)$ and $(\{v_0,v_1\},2)$.
	Moreover, $M'$ can be constructed from $M$ in polynomial time.
\end{lemma}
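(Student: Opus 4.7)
My plan is to transform $M$ into $M'$ by iteratively eliminating ``bad'' edges, namely edges $e = \{u,v\} \in E(G)$ for which both $(\{u_0, u_1\}, 2)$ and $(\{v_0, v_1\}, 2)$ belong to the matching. For each such $e$ I would pick one endpoint, say $v$, and swap the time edge $(\{v_0, v_1\}, 2)$ for the time edge $(\{w_e, v_\alpha\}, t)$, where $\alpha$ and $t = \lambda(\{w_e, v_\alpha\}) \in \{1,3\}$ are the values prescribed by Construction~\ref{constr:lreduction}. The key observation driving the whole argument is that, within the three time slots of $(H,\lambda)$, only $t$ itself and the time slot $2$ lie within distance $\Delta - 1 = 1$ of $t$; the remaining slot in $\{1,3\}$ sits at distance exactly $\Delta = 2$ from $t$ and therefore cannot participate in any conflict.

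To verify that each swap yields another valid $2$-temporal matching, I would separately handle the two endpoints of the newly added time edge $(\{w_e, v_\alpha\}, t)$. At the vertex $w_e$, the only other incident time edge in $\G$ is $(\{w_e, u_\alpha\}, t)$, and this edge cannot belong to $M$, since it would conflict with $(\{u_0, u_1\}, 2) \in M$ at the vertex appearance $(u_\alpha, t)$ (recall $|t-2|=1$). At $v_\alpha$, the only incident time edge at time $2$ is $(\{v_0, v_1\}, 2)$, which the swap removes, whereas at time $t$ no other incident edge exists: by Construction~\ref{constr:lreduction} such an edge would have to be of the form $(\{w_{e'}, v_\alpha\}, t)$ for some edge $e' \ne e$ incident with $v$ having the same parity $\alpha$ and the same label $t$, which together force $c(e')=c(e)$, contradicting the properness of the colouring $c$.

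Finally, I would argue that iterating this local swap terminates in polynomial time. Each swap strictly decreases the number of time-$2$ edges in $M$, and since the swap never inserts a new time-$2$ edge, the set of bad edges can only shrink from one iteration to the next. Hence the procedure halts after at most $|E(G)|$ iterations, and the resulting $M'$ has $|M'|=|M|$ and contains no bad edges, as required. The main technical point, and the only place where one must be careful, is the case analysis showing that the inserted time edge is $2$-independent from the rest of $M$; but this reduces to the two routine checks at $w_e$ and at $v_\alpha$ sketched above, both of which follow directly from the properness of the edge colouring $c$ together with the fact that $(\{u_0,u_1\},2)$ remains in $M$ throughout the step.
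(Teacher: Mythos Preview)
Your proof is correct and follows essentially the same iterative swap strategy as the paper's proof: pick a bad edge $e=\{u,v\}$, replace one of the two offending time-$2$ edges by the corresponding edge through $w_e$, and argue that the count of bad edges strictly decreases. The only notable difference is that your verification at $v_\alpha$ is more elaborate than needed: since $(\{v_0,v_1\},2)$ belonged to $M$ and the lifetime is three with $\Delta=2$, no other time edge of $M$ can be incident to $v_\alpha$ at all, so the appeal to the properness of the edge colouring $c$ is unnecessary (though not wrong).
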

\begin{proof}
	We prove the first part of the lemma by induction on the number of edges $\{u',v'\} \in E$ such that $M$ contains both
	$(\{u_0',u_1'\},2)$ and $(\{v_0',v_1'\},2)$. Let us denote this number by $k$.
	For $k\le 1$ the statement is trivial. Let $k > 1$, and let $e=\{u,v\} \in E$ be an edge
	such that both $(\{u_0,u_1\},2)$ and~$(\{v_0,v_1\},2)$ are in $M$. Without loss of generality we assume that $c(e) = 1$. 
	Since the lifetime of $(H,\lambda)$ is three and $(\{u_0,u_1\},2) \in M$, no time edge in $M$ other than $(\{u_0,u_1\},2)$ 
	is incident with~$u_0$ or~$u_1$. Similarly, no time edge in $M$ besides $(\{v_0,v_1\},2)$ is incident with~$v_0$ or~$v_1$.
	In particular, $(\{w_e,u_1\},1), (\{w_e,v_1\},1) \notin M$. Hence, $M''$ obtained from~$M$ by replacing $(\{u_0,u_1\},2)$ 
	with $(\{w_e,u_1\},1)$
	is a 2-temporal matching of $(H,\lambda)$ with~$|M''| = |M|$, and the number of edges $\{u',v'\} \in E$ such that
	$M''$ contains both $(\{u_0',u_1'\},2)$ and $(\{v_0',v_1'\},2)$ is~$k-1$. Hence, by the induction hypothesis,
	there exists a desired 2-temporal matching $M'$.

	Clearly, the above arguments can be turned into a polynomial-time algorithm
	that transforms $M$ into~$M'$ by iteratively finding edges $\{u',v'\} \in E$ such that both $(\{u_0',u_1'\},2)$ and $(\{v_0',v_1'\},2)$
	are in the current temporal matching and replacing one of the time edges by an appropriate incident time edge.
\end{proof}%

Next, we formally show how to obtain an independent set of $G$ from a 2-temporal matching of the constructed graph $(H,\lambda)$.
\begin{lemma}%
\label{lem:fromMatchingToIS}
	Let $G=(V,E)$ be a cubic graph and let $(H,\lambda)$ be the temporal graph obtained by applying \cref{constr:lreduction} to $G$. Let $M$ be a 2-temporal matching of $(H,\lambda)$. Then $G$ contains an independent set~$S$ of size
	at least $|M| - \frac{3n}{2}$. Moreover, $S$ can be computed from $M$ in polynomial time.
\end{lemma}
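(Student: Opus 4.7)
The plan is to apply Lemma~\ref{lem:canonicalMatching} to put $M$ in a canonical form, then to read off the independent set $S$ directly from the ``vertex'' time edges at time~$2$, and finally to bound $|S|$ by a counting argument that exploits the cubic structure of $G$.

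First, I would invoke Lemma~\ref{lem:canonicalMatching} to replace $M$ in polynomial time by a $2$-temporal matching $M'$ with $|M'|=|M|$ such that, for every edge $\{u,v\}\in E$, at most one of the two time edges $(\{u_0,u_1\},2)$ and $(\{v_0,v_1\},2)$ belongs to $M'$. I then set $S:=\{v\in V : (\{v_0,v_1\},2)\in M'\}$. The independence of $S$ in $G$ is then immediate: if some edge $\{u,v\}\in E$ had both endpoints in $S$, then both $(\{u_0,u_1\},2)$ and $(\{v_0,v_1\},2)$ would lie in $M'$, contradicting the property guaranteed by Lemma~\ref{lem:canonicalMatching}.

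To bound $|S|$ from below, I would partition $M'$ according to the time labels of its time edges into $M'_1$, $M'_2$, $M'_3$. By the very definition of $S$ we have $|M'_2|=|S|$. For every edge $e=\{u,v\}\in E$ with $c(e)\in\{1,2\}$, the two corresponding time edges $(\{w_e,u_\alpha\},1)$ and $(\{w_e,v_\alpha\},1)$ share the vertex $w_e$ and hence conflict, so at most one of them lies in $M'_1$; symmetrically at most one of the two time edges associated with $e$ lies in $M'_3$ when $c(e)\in\{3,4\}$. Summing over all edges of $G$ I get $|M'_1|+|M'_3|\leq |E|=\frac{3n}{2}$, where the last equality uses that $G$ is cubic. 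Therefore $|M|=|M'|=|M'_1|+|M'_2|+|M'_3|\leq |S|+\frac{3n}{2}$, which yields $|S|\geq |M|-\frac{3n}{2}$ as required.

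The nontrivial content of the argument is packaged into Lemma~\ref{lem:canonicalMatching}; once $M$ is in canonical form, the remaining proof is straightforward bookkeeping. I do not anticipate any significant technical obstacle, and every step is manifestly polynomial-time, so the same chain of operations produces $S$ from $M$ in polynomial time.
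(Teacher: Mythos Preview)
Your proposal is correct and follows essentially the same approach as the paper: apply Lemma~\ref{lem:canonicalMatching}, define $S$ from the time-$2$ edges, and bound the remaining time edges by observing that each $w_e$ can be incident to at most one matching edge, yielding $|M|\le |S|+|E|=|S|+\tfrac{3n}{2}$. Your explicit partition into $M'_1,M'_2,M'_3$ is just a slightly more detailed presentation of the same counting argument.
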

\begin{proof}
	First, by Lemma \ref{lem:canonicalMatching}, we can assume that for every $\{u,v\} \in E$
	the temporal matching~$M$ contains at most one of the time edges $(\{u_0,u_1\},2)$ and $(\{v_0,v_1\},2)$.
	Now we compute in polynomial time $S := \{ v~|~(\{v_0,v_1\},2) \in M \}$. 
	The above assumption implies that $S$ is an independent set.
	
	Furthermore, notice that for every edge $e \in E$ the underlying graph $H$ contains exactly 
	two edges incident with $w_e$ and both of them appear in the same time slot. 
	Hence $M$ can contain at most one time edge incident with $w_e$, and therefore 
	$|M| \leq |S| + |E| = |S| + \frac{3n}{2}$, which completes the proof.
\end{proof}%

Now we investigate how the size of a temporal matching in the constructed graph relates to the size the corresponding independent set in the original graph.
\begin{lemma}%
\label{lem:nuAlpha}
	Let $G=(V,E)$ be a cubic graph and let $(H,\lambda)$ be the temporal graph obtained by applying \cref{constr:lreduction} to $G$. 
	Let $\MaxTempMatchingSize_2$ be the size of a maximum 2-temporal matching in $(H,\lambda)$, and let $\alpha$ be 
	the size of a maximum independent set in~$G$. Then $\MaxTempMatchingSize_2 = \alpha + \frac{3n}{2}$.
\end{lemma}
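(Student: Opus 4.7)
The plan is to prove the equality by establishing the two inequalities $\MaxTempMatchingSize_2 \leq \alpha + \frac{3n}{2}$ and $\MaxTempMatchingSize_2 \geq \alpha + \frac{3n}{2}$ separately. The first direction is essentially a restatement of \cref{lem:fromMatchingToIS}: applying it to a maximum 2-temporal matching $M$ with $|M| = \MaxTempMatchingSize_2$ produces an independent set $S$ of $G$ with $|S| \geq \MaxTempMatchingSize_2 - \tfrac{3n}{2}$, and since $|S| \leq \alpha$ we get $\MaxTempMatchingSize_2 \leq \alpha + \tfrac{3n}{2}$.

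For the reverse inequality I would construct explicitly a 2-temporal matching of size $\alpha + \tfrac{3n}{2}$ from a maximum independent set $S \subseteq V$ of $G$. The construction has two types of time edges. First, for each $v \in S$ include $(\{v_0,v_1\},2)$. Second, for every edge $e=\{u,v\}\in E$, observe that since $S$ is independent at least one endpoint, say $z \in \{u,v\}$, lies outside $S$; then include the time edge $(\{w_e, z_\alpha\}, t)$ where $\alpha \equiv c(e) \pmod 2$ and $t = 1$ if $c(e)\in\{1,2\}$, otherwise $t=3$. This yields $|S| + |E| = \alpha + \tfrac{3n}{2}$ time edges (using that $G$ is cubic, so $|E|=\tfrac{3n}{2}$).

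The main obstacle is checking 2-independence of the constructed set, and this is where the proper 4-edge coloring is crucial. I would case-split on the possible conflicts. Time edges of the first type pairwise involve distinct vertices of $H$, so they cannot conflict. A first-type time edge $(\{v_0,v_1\},2)$ can only conflict with a second-type time edge incident with $v_0$ or $v_1$; but by construction we chose the endpoint of $e$ outside $S$, so no second-type time edge uses a copy of any $v \in S$. For two second-type time edges at times $1$ and $3$ respectively, $|1-3| = 2 \geq \Delta$, so they are 2-independent regardless of shared vertices. Two second-type time edges at the same time slot can only conflict if they share a vertex $u_\alpha$ of $H$ (the $w_e$-vertices are pairwise distinct). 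If they share $u_\alpha$, they come from two distinct edges $e_1, e_2 \in E$ both incident to $u$ in $G$, both with colors in the same pair $\{1,2\}$ (if $t=1$) or $\{3,4\}$ (if $t=3$), and with the same parity $\alpha$; but then $c(e_1) = c(e_2)$, contradicting properness of the 4-edge coloring.

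Combining these two bounds yields $\MaxTempMatchingSize_2 = \alpha + \tfrac{3n}{2}$ as claimed. The case analysis is routine once the construction is made precise; the only subtlety worth highlighting is how the parity encoding $\alpha \equiv c(e)\pmod 2$ together with the four-coloring conspires to spread the $w_e$-time-edges at any fixed vertex $u$ across four distinct (copy, time)-pairs $(u_1,1), (u_0,1), (u_1,3), (u_0,3)$.
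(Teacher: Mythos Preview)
Your proposal is correct and follows essentially the same approach as the paper: the upper bound via \cref{lem:fromMatchingToIS}, and the lower bound via the explicit construction that matches each $v\in S$ by $(\{v_0,v_1\},2)$ and each edge $e$ by the $w_e$-time-edge through a copy of an endpoint outside $S$. Your case analysis for 2-independence is the same as the paper's, only phrased as the contrapositive in the final case (you argue same time slot and same parity force equal colors; the paper argues same parity and distinct colors force different time slots).
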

\begin{proof}
	We start by proving $\MaxTempMatchingSize_2 \leq \alpha + \frac{3n}{2}$. Let $M$ be a maximum 2-temporal matching of~$(H,\lambda)$.
	By Lemma \ref{lem:fromMatchingToIS} there exists an independent set $S$ in~$G$ of size at least $|S| \geq |M| - \frac{3n}{2}$.
	Hence we have $\MaxTempMatchingSize_2 = |M| \leq |S| + \frac{3n}{2} \leq \alpha + \frac{3n}{2}$.
	
	To prove the converse inequality, we consider a maximum independent set $S$ in $G$, and show how to 
	construct a 2-temporal matching~$M$ of~$(H,\lambda)$ of size at least $|S| + \frac{3n}{2}$.
	First, for every $v \in S$ we include $(\{v_0,v_1\}, 2)$ in $M$. 
	Second, for every edge $e = \{u,v\} \in E$ we add one more time edge to $M$ as
	follows.
	Since~$S$ is independent, at least one of $u$ and $v$ is not in~$S$, say~$u$. 
	Then we add to $M$
	\begin{enumerate}
		\item $(\{w_eu_1\},1)$ if $c(e) = 1$,
		\item $(\{w_eu_0\},1)$ if $c(e) = 2$,
		\item $(\{w_eu_1\},3)$ if $c(e) = 3$, and
		\item $(\{w_eu_0\},3)$ if $c(e) = 4$.
	\end{enumerate}
	By construction we have $|M| = |S| + \frac{3n}{2}$. Now we show that $M$ is a 2-temporal matching.
	For any two distinct vertices $u$ and $v$ in $S$ the edges $\{u_0,u_1\}$ and $\{v_0,v_1\}$ are not adjacent in~$H$,
	therefore the time edges $(\{u_0,u_1\},2)$ and $(\{v_0,v_1\},2)$ are not in conflict.
	Furthermore, for any pair of adjacent edges $\{w_e,u_{\alpha}\}, \{u_0,u_1\}$ in $H$ the corresponding time edges 
	are not in conflict in $M$, as, by construction, at most one of them is in $M$. 
	For the same reason, for every edge $e=\{u,v\} \in E$ the time edges corresponding to $\{w_e,u_{\alpha}\}$ and 
	$\{w_e,v_{\alpha}\}$, where $c(e) \equiv \alpha\ (\mathrm{mod}\ 2)$, are not in conflict in $M$.
	It remains to show that the time edges $(\{w_e,u_\alpha\},i)$ and $(\{w_{e'},u_\alpha\},j)$ corresponding to the adjacent edges 
	$\{w_e,u_\alpha\}$ and $\{w_{e'},u_\alpha \}$ in $H$ are not in conflict in~$M$.
	Suppose to the contrary that the time edges are in conflict. Then both of them are in $M$ and $|i-j| \leq 1$. Since
	by definition~$i,j \in \{1,3\}$, we conclude that~$i=j$, i.e.,\ the time edges
	appear in the same time slot.
	Notice that~$e$ and~$e'$ share vertex~$u$, and hence~$c(e) \neq c(e')$. Hence, since 
	$c(e) \equiv \alpha\ (\mathrm{mod}\ 2)$ and~$c(e') \equiv \alpha\ (\mathrm{mod}\ 2)$, we conclude that either
	$\{ c(e), c(e') \} = \{ 1, 3 \}$, or $\{ c(e), c(e') \} = \{ 2, 4 \}$, but, by construction, this contradicts the assumption that $i = j$.
	This completes the proof that $M$ is a 2-temporal matching, and therefore
	we have $\MaxTempMatchingSize_2 \geq |M| = |S| + \frac{3n}{2} = \alpha+ \frac{3n}{2}$.
\end{proof}%

Lastly, we formally show that \cref{constr:lreduction} together with the prodecure described in \cref{lem:fromMatchingToIS} to obtain an independent set from a temporal matching is actually an L-reduction.
\begin{lemma}\label{lem:lred}
	\cref{constr:lreduction} together with the procedure described by \cref{lem:fromMatchingToIS} constitute an L-reduction.
\end{lemma}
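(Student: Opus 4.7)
The plan is to verify directly the two defining conditions of an L-reduction, using Construction~\ref{constr:lreduction} as the instance-mapping function $f$ and the procedure of Lemma~\ref{lem:fromMatchingToIS} as the solution-mapping function $g$. Both are already observed (resp.\ proved) to be polynomial-time computable, so the only substantive work is the numerical comparison of optima and solution values.

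First I would pin down the constant needed for the inequality $\MaxTempMatchingSize_2(f(G)) \leq c_1 \cdot \alpha(G)$. By Lemma~\ref{lem:nuAlpha}, the temporal graph $(H,\lambda) = f(G)$ satisfies $\MaxTempMatchingSize_2 = \alpha + \tfrac{3n}{2}$, where $\alpha = \alpha(G)$ is the independence number of the cubic graph $G$ on $n$ vertices. Since $G$ has maximum degree $3$, a standard greedy argument gives $\alpha \geq n/4$, hence $\tfrac{3n}{2} \leq 6\alpha$, and therefore
\[
\MaxTempMatchingSize_2(f(G)) \;=\; \alpha + \tfrac{3n}{2} \;\leq\; 7\alpha(G).
\]
So condition~(i) of an L-reduction holds with $c_1 = 7$.

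Next I would handle the solution-quality condition. Let $M$ be any 2-temporal matching of $(H,\lambda)$ and let $S = g(G,M)$ be the independent set obtained through Lemma~\ref{lem:fromMatchingToIS}, which guarantees $|S| \geq |M| - \tfrac{3n}{2}$. Combining this with $\MaxTempMatchingSize_2 = \alpha + \tfrac{3n}{2}$ from Lemma~\ref{lem:nuAlpha}, I obtain
\[
\alpha(G) - |S| \;\leq\; \alpha - \bigl(|M| - \tfrac{3n}{2}\bigr) \;=\; \bigl(\alpha + \tfrac{3n}{2}\bigr) - |M| \;=\; \MaxTempMatchingSize_2(f(G)) - |M|,
\]
so condition~(ii) holds with $c_2 = 1$. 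Since $S$ is a feasible independent set of $G$ (this was proved inside Lemma~\ref{lem:fromMatchingToIS}) and both $f$ and $g$ are computable in polynomial time, the two functions together with $(c_1,c_2) = (7,1)$ form an L-reduction from \textsc{Maximum Independent Set} on cubic graphs to \textsc{Maximum Temporal Matching}.

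There is no serious obstacle here: all the structural work was already done in Lemmas~\ref{lem:canonicalMatching}, \ref{lem:fromMatchingToIS}, and \ref{lem:nuAlpha}. The only point one must be careful about is the constant in~(i), which requires a lower bound on $\alpha(G)$ in terms of $n$; using $\alpha(G) \geq n/(\Delta(G){+}1) = n/4$ for cubic graphs is the cleanest way. Everything else is just chaining the two lemmas together.
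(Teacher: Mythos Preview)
Your proposal is correct and follows essentially the same argument as the paper: both establish the L-reduction with constants $(\beta,\gamma)=(7,1)$ by invoking Lemma~\ref{lem:nuAlpha} together with the bound $\alpha(G)\ge n/4$ for cubic graphs to get the first inequality, and by chaining Lemma~\ref{lem:nuAlpha} with Lemma~\ref{lem:fromMatchingToIS} to get the second. There is no substantive difference between your plan and the paper's proof.
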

\begin{proof}
	Recall the definition of an L-reduction~\cite{ausiello2012complexity}.
	Let $A$ and $B$ be two maximization problems and let $s_A$ and $s_B$ be their respective cost functions.
	By definition, a pair of functions $f$ and~$g$ is an L-reduction if all of the following conditions are met:
	\begin{enumerate}
		\item[(1)] functions $f$ and $g$ are computable in polynomial time;
		\item[(2)] if $I$ is an instance  of problem $A$, then $f(I)$ is an instance of problem $B$;
		\item[(3)] if $M$ is a feasible solution to $f(I)$, then $g(M)$ is a feasible solution to $I$;
		\item[(4)] there exists a positive constant $\beta$ such that $OPT_B(f(I)) \leq \beta \cdot OPT_A(I)$; and
		\item[(5)] there exists a positive constant $\gamma$ such that, for every feasible solution $M$ to~$f(I)$, it holds that 
		$OPT_A(I) - c_A(g(M)) \leq \gamma \cdot (OPT_B(f(I)) - c_B(M))$.
	\end{enumerate}
	
	In our case \textsc{Maximum Independent Set} in cubic graphs corresponds to problem~$A$ and
	\textsc{Maximum Temporal Matching} corresponds to problem~$B$. The reduction mapping a cubic graph $G$ to
	a temporal graph $(H,\lambda)$ described in \cref{constr:lreduction} corresponds to function~$f$. Clearly,
	the reduction is computable in polynomial time. The polynomial-time procedure guaranteed by 
	\cref{lem:fromMatchingToIS} corresponds to function $g$. It remains to show that conditions (4) and (5)
	in the definition of an L-reduction are met.
	
	By \cref{lem:nuAlpha} we know that 
	$\MaxTempMatchingSize_2(H,\lambda) = \alpha(G) + \frac{3n}{2} = \alpha(G) + \frac{6n}{4} \leq 7\alpha(G)$, where the latter inequality 
	follows from the fact that the independence number of an $n$-vertex cubic graph is at least $\frac{n}{4}$. %
	Hence, condition~(4) holds with parameter $\beta = 7$.
	
	Let now $M$ be a 2-temporal matching of $(H,\lambda)$, and let $S$ be an independent set in $G$ guaranteed by
	\cref{lem:fromMatchingToIS}, then
	$$
		\alpha(G) - |S| = \MaxTempMatchingSize_2(H,\lambda) - \frac{3n}{2} - |S| \leq \MaxTempMatchingSize_2(H,\lambda) - \frac{3n}{2} - |M| + \frac{3n}{2} = 
		\MaxTempMatchingSize_2(H,\lambda) - |M|,
	$$
	where the first equality follows from \cref{lem:nuAlpha} and the inequality follows from \cref{lem:fromMatchingToIS}.
	Thus, condition~(5) holds with parameter $\gamma = 1$.
\end{proof}
We now show that our reduction together with a polynomial-time $\delta$-approximation algorithm 
$\mathcal{A}$ for {\sc Maximum Temporal Matching}, where $\delta \geq \frac{664}{665}$, imply a polynomial-time 
$\frac{94}{95}$-approximation algorithm for {\sc Maximum Independent Set} in cubic graphs.
The result will then follow from the fact that it is NP-hard to approximate {\sc Maximum Independent Set} in cubic graphs 
to within factor of $\frac{94}{95}$ \cite{chlebik2006complexity}. 

Let $G$ be a cubic graph and $(H,\lambda)$ be the corresponding temporal graph from the reduction. 
Let also $M$ be a $2$-temporal matching found by algorithm $\mathcal{A}$, 
and let $S$ be the independent set in $G$ corresponding to $M$.
Since $\mathcal{A}$ is a $\delta$-approximation algorithm, we have $\frac{|M|}{\mu_2(H,\lambda)} \geq \delta$.
Furthermore, by \cref{lem:lred}, our reduction is an L-reduction with parameters $\beta = 7$ and $\gamma = 1$, that is, 
$\mu_2(H,\lambda) \leq 7 \alpha(G)$ and $\alpha(G) - |S| \leq \mu_2(H,\lambda) - |M|$.
Hence, we have
$$
	\alpha(G) - |S| \leq \mu_2(H,\lambda) - |M| \leq \mu_2(H,\lambda) \cdot (1-\delta) \leq 7 \alpha(G) \cdot (1-\delta),
$$
which together with $\delta \geq \frac{664}{665}$ imply $\frac{|S|}{\alpha(G)} \geq 7\delta - 6 \geq \frac{94}{95}$, as required.%

The Exponential Time Hypothesis (ETH) implies (together with the Sparsification Lemma) that there is no $2^{o(\text{\#variables }+\text{ \#clauses})}$-time algorithm for \textsc{3SAT}~\cite{ImpagliazzoP01,ImpagliazzoPZ01}. 
For \textsc{Temporal Matching} we can observe the following.
\begin{observation}
\textsc{Temporal Matching} does not admit a $2^{o(k)}\cdot |\mathcal{G}|^{f(T)}$-time algorithm for any function $f$, unless the Exponential Time Hypothesis fails.
\end{observation}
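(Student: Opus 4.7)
The plan is to invoke the L-reduction from \cref{constr:lreduction} and combine it with the well-known ETH-based lower bound for \textsc{Independent Set} on cubic graphs. The key properties we need from the construction are: (i) it is linear in size, (ii) it produces instances of constant lifetime $T=3$, and (iii) the parameter $k$ in the produced instance scales linearly with the input size.

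First, I would recall the ETH-consequence that \textsc{Independent Set} on cubic graphs does not admit a $2^{o(n)}$-time algorithm, where $n$ is the number of vertices. This follows from the standard Sparsification-Lemma-based argument: ETH rules out $2^{o(n+m)}$ algorithms for \textsc{3SAT}, and textbook reductions from \textsc{3SAT} to \textsc{Independent Set} on degree-3 graphs introduce only a linear blow-up in the number of vertices.

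Second, I would analyze the parameters produced by \cref{constr:lreduction}. Given a cubic $n$-vertex graph $G$, the constructed temporal graph $(H,\lambda)$ has $2n+\frac{3n}{2}=O(n)$ vertices, lifetime $T=3$, and total size $|\mathcal{G}|=O(n)$. By \cref{lem:nuAlpha}, the maximum $2$-temporal matching size is $\mu_2(H,\lambda)=\alpha(G)+\frac{3n}{2}$, and since $\alpha(G)\le n$ we have $\mu_2(H,\lambda)\le \frac{5n}{2}$. More importantly, to decide whether $\alpha(G)\ge s$ for any $s\le n$, we query whether $(H,\lambda)$ admits a $2$-temporal matching of size $k:=s+\frac{3n}{2}$, and this parameter satisfies $k=O(n)$.

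Third, I would derive the contradiction. Suppose for the sake of contradiction that \textsc{Temporal Matching} admits a $2^{o(k)}\cdot|\mathcal{G}|^{f(T)}$-time algorithm for some function $f$. Applied to the instances $((H,\lambda),\Delta=2,k)$ arising from the reduction, where $T=3$ is constant and $|\mathcal{G}|=O(n)$, the running time becomes $2^{o(k)}\cdot n^{f(3)}=2^{o(n)}\cdot\operatorname{poly}(n)$. This yields a $2^{o(n)}$-time algorithm for \textsc{Independent Set} on cubic graphs, contradicting ETH.

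The only step that requires any care is ensuring that $k$ remains linear in $n$ (so that $2^{o(k)}$ translates into $2^{o(n)}$) rather than, say, $n\log n$; but this is immediate from the bound $k\le \frac{5n}{2}$. The rest is a direct parameter-preservation argument through the L-reduction of \cref{sec:apx}.
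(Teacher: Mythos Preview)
Your proposal is correct and follows essentially the same argument as the paper: both invoke the ETH lower bound for \textsc{Independent Set} on cubic graphs, use \cref{constr:lreduction} as a many-one reduction with constant lifetime~$T=3$, observe that the target parameter $k=s+\frac{3n}{2}$ is linear in~$n$, and conclude that a $2^{o(k)}\cdot|\mathcal G|^{f(T)}$ algorithm would yield a $2^{o(n)}$ algorithm for \textsc{Independent Set} on cubic graphs. Your write-up is in fact slightly more explicit about the size bounds than the paper's own justification.
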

When investigating the original reduction from \textsc{3SAT} to \textsc{Vertex Cover} on cubic graphs~\cite{garey1974some}, it is easy to verify that the size of the constructed instance is linear in the size of the 3SAT formula. 
Hence, it follows that there is no $2^{o(|V|)}$-time algorithm for \textsc{Vertex Cover} on cubic graphs unless the ETH fails. 
It follows that there is no $2^{o(|V|)}$-time algorithm for \textsc{Independent Set} on cubic graphs unless the ETH fails. 
If we treat the reduction presented in \cref{constr:lreduction} as a polynomial-time many-one reduction, then we set the solution size for the \textsc{Temporal Matching} instance to the solution size of the \textsc{Independent Set} instance plus $3/2$ times the number of vertices in the \textsc{Independent Set} instance (see \cref{lem:fromMatchingToIS} and \cref{lem:nuAlpha}). 
It follows that the existence of a $2^{o(k)}\cdot |\mathcal{G}|^{f(T)}$-time algorithm (for any function~$f$) for \textsc{Temporal Matching} implies a $2^{o(|V|)}$-time algorithm for \textsc{Independent Set} on cubic graphs (note that $T$ is constant in the reduction), which contradicts the ETH.

\begin{observation}%
		\label{obs:complete}
\textsc{Temporal Matching} is NP-complete, even if $\Delta=2$, $T=5$, and the underlying graph of the input temporal graph is complete.
\end{observation}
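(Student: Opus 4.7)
The plan is to reduce from the hard instance produced by \cref{th:apx} by padding it to a complete underlying graph, exploiting the fact that $\Delta$-independence lets us ``hide'' dummy edges in a time slot that is far enough away from all original time slots. Concretely, starting from a temporal graph $(H,\lambda)$ of lifetime~$3$ produced by \cref{constr:lreduction}, I would build $(H',\lambda')$ by inserting, at time slot~$5$, every pair $\{u,v\}\subseteq V(H)$ that is not an edge of $H$, and leaving time slot~$4$ empty. The resulting underlying graph is the complete graph on $V(H)$, the lifetime is~$5$, the parameter $\Delta$ remains~$2$, and the construction is clearly polynomial-time.

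The key observation is that for $\Delta=2$ one has $|t-5|\geq 2$ for every $t\in\{1,2,3\}$, so every time edge added at time~$5$ is $\Delta$-independent from every time edge of $(H,\lambda)$. This decouples the two parts of any $2$-temporal matching of $(H',\lambda')$: its restriction to times $\{1,2,3\}$ is a $2$-temporal matching of $(H,\lambda)$, while its restriction to time~$5$ is a static matching of the complement graph $\overline{H}$ (since two time edges sharing time~$5$ are $\Delta$-independent iff their edges are vertex-disjoint, and all edges at time~$5$ lie in $\overline{H}$ by construction). These two parts can be maximized independently, which would give
\[
    \mu_2(H',\lambda') \;=\; \mu_2(H,\lambda) + \nu(\overline{H}).
\]
Since $\nu(\overline{H})$ is computable in polynomial time by any standard maximum matching algorithm, I would then reduce the decision version $((H,\lambda),2,k)$ of \cref{th:apx} to $((H',\lambda'),2,k+\nu(\overline{H}))$ to conclude NP-hardness; membership in NP was already noted in \cref{app:basic}.

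The only delicate point, which I expect to be a sanity check rather than a real obstacle, is to verify that non-edges placed at time~$5$ do not interact with the original construction in a way that depends on the input graph $G$ of \cref{constr:lreduction}. This is exactly what placing them at distance $\geq\Delta$ from every original time slot guarantees; any more aggressive packing (e.g.\ also using slot~$4$, which would have distance~$1$ from slot~$3$) would create conflicts between dummy edges and the slot-$3$ edges of $(H,\lambda)$ whose pattern depends on the $4$-edge coloring used in \cref{constr:lreduction}, and would therefore break the clean additive decomposition above.
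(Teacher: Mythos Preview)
Your approach is correct and essentially the same as the paper's: add an empty buffer snapshot at time~$4$ and fill time~$5$ so that the underlying graph becomes complete, then shift~$k$ by the (polynomial-time computable) contribution of slot~$5$. The only cosmetic difference is that the paper places the \emph{complete} graph at time~$5$ rather than just~$\overline{H}$, so the offset is simply $\lfloor |V(H)|/2\rfloor$ and no separate maximum-matching computation on~$\overline{H}$ is needed.
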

\begin{proof}[Proof Sketch]
We observe that \cref{constr:lreduction} can be modified in such a way that it produces a temporal graph that has a complete underlying graph.
Namely, we can add two additional snapshots to the construction, one edgeless snapshot at time slot four, and one snapshot that is a complete graph at time slot five. This has the consequence that the size of the matching increases by exactly $\lfloor n/2 \rfloor$ and the underlying graph of the constructed temporal graph is a complete graph. Hence, we obtain \cref{obs:complete}.
\end{proof}

The importance of this observation is due to the following parameterized complexity implication.
Parameterizing \textsc{Temporal Matching} by structural graph parameters of the underlying graph that are constant on complete graphs cannot yield fixed-parameter tractability unless~P~$=$~NP, even if combined with the lifetime~$T$.
Note that many structural parameters fall into this category, such as domination number, distance to cluster graph, clique cover number, etc. 

\ifarxiv{}
\subparagraph*{Adapting \cref{constr:lreduction} to the Model of Baste~et~al.~\cite{baste2018temporal}.}\label{app:baste}
\else{}
\paragraph{Adapting \cref{constr:lreduction} to the Model of Baste~et~al.~\cite{baste2018temporal}}\label{app:baste}
\fi{}
We remark that our reduction for \cref{th:apx} can easily be adapted to the model of Baste~et~al.~\cite{baste2018temporal}: recall that every edge of the underlying graph of the temporal graph constructed in the reduction (see \cref{constr:lreduction}) appears in exactly one time step.
Hence, for each of these time edges, we can add a second appearance exactly one
time step after the first appearance without creating any new matchable edges.
Of course, in order to do that for time edges appearing in the third time step,
we need another fourth time step. It follows that
\textsc{$\gamma$-Matching}~\cite{baste2018temporal} is NP-hard and its canonical
optimization version is APX-hard even if $\gamma =2$ and $T=4$, which
strengthens the hardness result by Baste~et~al.~\cite{baste2018temporal}.

\subsection{NP-completeness of \textsc{Temporal Matching} with underlying Paths}
\label{sec:pathhardness}

In this subsection we show NP-completeness of \textsc{Temporal Matching} even for a very restricted class of temporal graphs.

\begin{theorem}\label{thm:underlyingpath}
\textsc{Temporal Matching} is NP-complete even if $\Delta=2$ and the underlying graph of the input temporal graph is a path. %
\end{theorem}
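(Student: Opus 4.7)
The plan is to reduce from \textsc{Independent Set} on induced subgraphs of diagonal grid graphs, whose NP-hardness is established in Theorem~\ref{prop:nphgridis}. The key observation driving the reduction is that when the underlying graph is a path and $\Delta = 2$, the $\Delta$-temporal line graph has an especially geometric shape: two time edges $(e,t)$ and $(e',t')$ are in conflict iff the underlying edges share a vertex on the path (equivalently, they are at path-distance at most $1$) \emph{and} $|t-t'| \leq 1$. Identifying each time edge with the lattice point $(x,t)$, where $x$ is the index of the corresponding edge on the path, this is exactly the adjacency relation of the diagonal grid (king) graph.

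First I would formalize this correspondence. Given an induced subgraph $H$ of a diagonal grid graph together with an embedding of its vertices at positions $\{(x_i,y_i)\}_{i}$ with $x_i,y_i \in \mathbb{N}$, construct a temporal graph $\G = (P,\lambda)$ whose underlying graph $P$ is the path $v_0 v_1 \cdots v_N$ with $N = \max_i x_i$, and whose labeling is defined by $\lambda(\{v_{x-1},v_x\}) := \{y_i : x_i = x\}$ for $x \in [N]$ (and the edge is simply removed if this set is empty, which we can avoid by a trivial shift that preserves induced-subgraph status). Set $\Delta = 2$. This construction is computable in polynomial time in $|V(H)|$ and the size of the embedding.

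Next I would verify that $L_2(\G)$ is isomorphic to $H$: its vertex set is in bijection with $V(H)$ via $(x_i,y_i) \leftrightarrow (\{v_{x_i-1},v_{x_i}\},y_i)$, and by the observation above the edge set matches the king-move adjacency, which---since $H$ is an \emph{induced} subgraph of the diagonal grid---coincides with $E(H)$. By Observation~\ref{obs:TMequivMIS}, the size of a maximum $2$-temporal matching of $\G$ equals the independence number of $H$, so the decision problem reduces directly: $H$ has an independent set of size $k$ iff $\G$ has a $2$-temporal matching of size $k$. Combined with containment in NP (the basic check in \cref{app:basic}), this yields NP-completeness.

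The main obstacle is conceptual rather than technical: one must have the NP-hardness of \textsc{Independent Set} on induced subgraphs of diagonal grid graphs (Theorem~\ref{prop:nphgridis}) already in hand, since this is what translates the geometric ``king-move'' structure into hardness. Given that, the reduction above is almost verbatim a dictionary between lattice points and time edges; the only minor technicality to handle carefully is ensuring that the labeling is nonempty on every edge of the path (or equivalently, allowing the path to have ``idle'' edges with empty label sets, which does not change the matching). Since Theorem~\ref{prop:nphgridis} is proved via a polynomial-time reduction to an explicit grid embedding, its output can be fed directly into the construction, giving the desired polynomial-time reduction.
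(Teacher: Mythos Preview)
Your proposal is correct and follows essentially the same route as the paper: reduce from \textsc{Independent Set} on induced subgraphs of diagonal grid graphs (\cref{prop:nphgridis}) by reading each lattice point $(x,y)$ as the time edge ``edge $x$ of the path at time $y$'', so that $L_2(\G)\cong H$, and then invoke \cref{obs:TMequivMIS}. The only point to tighten is the ``idle edge'' technicality: rather than shifting or allowing empty label sets, note that the instances produced in the proof of \cref{prop:nphgridis} are connected (they are subdivisions of connected cubic planar graphs), and connectivity in the king graph forces every intermediate $x$-column to be nonempty---this is exactly what the paper's \cref{cor:grid} uses.
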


We show this result by a reduction from \textsc{Independent Set} on connected cubic planar graphs, which is known to be NP-complete~\cite{garey1977rectilinear,GareyJ79}. More specifically, we show that \textsc{Independent Set} is NP-complete on the temporal line graphs of temporal graphs that have a path as underlying graph. Recall that by \cref{obs:TMequivMIS}, solving \textsc{Independent Set} on a temporal line graph is equivalent to solving \textsc{Temporal Matching} on the corresponding temporal graph.
We proceed as follows.
\begin{enumerate}
	\item We show that $2$-temporal line graphs of temporal graphs that have a path as underlying graph have a grid-like structure. More specifically, we show that they are induced subgraphs of so-called \emph{diagonal grid graphs} or \emph{king's graphs}%
~\cite{chang2013algorithmic,guo2018coloring}.
  \item We show that \textsc{Independent Set} is NP-complete on induced subgraphs of diagonal grid graphs which together with \cref{obs:TMequivMIS} yields \cref{thm:underlyingpath}.
  \begin{itemize}
  \item We exploit that cubic planar graphs are induced topological minors of grid graphs and extend this result by showing that they are also induced topological minors of diagonal grid graphs.
  \item We show how to modify the subdivision of a cubic planar graph that is an induced subgraph of a diagonal grid graph such that NP-hardness of finding independent sets of certain size is preserved.
  \end{itemize}
\end{enumerate}

\begin{figure}[t]
\centering
		\begin{subfigure}[c]{0.45\textwidth}
		\begin{center}
		\begin{tikzpicture}
\foreach \i in {0,...,5} {
      \node[vertex] (v1\i) at (1,\i) {};
    }
\foreach \i in {1,...,5} {
      \draw[edge] (1,5-\i+1) -- (1,5-\i) node [midway,right] {\footnotesize $e_{\i}, \lambda(e_{\i})=\{1,2,3,4,5\}$};

    }
    \phantom{\node (A) at (2,0) {\footnotesize $1$};}
\end{tikzpicture}
		\subcaption{\centering Temporal graph $\G=(P_6,\lambda)$ with $\lambda(e)=[5]$ for all~$e\in E(P_6)$. \phantom{asdfasdfasdsfdgsdfgsdfgfasdfasdf} \phantom{asdfasdfasdfasdfasdf}}
			\label{fig:temppath}
\end{center}
		\end{subfigure}
		\begin{subfigure}[c]{0.45\textwidth}
\begin{center}
\begin{tikzpicture}
\foreach \i in {1,...,5} {
\draw[gray, dotted] (\i,0.8) -- (\i,5.2);
\draw[gray, dotted] (0.8,\i) -- (5.2,\i);
}
\foreach \i in {1,...,5} {
\foreach \j in {1,...,5} {
      \node[vertex] (v\i\j) at (\i,\j) {};
    }
    \node (e\i) at (0.5,6-\i) {\footnotesize $e_{\i}$};
    \node (t\i) at (\i,0.5) {\footnotesize $\i$};
    }
\foreach \i in {1,...,4} {
\foreach \j in {1,...,4} {
      \draw[edge] (\i,\j) -- (\i+1,\j);
      \draw[edge] (\i,\j) -- (\i,\j+1);
      \draw[edge] (\i,\j) -- (\i+1,\j+1);
      \draw[edge] (\i+1,\j) -- (\i,\j+1);
    }
      \draw[edge] (5,\i) -- (5,\i+1);
      \draw[edge] (\i,5) -- (\i+1,5);
    }
    \phantom{\node (A) at (1,.5) {};}
    \node (A) at (1,5.5) {};
\end{tikzpicture}
		\subcaption{\centering $2$-Temporal line graph $L_2(\G)$. The horizontal dimension corresponds to time slots 1 to 5, the vertical dimension corresponds to the edges of $P_6$.}
			\label{fig:diaggrid}
\end{center}
\end{subfigure}
	\caption{
		A temporal line graph with a path as underlying graph where edges are always active and its $2$-temporal line graph.
		}
		\label{fig:dgg}
\end{figure}
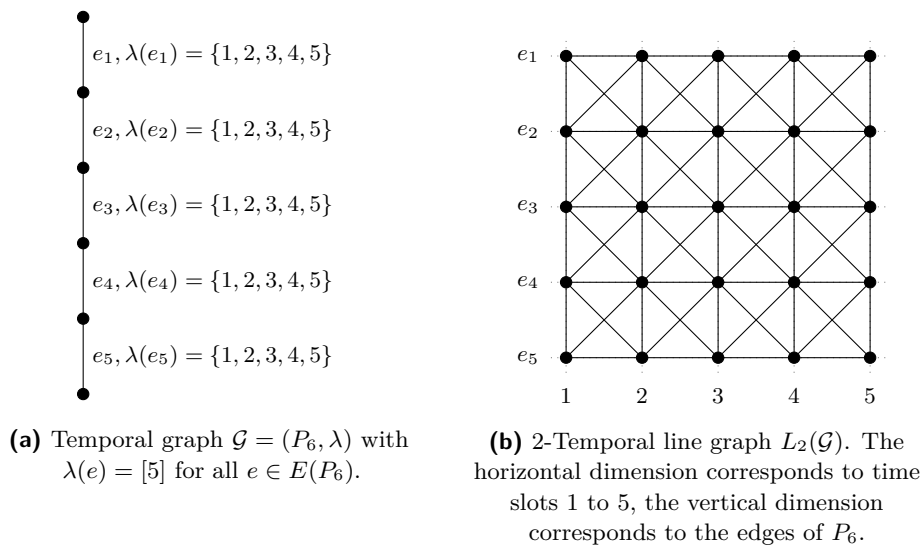

\begin{definition}[Diagonal Grid Graph~\cite{chang2013algorithmic,guo2018coloring}]\label{def:ddg}
A \emph{diagonal grid graph} $\dgg{n}{m}$ has a vertex $v_{i,j}$ for all $i\in[n]$ and $j\in[m]$ and there is an edge $\{v_{i,j},v_{i',j'}\}$ if and only if $|i-i'|^2+|j-j'|^2\le 2$.
\end{definition}

It is easy to check that for a temporal graph with a path as underlying graph and where each edge is active at every time step, the $2$-temporal line graph is a diagonal grid graph. %
\begin{observation}
Let $\G=(P_n,\lambda)$ with $\lambda(e)=[T]$ for all $e\in E(P_n)$, then $L_2(\G)=\dgg{n-1}{T}$.
\end{observation}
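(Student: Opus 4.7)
The plan is to exhibit an explicit isomorphism between $L_2(\G)$ and $\dgg{n-1}{T}$ and then verify that it preserves adjacency in both directions. Label the edges of $P_n$ along the path as $e_1,e_2,\ldots,e_{n-1}$, where $e_i$ and $e_{i+1}$ share exactly one endpoint. Since $\lambda(e_i)=[T]$ for every $i$, the time edges of $\G$ are exactly the pairs $(e_i,t)$ with $i\in[n-1]$ and $t\in[T]$. I would define the map $\sigma\colon V(L_2(\G))\to V(\dgg{n-1}{T})$ by $\sigma((e_i)_t)=v_{i,t}$; this is a bijection between the two vertex sets, both of size $(n-1)T$.

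Next I would verify that $\sigma$ preserves edges. The key structural fact about a path is the ``consecutive-index'' characterization of edge intersection: $e_i\cap e_{i'}\neq\emptyset$ if and only if $|i-i'|\le 1$ (the case $i=i'$ included). Combining this with the definition of $L_2(\G)$, we have $\{(e_i)_t,(e_{i'})_{t'}\}\in E(L_2(\G))$ exactly when $(i,t)\neq(i',t')$, $|i-i'|\le 1$, and $|t-t'|\le 1$. I would then observe the elementary equivalence for integers: $|i-i'|\le 1$ and $|t-t'|\le 1$ if and only if $|i-i'|^2+|t-t'|^2\le 2$. One direction is immediate since each summand is at most $1$; the converse follows because $|i-i'|^2\le 2$ forces $|i-i'|\le\lfloor\sqrt{2}\rfloor=1$ (and likewise for the time coordinate). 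Excluding the trivial $(i,t)=(i',t')$ case corresponds exactly to excluding the zero-distance pair in \cref{def:ddg}. Therefore $\{\sigma((e_i)_t),\sigma((e_{i'})_{t'})\}\in E(\dgg{n-1}{T})$ if and only if $\{(e_i)_t,(e_{i'})_{t'}\}\in E(L_2(\G))$.

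The entire argument is essentially a bookkeeping check, so there is no real obstacle; the only care needed is to handle the two kinds of adjacency in $L_2(\G)$ uniformly (the ``same-edge, consecutive-times'' case $i=i'$ and the ``consecutive-edges'' case $|i-i'|=1$), which both land correctly under the squared-distance condition defining the king's graph. This yields $L_2(\G)=\dgg{n-1}{T}$ as claimed.
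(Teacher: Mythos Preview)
Your proof is correct and is exactly the routine verification the paper has in mind; the paper itself states the observation without proof, simply noting that it is ``easy to check.'' Your bijection $(e_i)_t\mapsto v_{i,t}$ together with the characterization $e_i\cap e_{i'}\neq\emptyset\iff|i-i'|\le 1$ on a path is the only natural way to fill in the details.
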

Further, it is easy to see that deactivating an edge at a certain point in time results in removing the corresponding vertex from the diagonal grid graph. See \cref{fig:dggsubgraph} for an example. Hence, we have that every induced subgraph of a diagonal grid graph is a $2$-temporal line graph.
\begin{corollary}\label{cor:grid}
Let $Z'$ be a connected \emph{induced} subgraph of $\dgg{n-1}{T}$. Then there is a $\lambda$ and an $n'\le n$ such that $Z'=L_2((P_{n'},\lambda))$.
\end{corollary}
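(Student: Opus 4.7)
The plan is to build $(P_{n'}, \lambda)$ directly from the vertex set of $Z'$. By the preceding observation, the vertices of $\dgg{n-1}{T}$ are naturally indexed as $v_{i,j}$ with $i \in [n-1]$ (row, corresponding to edge $e_i$ of $P_n$) and $j \in [T]$ (column, corresponding to time slot $j$), so a natural candidate assignment is to include $j$ in $\lambda(e_i)$ precisely when $v_{i,j} \in V(Z')$. The only subtlety is that the temporal labelling is not allowed to assign the empty set to an edge, so rows of the grid that are unused by $Z'$ must be removed from the underlying path. This is where the hypothesis that $Z'$ is \emph{connected} (and \emph{induced}) comes in, and this contiguity argument is the main (and really only) obstacle.

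More precisely, I would let $R = \{i \in [n-1] : v_{i,j} \in V(Z') \text{ for some } j\}$ denote the set of used rows, and first show that $R$ is a contiguous interval $[a,b]$. Indeed, by \cref{def:ddg} every edge of $\dgg{n-1}{T}$ joins two vertices whose row indices differ by at most one. Hence any path in the induced subgraph $Z'$ from a vertex in row $i_1$ to a vertex in row $i_2$ must pass through every intermediate row $i_0 \in (i_1, i_2)$; if such an $i_0$ were not in $R$, no vertex of $Z'$ would lie in row $i_0$, contradicting connectivity.

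Given $R = [a,b]$, I would then set $n' := b - a + 2$ and fix an enumeration $e'_1,\ldots,e'_{n'-1}$ of the edges of $P_{n'}$, identifying $e'_k$ with row $a+k-1$ of the grid. Define
\[
\lambda(e'_k) \;:=\; \{\, j \in [T] \;:\; v_{a+k-1,j} \in V(Z') \,\},
\]
which is nonempty because $a+k-1 \in R$. Since $a \ge 1$ and $b \le n-1$, we have $n' \le n$ as required, and the map $(e'_k, j) \mapsto v_{a+k-1, j}$ is a bijection between the vertex set of $L_2((P_{n'}, \lambda))$ and $V(Z')$.

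It remains to check that this bijection preserves adjacency. Two distinct time edges $(e'_k, j)$ and $(e'_{k'}, j')$ are adjacent in $L_2((P_{n'}, \lambda))$ iff $e'_k \cap e'_{k'} \ne \emptyset$ and $|j - j'| < 2$; on a path this is equivalent to $|k - k'| \le 1$ and $|j - j'| \le 1$ (with $(k,j) \ne (k',j')$), which is exactly the adjacency condition $|(a+k-1) - (a+k'-1)|^2 + |j - j'|^2 \le 2$ between $v_{a+k-1, j}$ and $v_{a+k'-1, j'}$ in $\dgg{n-1}{T}$. Because $Z'$ is an \emph{induced} subgraph, this is also the adjacency condition inside $Z'$, so $L_2((P_{n'}, \lambda)) = Z'$, finishing the proof.
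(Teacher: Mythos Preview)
Your argument is correct and is precisely a formalization of what the paper only sketches in one sentence (``deactivating an edge at a certain point in time results in removing the corresponding vertex from the diagonal grid graph''); in particular, your contiguity argument for the set of used rows---needed so that the underlying graph remains a path $P_{n'}$ with all edges having nonempty label sets---is exactly the step the paper leaves implicit under the connectedness hypothesis.
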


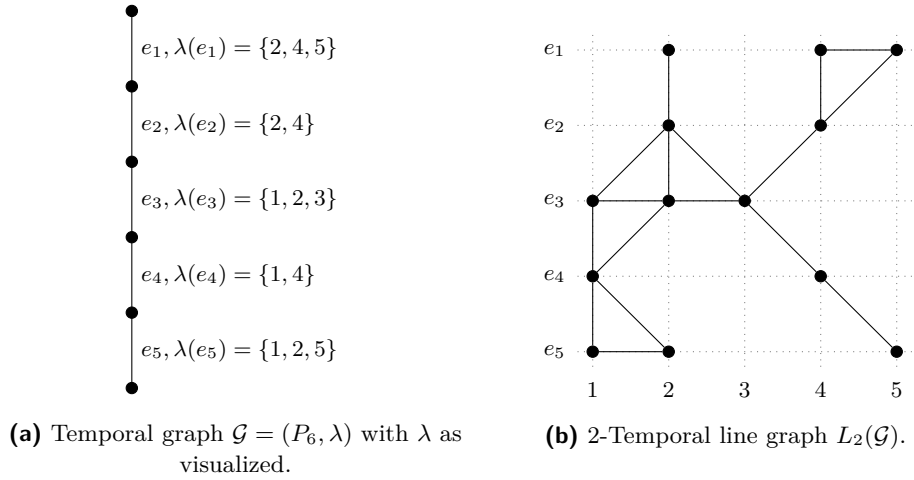
\begin{figure}[t]
\centering
		\begin{subfigure}[c]{0.45\textwidth}
		\begin{center}
		\begin{tikzpicture}
\foreach \i in {1,...,6} {
      \node[vertex] (v1\i) at (1,\i) {};
    }
      \draw[edge] (1,1) -- (1,2) node [midway,right] {\footnotesize $e_5, \lambda(e_5)=\{1,2,5\}$};
      \draw[edge] (1,2) -- (1,3) node [midway,right] {\footnotesize $e_4, \lambda(e_4)=\{1,4\}$};
      \draw[edge] (1,3) -- (1,4) node [midway,right] {\footnotesize $e_3, \lambda(e_3)=\{1,2,3\}$};
      \draw[edge] (1,4) -- (1,5) node [midway,right] {\footnotesize $e_2, \lambda(e_2)=\{2,4\}$};
      \draw[edge] (1,5) -- (1,6) node [midway,right] {\footnotesize $e_1, \lambda(e_1)=\{2,4,5\}$};

    \phantom{\node (A) at (2,1) {\footnotesize $1$};}
\end{tikzpicture}
		\subcaption{\centering Temporal graph $\G=(P_6,\lambda)$ with $\lambda$ as visualized.}
			\label{fig:temppath2}
\end{center}
		\end{subfigure}
		\begin{subfigure}[c]{0.45\textwidth}
\begin{center}
\begin{tikzpicture}[rotate=90,yscale=-1]
\foreach \i in {1,...,5} {
\draw[gray, dotted] (\i,0.8) -- (\i,5.2);
\draw[gray, dotted] (0.8,\i) -- (5.2,\i);
}
\foreach \i in {1,...,5} {
    \node (e\i) at (0.5,\i) {\footnotesize $\i$};
    \node (t\i) at (6-\i,0.5) {\footnotesize $e_{\i}$};
    }
    
      \node[vertex] (v11) at (1,1) {};
      \node[vertex] (v21) at (2,1) {};
      \node[vertex] (v31) at (3,1) {};
      
      \node[vertex] (v12) at (1,2) {};
      \node[vertex] (v32) at (3,2) {};
      \node[vertex] (v42) at (4,2) {};
      \node[vertex] (v52) at (5,2) {};
      
      \node[vertex] (v33) at (3,3) {};
      
      \node[vertex] (v24) at (2,4) {};
      \node[vertex] (v44) at (4,4) {};
      \node[vertex] (v54) at (5,4) {};
      
      \node[vertex] (v15) at (1,5) {};
      \node[vertex] (v55) at (5,5) {};

		\draw[edge] (v11) -- (v21);
		\draw[edge] (v21) -- (v31);
		
		\draw[edge] (v11) -- (v12);
		\draw[edge] (v21) -- (v12);
		\draw[edge] (v21) -- (v32);
		\draw[edge] (v31) -- (v32);
		\draw[edge] (v31) -- (v42);
		
		\draw[edge] (v32) -- (v42);
		\draw[edge] (v52) -- (v42);
		
		\draw[edge] (v32) -- (v33);
		\draw[edge] (v42) -- (v33);
		
		\draw[edge] (v24) -- (v33);
		\draw[edge] (v44) -- (v33);
		
		\draw[edge] (v44) -- (v54);
		
		\draw[edge] (v24) -- (v15);
		\draw[edge] (v44) -- (v55);
		\draw[edge] (v54) -- (v55);
		
		    \phantom{\node (A) at (.5,2) {};}
    \node (A) at (5.5,2) {};
\end{tikzpicture}
		\subcaption{\centering $2$-Temporal line graph $L_2(\G)$. \phantom{asdfasdfasdfasdfasdf}}
			\label{fig:inddiaggrid}
\end{center}
\end{subfigure}
	\caption{
		A temporal line graph with a path as underlying graph where edges are \emph{not} always active and its $2$-temporal line graph.
		}
		\label{fig:dggsubgraph}
\end{figure}

Having these results at hand, it suffices to show that \textsc{Independent Set} is NP-complete on induced subgraphs of diagonal grid graphs. By \cref{obs:TMequivMIS}, this directly implies that \textsc{Temporal Matching} is NP-complete on temporal graphs that have a path as underlying graph.
Hence, in the remainder of this section, we show the following result.
\begin{theorem}%
		\label{prop:nphgridis}
\textsc{Independent Set} on induced subgraphs of diagonal grid graphs is NP-complete. %
\end{theorem}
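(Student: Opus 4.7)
Membership in NP is clear, so I focus on hardness and reduce from \textsc{Independent Set} on connected cubic planar graphs, which is NP-complete~\cite{garey1977rectilinear,GareyJ79}. The reduction has two stages: first embed the cubic planar graph as an induced subdivision inside a diagonal grid graph, and then normalize the subdivision lengths so that the independence number of the embedded graph determines the independence number of the original.

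For the embedding stage, given a cubic planar graph $G$ on $n$ vertices, I would start from its standard orthogonal planar drawing on an $O(n)\times O(n)$ grid, in which every vertex of $G$ sits at a lattice point, every edge is an axis-parallel polyline between lattice points with bends at lattice points, and distinct edge-polylines meet only at shared endpoints. This drawing already realizes $G$ as an induced topological minor of the ordinary grid graph. To carry this over to a diagonal grid graph one must also avoid the extra king-move edges that would otherwise appear between nearby polylines. I would rescale the drawing by a fixed constant factor, so that any two points on distinct edge-polylines lie at $\ell_\infty$-distance at least $2$, and so that at each degree-$3$ vertex of $G$ the three departing segments use pairwise ``diagonally disjoint'' neighborhoods; this eliminates all spurious diagonal edges of $\dgg{N}{M}$ and produces an induced subdivision $H$ of $G$ inside $\dgg{N}{M}$ with $N,M$ polynomial in $n$. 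In $H$ each edge $e\in E(G)$ has been subdivided by some number $k_e$ of lattice vertices, with in general mixed parities. By inserting small local detours along the polylines (each detour adds exactly $2$ to some $k_e$ and, by the same rescaling argument, can be absorbed without creating shortcuts in the diagonal grid graph), I would enforce $k_e = 2s$ for a single value $s = O(n)$ independent of $e$, obtaining an induced subgraph $G^*$ of a (larger) diagonal grid graph.

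A direct case analysis on the path of length $2s+1$ that replaces each edge of $G$ then yields the subdivision identity $\alpha(G^*) = \alpha(G) + s\cdot|E(G)|$: from below, any independent set $S$ of $G$ extends by taking $s$ alternating inner vertices on each path, chosen compatibly with $S\cap\{u,v\}$; from above, if $S'$ is an independent set of $G^*$ and $X\subseteq E(G)$ consists of the edges with both endpoints in $S'$, then $|S'|\le |S'\cap V(G)|+s\cdot|E(G)|-|X|$, and the right-hand side is at most $\alpha(G)+s\cdot|E(G)|$ because removing one endpoint of each edge of $X$ from $S'\cap V(G)$ yields an independent set of $G$. Hence $G$ has an independent set of size $k$ if and only if $G^*$ has one of size $k + s\cdot|E(G)|$, and \textsc{Independent Set} on induced subgraphs of diagonal grid graphs is NP-hard. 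The main obstacle is the first stage: the king-graph diagonals destroy the induced property of the usual orthogonal drawing, since polylines at $\ell_\infty$-distance $1$ already create spurious edges, so the rescaling-and-routing argument needs a careful local analysis both at the degree-$3$ vertices (to route the three subdivided incident edges with sufficient angular separation) and along the turns of each polyline. Once this geometric step succeeds, the parity normalization and the subdivision identity are routine.
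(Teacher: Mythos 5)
Your overall architecture is the same as the paper's: reduce from \textsc{Independent Set} on cubic planar graphs, realize the graph as an induced subdivision inside a diagonal grid via a Valiant-type orthogonal grid drawing (the paper's \cref{cor:embedgrid} and \cref{lem:embed}), and then invoke the even-subdivision identity for independent sets (your direct case analysis on the paths is a correct, slightly more self-contained restatement of Poljak's \cref{obs:is}; also, insisting on a \emph{uniform} value $k_e=2s$ is unnecessary — even parity per edge suffices, with offset $\sum_{e}k_e/2$). However, there is a genuine gap in your parity-normalization step, and it is exactly the step where the paper has to work hardest. Your only normalization operation is a detour that "adds exactly $2$ to some $k_e$''. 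Adding $2$ preserves the parity of $k_e$, so if the embedding leaves some $k_e$ odd — and you yourself concede the parities are "in general mixed'' — no sequence of such detours can ever reach $k_e=2s$. What is needed is a gadget that changes the length of an induced path inside the diagonal grid by exactly \emph{one} while keeping it induced; this is nontrivial precisely because of the king-move diagonals, and it is the content of the paper's \cref{lem:embed2} (see \cref{fig:pathparity}), whose feasibility in turn relies on the eleven-fold subdivision in \cref{lem:embed} guaranteeing seven consecutive straight edges on every path where the odd-parity gadget can be inserted. Without an odd length-change gadget your reduction cannot be completed.

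The gap is made unavoidable by a second point that you flag as an "obstacle'' but do not resolve: rescaling cannot eliminate the spurious diagonal chords at the bends of a single polyline, since a rescaled right-angle corner is still a right-angle corner, and in the king's graph the two path vertices flanking a corner are always diagonally adjacent. The paper's fix is to \emph{delete} each corner vertex and reconnect the path through the diagonal edge (and to apply analogous diagonal-offset surgery at the degree-$2$ and degree-$3$ junction configurations); each such deletion shortens the corresponding path by one edge, and since the number of bends varies from edge to edge, these corrections are exactly what scrambles the parities of the $k_e$ in the first place. So the two halves of your "careful local analysis'' are coupled: the corner surgery forces mixed parities, and mixed parities force the $+1$ gadget that your proposal lacks. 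Your upper- and lower-bound arguments for $\alpha(G^*)=\alpha(G)+s\cdot|E(G)|$ are correct as stated, so once the embedding and parity steps are repaired along the paper's lines, the rest of your proof goes through.
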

This result may be of independent interest and strengthens a result by Clark et~al.~\cite{clark1990unit}, who showed that \textsc{Independent Set} is NP-complete on unit disk graphs. It is easy to see from \cref{def:ddg} that diagonal grid graphs and their induced subgraphs are a (proper) subclass of unit disk graphs.

In the following, we give the main ideas of how we prove \cref{prop:nphgridis}. 
The first building block for the reduction is the fact that we can embed cubic planar graphs into a grid~\cite{valiant1981universality}.
More specifically, a cubic planar graph admits a planar embedding in such a way that the vertices are mapped to points of a grid and the edges are drawn along the grid lines. Moreover, such an embedding can be computed in polynomial time and the size of the grid is polynomially bounded in the size of the planar graph.

Note that if we replace the edges of the original planar graph by paths of appropriate length, then the embedding in the grid is actually a subgraph of the grid. Furthermore, if we scale the embedding by a factor of two, i.e.\ subdivide every edge once, then the embedding is also guaranteed to be an \emph{induced} subgraph of the grid. In other words, we argue that every cubic planar graph is an induced topological minor of a polynomially large grid graph. We then show how to modify the embedding in a way that insures that the resulting graph is also an induced topological minor of an polynomially large \emph{diagonal} grid graph. The last step is to further modify the embedding such that it can be obtained from the original graph by subdividing each edge an even number of times, this ensures that NP-hardness of \textsc{Independent Set} is preserved~\cite{poljak1974note}.

It is easy to check that \cref{prop:nphgridis}, \cref{obs:TMequivMIS}, and \cref{cor:grid} together imply \cref{thm:underlyingpath}.
\cref{thm:underlyingpath} also has some interesting implications from the point of view of parameterized complexity:
Parameterizing \textsc{Temporal Matching} by structural graph parameters of the underlying graph that are constant on a path cannot yield fixed-parameter tractability unless~P~$=$~NP, even if combined with~$\Delta$. %
Note that a large number of popular structural parameters fall into this category, such as maximum degree, treewidth, pathwidth, feedback vertex number, etc.

\ifarxiv{}
\subparagraph*{Proof of \cref{prop:nphgridis}.}\label{app:pathproof}
\else{}
\paragraph{Proof of \cref{prop:nphgridis}}\label{app:pathproof}
\fi{}
We prove \cref{prop:nphgridis} in several steps. We first use that a cubic planar graph admits a planar embedding in such a way that the vertices are mapped to points of a grid and the edges are drawn along the grid lines. Moreover, such an embedding can be computed in polynomial time and the size of the grid is polynomially bounded in the size of the planar graph. Furthermore, if we scale the embedding by a factor of two, i.e.\ subdivide every edge once, then the embedding is also guaranteed to be an \emph{induced} subgraph of the grid. In other words, we argue that every cubic planar graph is an induced topological minor of an polynomially large grid graph.

\begin{proposition}[Special case of Theorem~2 from Valiant {\cite{valiant1981universality}}]\label{cor:embedgrid}
Let $G=(V,E)$ be a cubic planar graph. Then~$G$ is an induced topological minor of $Z_{n,m}$ for some $n,m$ with $n\cdot m\in O(|V|^2)$ and the corresponding subdivision of $G$ can be computed in polynomial time.
\end{proposition}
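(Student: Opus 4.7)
The plan is to reduce the proposition to Valiant's classical grid embedding theorem for planar graphs of maximum degree four, and then handle the induced-subgraph requirement by a simple scaling argument. Valiant's theorem gives an embedding but only as a subgraph of the grid, so the technical work is to rule out ``unwanted'' grid edges between non-consecutive vertices of the resulting subdivision.

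First, I would invoke Valiant's theorem (in the formulation for planar graphs of maximum degree $4$) to obtain, in polynomial time, positive integers $n', m'$ with $n' \cdot m' \in O(|V|^2)$ and an embedding of $G$ into $Z_{n',m'}$ in which every vertex of $G$ is mapped to a grid point and every edge of $G$ is realized as a path of $Z_{n',m'}$ between its endpoints, using axis-aligned grid edges and such that distinct edge-paths share no interior grid point. Interpreting each edge-path as a sequence of edge-subdivisions, this already produces a subgraph $H$ of $Z_{n',m'}$ that is a subdivision of $G$. This establishes the topological-minor part, together with the $O(|V|^2)$ bound on the grid size and polynomial-time computability.

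Next, to upgrade ``topological minor'' to ``induced topological minor,'' I would scale the embedding by a factor of two: replace every used grid point $(i,j)$ by $(2i, 2j)$, and subdivide each step of every edge-path once, placing the new vertex at the midpoint in $Z_{2n',2m'}$. Call the resulting subgraph $H'$ of the larger grid $Z_{2n',2m'}$. By construction, $H'$ is a subdivision of $G$ (each original edge is now subdivided into an even-length path), and $H'$ lives in a grid of total size still $O(|V|^2)$. The key point is that after this scaling any two images of vertices of $H'$ that are not consecutive along a common edge-path lie at $L_\infty$-distance at least $2$ in $Z_{2n', 2m'}$: two points originally at grid distance $1$ are now at grid distance $2$, and non-crossing edge-paths that shared no interior vertex in $H$ still share none in $H'$ and are forced apart by the scaling. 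Consequently, two vertices of $H'$ are adjacent in $Z_{2n', 2m'}$ if and only if they are adjacent in $H'$, so $H'$ is an \emph{induced} subgraph of $Z_{2n',2m'}$, as required.

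The only real obstacle is the verification that scaling yields the induced property, i.e.\ that no two edge-paths of $H'$ can come to grid-distance $1$ without sharing an endpoint. This follows by a routine case analysis from the edge-disjointness and axis-alignment guaranteed by Valiant's embedding together with the factor-two blow-up of all coordinates. Finally, since Valiant's algorithm runs in polynomial time and the scaling step is clearly polynomial, the subdivision witnessing that $G$ is an induced topological minor of $Z_{2n',2m'}$ is computable in polynomial time, completing the proof.
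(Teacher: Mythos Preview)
Your proposal is correct and follows essentially the same approach the paper takes: the paper states this proposition as a direct consequence of Valiant's grid-embedding theorem together with the observation (spelled out in the surrounding text rather than in a formal proof) that scaling the embedding by a factor of two turns the resulting subdivision into an \emph{induced} subgraph of the grid. Your write-up simply fleshes out that informal argument.
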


We discuss next how to replace the edges of a cubic planar graph by paths of appropriate length such that it is an induced subgraph of a diagonal grid graph. In other words, we show that every cubic planar graph is an induced topological minor of a polynomially large diagonal grid graph.

\begin{lemma}%
\label{lem:embed}
Let $G=(V,E)$ be a cubic planar graph. Then $G$ is an induced topological minor of $\dgg{n}{m}$ for some $n,m$ with $n\cdot m\in O(|V|^2)$ and the corresponding subdivision of $G$ can be computed in polynomial time.
\end{lemma}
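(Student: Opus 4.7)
The plan is to take the orthogonal grid embedding provided by the preceding Proposition, inflate it by a small constant factor, and then locally modify it to cope with the additional diagonal adjacencies present in $\dgg{n}{m}$.

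First I would invoke the Proposition to obtain, in polynomial time, a subdivision $H$ of $G$ that is an induced subgraph of some grid graph $Z_{n_0,m_0}$ with $n_0 m_0 \in O(|V|^2)$. In this embedding every edge of $H$ is a horizontal or vertical unit grid edge, and the induced-subgraph property forces any two vertices of $H$ at grid distance exactly $1$ to be consecutive on the same subdivided edge. Scaling the embedding by a constant factor $c$ (I would take $c=4$) --- that is, mapping each grid point $(i,j)$ used by $H$ to $(ci,cj)$ and replacing every unit-length grid edge of $H$ by a length-$c$ monotone grid path --- produces a new subdivision $H_1$ of $G$ that is an induced subgraph of $Z_{cn_0,cm_0}$. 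Because the scaling factor is at least $2$, vertices of $H_1$ coming from \emph{different} edges of $H$ end up at Euclidean distance strictly greater than $\sqrt 2$ and hence are non-adjacent in $\dgg{cn_0}{cm_0}$; every unintended adjacency in $\dgg{cn_0}{cm_0}$ is therefore local.

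The remaining obstacles to $H_1$ being an induced subgraph of $\dgg{cn_0}{cm_0}$ are of two kinds: (i) at every \emph{corner} of a path in $H_1$, three consecutive vertices $p_1,p_2,p_3$ satisfy $\|p_1-p_3\|^2 = 2$ and hence induce an unwanted triangle via a diagonal edge; and (ii) at every image of a degree-$3$ vertex of $G$, the starts of two incident paths that leave in adjacent cardinal directions (say up and right) give rise to two distinct path vertices at squared distance $2$. I would handle (i) by deleting the corner vertex $p_2$ and using the diagonal edge $\{p_1,p_3\}$ of $\dgg{cn_0}{cm_0}$ in its place; this only changes the number of subdivision vertices along the corresponding edge of $G$. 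I would handle (ii) by re-routing the first few steps of each offending incident path via diagonal moves in $\dgg{}{}$, so that the three outgoing paths start at pairwise non-adjacent positions; since $c\geq 4$, there is enough room for these detours to stay within the $c\times c$ cell of the scaled branching vertex and to not interact with any other branching vertex or corner.

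Finally, I would verify that the resulting graph $H_2$ is indeed a subdivision of $G$ (the modifications preserve the path structure and keep the paths internally vertex-disjoint), that $H_2$ is an induced subgraph of $\dgg{cn_0}{cm_0}$ (the local fixes eliminate exactly the unintended diagonal edges, and separation between distinct paths was already ensured by the scaling), and that the grid size is $cn_0\cdot cm_0 \in O(|V|^2)$, with everything computable in polynomial time. The main obstacle is the case analysis in (ii): the three incident edges at a branching vertex can use any three of the four cardinal directions, yielding a small number of geometrically distinct configurations, and for each one must exhibit an explicit diagonal detour and check that \emph{all} non-consecutive pairs of path vertices within the affected $O(c)\times O(c)$ neighbourhood remain at squared distance strictly greater than $2$. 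Choosing $c$ large enough (here, $c=4$ suffices) turns this into a purely local, constant-size verification independent of the global structure of the planar graph $G$.
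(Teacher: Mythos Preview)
Your proposal is correct and follows essentially the same approach as the paper: invoke the grid embedding, inflate by a constant factor, remove corner vertices in favour of diagonal shortcuts, and locally reroute around branch vertices to kill the unwanted diagonal adjacencies. The paper inflates by a factor of~$12$ rather than~$4$; the extra slack is not strictly needed for this lemma but is exploited in the subsequent parity-fixing lemma, which requires every subdivided edge to contain seven consecutive collinear grid edges.
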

\begin{proof}
Let $G=(V,E)$ be a cubic planar graph. By \cref{cor:embedgrid} we know that there are integers $n, m$ with $n\cdot m\in O(|V|^2)$ such that~$G=(V,E)$ is an induced topological minor of~$Z_{n,m}$. 
Let $G'=(V',E')$ with $V'\subseteq \mathbb{N}\times\mathbb{N}$ be the corresponding subdivision of~$G$ that is an induced subgraph of~$Z_{n,m}$, i.e.~$Z_{n,m}[V']=G'$. Furthermore, for each vertex $v\in V$ of~$G$, let~$v'\in V'$ denote the corresponding vertex in the subdivision $G'$. 

Let $G''=(V'', E'')$ be the graph resulting from subdividing each edge in $G'$ eleven additional times and shift the graph three units away from the boundary of $Z_{n,m}$ in both dimensions. 
Intuitively, this is necessary to ensure that all paths in the grid are sufficiently far away from each other, which is also important in a later modification.

More formally, for each vertex $(i,j)\in V'$ create a vertex $(12i+3, 12j+3)\in V''$. 
For each edge $\{(i,j),(i,j+1)\}\in E'$ create eleven additional vertices, one for each grid point on the line between $(12i+3, 12j+3)$ and $(12i+3, 12j+15)$.
We connect these vertices by edges such that we get an induced path on the new vertices together with $(12i+3, 12j+3)$ and $(12i+3, 12j+15)$ that follows the grid line they lie on.
For each edge $\{(i,j),(i+1,j)\}\in E'$ we make an analogous modification to $G''$.
Furthermore, for each vertex $v\in V$ of~$G$, let $v''\in V''$ denote the corresponding vertex in the subdivision~$G''$. 
It is clear that $G''$ is an induced subgraph of $Z_{12n+6, 12m+6}$. We now show how to further modify~$G''$ such that it is an induced subgraph of the diagonal grid graph $\dgg{12n+6}{12m+6}$.

For each vertex $v\in V$ let $v''=(i,j)\in V''$, we check the following.
 \begin{enumerate}
   \item If $\deg_{G''}((i,j))=2$ and $\{(i,j),(i,j+1)\},\{(i,j),(i+1,j)\},\{(i,j),(i+2,j)\}\in E''$, then we delete $(i+1,j)$ from $V''$ and all its incident edges from $E''$. We add vertex $(i+1,j-1)$ to $V''$ and add edges $\{(i,j),(i+1,j-1)\}$ and $\{(i+1,j-1),(i+2,j)\}$ to~$E''$. This modification is illustrated in \cref{fig:embed2}. Rotated versions of this configuration are modified analogously.
   \item If $\deg_{G''}((i,j))=3$ and $\{(i,j),(i,j+1)\},\{(i,j),(i+1,j)\},\{(i,j),(i+2,j)\},$ $\{(i,j),(i-1,j)\},\{(i,j),(i-2,j)\}\in E''$, then we delete $(i+1,j)$ from $V''$ and all its incident edges from $E''$. We add vertex $(i+1,j-1)$ to $V''$ and add edges $\{(i,j),(i+1,j-1)\}$ and $\{(i+1,j-1),(i+2,j)\}$ to~$E''$. Furthermore, we we delete $(i-1,j)$ from $V''$ and all its incident edges from $E''$. We add vertex $(i-1,j-1)$ to $V''$ and add edges $\{(i,j),(i-1,j-1)\}$ and $\{(i-1,j-1),(i-2,j)\}$ to~$E''$. This modification is illustrated in \cref{fig:embed3}. Rotated versions of this configuration are modified analogously.
 \end{enumerate}
\begin{figure}[t]
\centering
		\begin{subfigure}[c]{0.3\textwidth}
\begin{center}
\begin{tikzpicture}[scale=.7]
\foreach \i in {1,...,4} {
\draw[gray, dotted] (\i,0.8) -- (\i,4.2);
}
\foreach \i in {1,...,4} {
\draw[gray, dotted] (0.8,\i) -- (4.2,\i);}

      \node[vertex] (v0) at (1,4) {};
      \node[vertex] (v1) at (1,3) {};
      \node[vertex2] (v2) at (1,2) {};
      \node[vertex] (v3) at (2,2) {};
      \node[vertex] (v4) at (3,2) {};
      \node[vertex] (v5) at (4,2) {};
      
      \draw[edge] (v0) -- (v1);
      \draw[edge] (v1) -- (v2);
      \draw[edge] (v2) -- (v3);
      \draw[edge] (v3) -- (v4);
      \draw[edge] (v4) -- (v5);

	  \draw[edge,dashed] (v1)-- (v3);
\foreach \i in {1,...,4} {
\draw[gray, dotted] (\i,0.8-5) -- (\i,4.2-5);
}
\foreach \i in {1,...,4} {
\draw[gray, dotted] (0.8,\i-5) -- (4.2,\i-5);}

      \node[vertex] (v02) at (1,4-5) {};
      \node[vertex] (v12) at (1,3-5) {};
      \node[vertex2] (v22) at (1,2-5) {};
      \node[vertex] (v32) at (2,1-5) {};
      \node[vertex] (v42) at (3,2-5) {};
      \node[vertex] (v52) at (4,2-5) {};
      
      \draw[edge] (v02) -- (v12);
      \draw[edge] (v12) -- (v22);
      \draw[edge] (v22) -- (v32);
      \draw[edge] (v32) -- (v42);
      \draw[edge] (v42) -- (v52);
\end{tikzpicture}
		\subcaption{\centering }
			\label{fig:embed2}
\end{center}
\end{subfigure}
		\begin{subfigure}[c]{0.4\textwidth}
\begin{center}
\begin{tikzpicture}[scale=.7]
\foreach \i in {1,...,7} {
\draw[gray, dotted] (\i,0.8) -- (\i,4.2);
}
\foreach \i in {1,...,4} {
\draw[gray, dotted] (0.8,\i) -- (7.2,\i);}

      \node[vertex] (v0) at (4,4) {};
      \node[vertex] (v1) at (4,3) {};
      \node[vertex2] (v2) at (4,2) {};
      \node[vertex] (v3) at (5,2) {};
      \node[vertex] (v4) at (6,2) {};
      \node[vertex] (v5) at (7,2) {};
      \node[vertex] (v6) at (3,2) {};
      \node[vertex] (v7) at (2,2) {};
      \node[vertex] (v8) at (1,2) {};
      
      \draw[edge] (v0) -- (v1);
      \draw[edge] (v1) -- (v2);
      \draw[edge] (v2) -- (v3);
      \draw[edge] (v3) -- (v4);
      \draw[edge] (v4) -- (v5);
      \draw[edge] (v2) -- (v6);
      \draw[edge] (v6) -- (v7);
      \draw[edge] (v7) -- (v8);
      
      \draw[edge, dashed] (v1) -- (v3);
      \draw[edge, dashed] (v1) -- (v6);

\foreach \i in {1,...,7} {
\draw[gray, dotted] (\i,0.8-5) -- (\i,4.2-5);
}
\foreach \i in {1,...,4} {
\draw[gray, dotted] (0.8,\i-5) -- (7.2,\i-5);}

      \node[vertex] (v02) at (4,4-5) {};
      \node[vertex] (v12) at (4,3-5) {};
      \node[vertex2] (v22) at (4,2-5) {};
      \node[vertex] (v32) at (5,1-5) {};
      \node[vertex] (v42) at (6,2-5) {};
      \node[vertex] (v52) at (7,2-5) {};
      \node[vertex] (v62) at (3,1-5) {};
      \node[vertex] (v72) at (2,2-5) {};
      \node[vertex] (v82) at (1,2-5) {};
      
      \draw[edge] (v02) -- (v12);
      \draw[edge] (v12) -- (v22);
      \draw[edge] (v22) -- (v32);
      \draw[edge] (v32) -- (v42);
      \draw[edge] (v42) -- (v52);
      \draw[edge] (v22) -- (v62);
      \draw[edge] (v62) -- (v72);
      \draw[edge] (v72) -- (v82);
\end{tikzpicture}
		\subcaption{\centering }
			\label{fig:embed3}
\end{center}
\end{subfigure}
\begin{subfigure}[c]{0.2\textwidth}
		\begin{center}
		\begin{tikzpicture}[scale=.7]
\foreach \i in {1,...,3} {
\draw[gray, dotted] (\i,0.8) -- (\i,3.2);
\draw[gray, dotted] (0.8,\i) -- (3.2,\i);
}

      \node[vertex] (v1) at (1,1) {};
      \node[vertex] (v2) at (1,2) {};
      \node[vertex] (v3) at (1,3) {};
      \node[vertex] (v4) at (2,3) {};
      \node[vertex] (v5) at (3,3) {};
      
      \draw[edge] (v1) -- (v2);
      \draw[edge] (v2) -- (v3);
      \draw[edge] (v3) -- (v4);
      \draw[edge] (v4) -- (v5);
      
      \draw[edge, dashed] (v2) -- (v4);

\foreach \i in {1,...,3} {
\draw[gray, dotted] (\i,0.8-4) -- (\i,3.2-4);
\draw[gray, dotted] (0.8,\i-4) -- (3.2,\i-4);
}

      \node[vertex] (v12) at (1,1-4) {};
      \node[vertex] (v22) at (1,2-4) {};
      \node[vertex] (v42) at (2,3-4) {};
      \node[vertex] (v52) at (3,3-4) {};
      
      \draw[edge] (v12) -- (v22);
      \draw[edge] (v22) -- (v42);
      \draw[edge] (v42) -- (v52);
        \phantom{
\foreach \i in {1,...,2} {
\draw[gray, dotted] (\i,0.8) -- (\i,4.2);}
\foreach \i in {1,...,2} {
\draw[gray, dotted] (\i,0.8-5) -- (\i,4.2-5);}
\foreach \i in {1,...,4} {
\draw[gray, dotted] (0.8,\i) -- (3.2,\i);}
\foreach \i in {1,...,4} {
\draw[gray, dotted] (0.8,\i-5) -- (3.2,\i-5);}
        }
\end{tikzpicture}
		\subcaption{\centering }
			\label{fig:embed1}
\end{center}
		\end{subfigure}
	\caption{
Illustration of the modifications described in the proof of \cref{lem:embed}. The situation before the moficication is depiced above, dashed edges show unwanted edges present in an induced subgraph of a diagonal grid graph. The situation after the modification is depicted below.
}
		\label{fig:embed}
\end{figure}
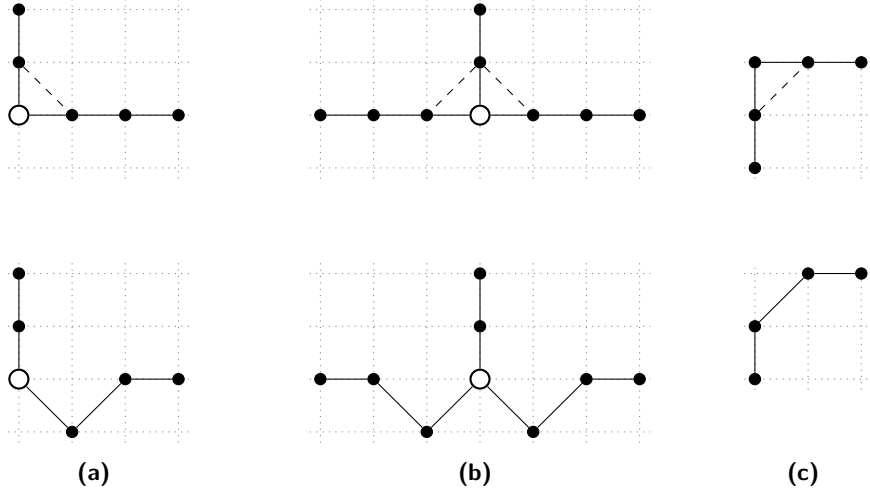
Lastly, whenever a path in $G''$ that corresponds to an edge in $G$ bends at a square angle, we remove the corner vertex and its incident edges and reconnect the path by a diagonal edge. 

More formally, let $(i,j-1),(i,j),(i+1,j)\in V''$ be adjacent vertices in a path in~$G''$ that corresponds to an edge in $G$, then we remove $(i,j)$ from $V''$ and all its incident edges and add the edge $\{(i,j-1),(i+1,j)\}$ to $E''$.
This modification is illustrated in \cref{fig:embed1}. Rotated versions of this configuration are modified analogously.

Now it is easy to see that $G''$ is an induced subgraph of $\dgg{12n+6}{12m+6}$. Furthermore, $G''$ can be computed in polynomial time.
\end{proof}%

Next we argue that we can always embed a cubic planar graph into a diagonal grid
graph in a way that preserves NP-hardness. This is based on the observation that
subdividing an edge of a graph two times increases the size of a maximum
independent set exactly by one.

\begin{observation}[Poljak \cite{poljak1974note}]\label{obs:is}
Let $G=(V,E)$ be a graph. Then for every $\{u,v\}\in E$, the graph $G'=(V\cup\{u',v'\}, (E\setminus\{\{u,v\}\})\cup\{\{u,u'\},\{u',v'\},\{v',v\}\})$ contains an independent set of size~$k+1$ if and only if $G$ contains an independent set of size $k$.
\end{observation}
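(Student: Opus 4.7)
The plan is to prove the stronger quantitative statement $\alpha(G')=\alpha(G)+1$, which immediately yields the stated equivalence since independent sets are closed under taking subsets. The key structural observation is that the four vertices $u,u',v',v$ induce a path $P_4$ in $G'$: the original edge $\{u,v\}$ has been removed and replaced by the length-three path $u\text{--}u'\text{--}v'\text{--}v$. Hence any independent set of $G'$ contains at most two of these four vertices, while any independent set of $G$ contains at most one of $\{u,v\}$, which accounts for the ``$+1$'' in the identity.

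For the forward direction, given an independent set $S\subseteq V$ of $G$ with $|S|=k$, I would extend it to a size-$(k+1)$ independent set of $G'$ by adding exactly one of $u',v'$, chosen so that it has no neighbor in $S$. Concretely: if $u\in S$ (so $v\notin S$ since $\{u,v\}\in E$), add $v'$, whose only neighbors in $G'$ are $u'$ and $v$; if $v\in S$, add $u'$; and if neither $u$ nor $v$ lies in $S$, add $u'$. In each case the two new vertices $u',v'\notin V$ do not introduce a conflict with $S$, and no edge of $G'$ among vertices of $V$ other than $\{u,v\}$ was altered, so the augmented set is independent.

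For the backward direction, let $S'$ be an independent set of $G'$ with $|S'|=k+1$ and set $T:=S'\cap V$. Since $\{u',v'\}\in E(G')$, at most one of $u',v'$ lies in $S'$, and I split into cases. If $u'\in S'$, then $u\notin T$ (as $\{u,u'\}\in E(G')$), so the unique edge of $G$ not present in $G'[V]$ has at most one endpoint in $T$; hence $T$ is independent in $G$ with $|T|=k$. The case $v'\in S'$ is symmetric. If neither $u'$ nor $v'$ lies in $S'$, then $|T|=k+1$, and $T$ is independent in $G'[V]$; the only obstruction to $T$ being independent in $G$ is that both $u,v$ might lie in $T$, in which case deleting one of them gives an independent set of $G$ of size $k$.

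This is a short case analysis with no real obstacle; the only point requiring care is the last case of the backward direction, where one must account for the fact that the edge $\{u,v\}$ is absent in $G'$ and thus both endpoints could legitimately appear in $S'$.
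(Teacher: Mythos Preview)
Your proof is correct and complete; the case analysis is sound, and you correctly handle the one subtle case where both $u$ and $v$ can appear in $S'$ because the edge $\{u,v\}$ is absent in $G'$. Note that the paper does not actually supply its own proof of this observation---it is stated with a citation to Poljak~\cite{poljak1974note} and used as a known fact---so there is no argument in the paper to compare against; your write-up is exactly the standard elementary proof of the identity $\alpha(G')=\alpha(G)+1$.
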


From this observation it follows that if we can guarantee that for every cubic
planar graph there is a subdivision that subdivides every edge an even number of times and that is an induced subgraph of a diagonal grid graph of polynomial size, then we are done.

\begin{lemma}%
\label{lem:embed2}
Let $G=(V,E)$ be a cubic planar graph. Then there is a subdivision of $G$ that is an induced subgraph of $\dgg{n}{m}$ for some $n,m$ with $n\cdot m\in O(|V|^2)$ and where each edge of $G$ is subdivided an even number of times. Furthermore, the subdivision of $G$ can be computed in polynomial time.
\end{lemma}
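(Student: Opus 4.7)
The plan is to modify the subdivision $G''$ produced by Lemma \ref{lem:embed} so that every edge of $G$ ends up subdivided an even number of times, without blowing up the grid beyond a constant factor.

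First, I would apply Lemma \ref{lem:embed} to obtain a subdivision $G''$ of $G$ that is an induced subgraph of $\dgg{n'}{m'}$ with $n'\cdot m' \in O(|V|^2)$. For each edge $e \in E(G)$, let $P_e$ denote the corresponding path in $G''$, let $s_e$ denote its number of internal vertices, and let $b_e$ denote the number of bends of the grid path corresponding to $e$ in $G'$. A direct count from the proof of Lemma \ref{lem:embed} gives $s_e = 12\ell_e - b_e - 1$, where $\ell_e$ is the length of that grid path, so $s_e$ is odd if and only if $b_e$ is even. Because of the scaling factor $12$ used in Lemma \ref{lem:embed}, any two distinct paths $P_e$ and $P_{e'}$ lie at grid-distance at least roughly $10$ apart, which leaves ample room for local modifications of $P_e$ that deviate from the path by only one or two grid cells.

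Second, for each edge $e$ with $s_e$ odd I would apply a \emph{corner-shortcut} at one bend of $P_e$: after the bend modification of Lemma \ref{lem:embed} the path locally reads $\ldots,(i,j{-}2),(i,j{-}1),(i{+}1,j),(i{+}2,j),\ldots$, linked through the diagonal edge $\{(i,j{-}1),(i{+}1,j)\}$; I replace the two consecutive vertices $(i,j{-}1)$ and $(i{+}1,j)$ by the single vertex $(i{+}1,j{-}1)$. Both pairs $\{(i,j{-}2),(i{+}1,j{-}1)\}$ and $\{(i{+}1,j{-}1),(i{+}2,j)\}$ are diagonal edges in $\dgg{n'}{m'}$, and the other grid neighbours of $(i{+}1,j{-}1)$ are either positions vacated by the bend, degree-$2$, or degree-$3$ modifications of Lemma \ref{lem:embed}, or positions too far from $P_e$ to lie on any other path thanks to the factor-$12$ scaling. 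Thus the modified subdivision is still an induced subgraph of $\dgg{n'}{m'}$, and $s_e$ decreases by exactly one, making it even.

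The main obstacle is that the corner-shortcut can only be applied when $P_e$ contains at least one bend, whereas paths with $b_e = 0$ are exactly those with $s_e$ odd as well. To handle this, before invoking Lemma \ref{lem:embed} I would preprocess the Valiant-style grid embedding from Proposition \ref{cor:embedgrid}: for every edge $e \in E(G)$ whose grid path is a straight segment, I reroute it through a nearby empty grid cell so that it acquires at least one bend. Since the grid from Proposition \ref{cor:embedgrid} contains abundant empty cells and the rerouting increases the grid dimensions only by a constant factor, the final grid still has size $O(|V|^2)$, the construction remains polynomial-time, and every path now admits the corner-shortcut whenever its parity needs to be flipped, giving the required subdivision with all $s_e$ even.
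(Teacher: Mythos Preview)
Your approach is a genuine alternative to the paper's, but it has a gap in the handling of bend-free paths.

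The paper, like you, starts from the subdivision produced by Lemma~\ref{lem:embed}. However, instead of \emph{shortening} a path at a bend, the paper \emph{lengthens} it along a straight portion: the factor-$12$ scaling guarantees that every path $P_e$ contains at least seven consecutive collinear edges, and the paper replaces such a seven-edge horizontal (or vertical) segment by a local zig-zag in the diagonal grid that has exactly one more edge. This works uniformly for every path, irrespective of whether it has any bends.

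Your corner-shortcut is sound when a bend is available, and your count $s_e = 12\ell_e - b_e - 1$ is correct. The gap lies entirely in the preprocessing for $b_e=0$. First, rerouting a straight grid path between two fixed endpoints on the same row through ``a nearby empty grid cell'' cannot create exactly one bend: any detour that returns to the same line introduces bends in pairs. This is not fatal (two new bends still leave $b_e$ even and give you something to shortcut), but your phrasing ``at least one bend'' is imprecise. More seriously, you do not argue that the required empty cells exist, that the detour keeps the embedding \emph{induced} in the grid, or that the subsequent endpoint and bend modifications of Lemma~\ref{lem:embed} are unaffected by the reroute; each of these needs at least a further constant scaling and a local adjacency check that you omit. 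These issues can all be repaired, but the paper's approach sidesteps them entirely: the long straight segment is a free byproduct of the $12\times$ scaling already performed, so no additional preprocessing is needed and the parity fix reduces to a single local gadget.
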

\begin{proof}
Let $G=(V,E)$ be a cubic planar graph. By \cref{lem:embed} we know that there are some~$n, m$ with $n\cdot m\in O(|V|^2)$ such that~$G=(V,E)$ is an induced topological minor of~$\dgg{n}{m}$. 
Let $G'=(V',E')$ with $V'\subseteq \mathbb{N}\times\mathbb{N}$ be a subdivision of~$G$ constructed as described in the proof of \cref{lem:embed}.

Recall that every edge $e$ in $G$ is replaced by a path $P_e$ in $G'$. From
\cref{obs:is} it follows that if we can guarantee that all these paths have an
odd number of edges (and hence result from an even number of subdivisions), then $G'$ contains an independent set of size $k+\sum_{e\in E}\lfloor \frac{|E(P_e)|-1}{2}\rfloor$ if and only if $G$ contains an independent of size $k$.
In the following we show how to change the parity of the number of edges of a path~$P_e$ in $G'$ that corresponds to an edge $e$ in $G$.

The number of subdivisions performed in the construction that is described in
the proof of \cref{lem:embed} ensures that each path $P_e$ in $G'$ that corresponds to an edge $e$ in $G$ contains seven consecutive edges that are either all horizontal or all vertical.
Assume that $P_e$ contains an even number of edges and contains horizontal edges $\{(i,j),(i+1,j)\},\{(i+1,j),(i+2,j)\},\{(i+2,j),(i+3,j)\},\{(i+3,j),(i+4,j)\},\{(i+4,j),(i+5,j)\},\{(i+5,j),(i+6,j)\},\{(i+6,j),(i+7,j)\}$.
We remove vertices $(i+2,j), (i+3,j), (i+5,j)$ and all their incident edges. We add vertices $(i+2,j+1), (i+3,j+2),(i+4,j+1),(i+5,j-1)$ and edges $\{(i+1,j),(i+2,j+1)\},\{(i+2,j+1),(i+3,j+2)\},\{(i+3,j+2),(i+4,j+1)\},\{(i+4,j+1),(i+4,j)\},\{(i+4,j),(i+5,j-1)\},\{(i+5,j-1),(i+6,j)\}$. It is easy to check that this reconnects the path and increases the number of edges by one.
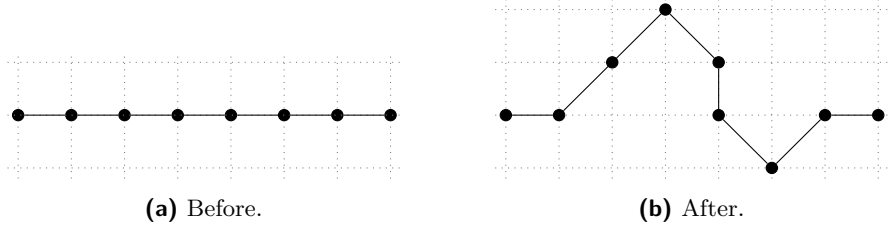
\begin{figure}[t]
\centering
		\begin{subfigure}[c]{0.45\textwidth}
		\begin{center}
		\begin{tikzpicture}[scale=.7]
\foreach \i in {1,...,8} {
\draw[gray, dotted] (\i,0.8) -- (\i,3.2);

      \node[vertex] (v\i) at (\i,2) {};
}
\foreach \i in {1,...,3} {
\draw[gray, dotted] (0.8,\i) -- (8.2,\i);
}
      
      \draw[edge] (v1) -- (v2);
      \draw[edge] (v2) -- (v3);
      \draw[edge] (v3) -- (v4);
      \draw[edge] (v4) -- (v5);
      \draw[edge] (v5) -- (v6);
      \draw[edge] (v6) -- (v7);
      \draw[edge] (v7) -- (v8);

\phantom{
\foreach \i in {1,...,8} {
\draw[gray, dotted] (\i,0.8) -- (\i,4.2);}
\foreach \i in {1,...,4} {
\draw[gray, dotted] (0.8,\i) -- (8.2,\i);
}
}
\end{tikzpicture}
		\subcaption{\centering Before.}
			\label{fig:parity1}
\end{center}
\end{subfigure}
		\begin{subfigure}[c]{0.45\textwidth}
		\begin{center}
		\begin{tikzpicture}[scale=.7]
\foreach \i in {1,...,8} {
\draw[gray, dotted] (\i,0.8) -- (\i,4.2);}
\foreach \i in {1,...,4} {
\draw[gray, dotted] (0.8,\i) -- (8.2,\i);
}

      \node[vertex] (v12) at (1,2) {};
      \node[vertex] (v22) at (2,2) {};
      \node[vertex] (v32) at (3,3) {};
      \node[vertex] (v42) at (4,4) {};
      \node[vertex] (v52) at (5,3) {};
      \node[vertex] (v62) at (5,2) {};
      \node[vertex] (v72) at (6,1) {};
      \node[vertex] (v82) at (7,2) {};
      \node[vertex] (v92) at (8,2) {};
      
      \draw[edge] (v12) -- (v22);
      \draw[edge] (v22) -- (v42);
      \draw[edge] (v42) -- (v52);
      \draw[edge] (v52) -- (v62);
      \draw[edge] (v62) -- (v72);
      \draw[edge] (v72) -- (v82);
      \draw[edge] (v82) -- (v92);
\end{tikzpicture}
		\subcaption{\centering After.}
			\label{fig:parity2}
\end{center}
\end{subfigure}
	\caption{
Illustration of the modification described in the proof of \cref{lem:embed2}.	It shows how to increase the length of an induced path of a diagonal grid graph by one.
}
		\label{fig:pathparity}
\end{figure}
This modification is illustrated in \cref{fig:pathparity}. The vertical version of this configuration is modified analogously.

Using this modification we can easily modify $G'$ in polynomial time in a way that all paths that correspond to edges of $G$ have an odd number of edges.
\end{proof}%

Now, \cref{prop:nphgridis} follows directly from \cref{lem:embed2} and \cref{obs:is}. 

\section{Algorithms}

\label{sec:algos}
In this section,
we present one approximation algorithm (\cref{sec:approxAlg}) and 
two exact fixed-parameter algorithms (\cref{subs:fptk} and \cref{subs:fptdeltanu})  for \textsc{Temporal Matching}.
First, in \cref{sec:approxAlg} we present an $\frac{\Delta}{2\Delta-1}$-approximation algorithm for \textsc{Maximum Temporal Matching}.
Second, in \cref{subs:fptk} we present an FPT-algorithm for the solution size parameter $k$.
Third, in \cref{subs:fptdeltanu} we present an FPT-algorithm for the parameter
combination time window size $\Delta$ and size $\nu$ of a maximum matching in
the underlying graph (which as a parameterization is incomparable to
parameterizing with the solution size $k$).

\subsection{Approximation of \textsc{Maximum Temporal Matching}}
\label{sec:approxAlg}

In this subsection, we present a $\frac{\Delta}{2\Delta-1}$-approximation
algorithm for \textsc{Maximum Temporal Matching}.
Note that for $\Delta=2$ this is a $\frac{2}{3}$-approximation, while for arbitrary constant $\Delta$ this is a $(\frac{1}{2}+\varepsilon)$-approximation,
where $\varepsilon = \frac{1}{2(2\Delta-1)}$ is a constant too.
Specifically, we show the following.
\begin{theorem}%
		\label{th:approxAlg}
	\textsc{Maximum Temporal Matching} admits an $O\left(Tm( \sqrt{n}+\Delta) \right)$-time 
	$\frac{\Delta}{2\Delta-1}$-approximation algorithm.
\end{theorem}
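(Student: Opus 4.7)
\textbf{Proof plan for Theorem~\ref{th:approxAlg}.}
My plan is to prove the theorem by designing and analyzing a sweep algorithm. Initialize $M \gets \emptyset$ and process time slots in order $t=1,2,\dots,T$. At time $t$, let $B_t$ be the set of vertices $\Delta$-blocked by~$M$ (i.e., incident to some $(e,t')\in M$ with $t-\Delta < t' < t$), compute a maximum matching $N_t$ in the subgraph of $G_t$ induced by $V\setminus B_t$ using Micali--Vazirani, and add $\{(e,t) : e\in N_t\}$ to $M$. The construction immediately guarantees that $M$ is a valid $\Delta$-temporal matching. Since each iteration performs one $O(\sqrt{n}m)$ matching computation and uses $O(\Delta)$ bookkeeping per time edge to keep $B_t$ current, the total running time is $O(Tm(\sqrt{n}+\Delta))$, as claimed.

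For the approximation ratio, I would fix an optimum $\Delta$-temporal matching $M^*$ and set up a fractional charging that transfers one unit of weight from every edge of $M^*$ to edges of $M$, then show that each edge of $M$ receives at most $\tfrac{2\Delta-1}{\Delta}$ units in total. The charging rule is: if $(e^*,t^*)\in M^*\cap M$, it charges itself; if $(e^*,t^*)\in M^*\setminus M$ is $\Delta$-blocked by edges of $M$, its weight is distributed to those blockers (splitting across endpoints); and if $(e^*,t^*)\in M^*\setminus M$ is unblocked, then because $N_{t^*}$ is a \emph{maximum} matching among the available edges at time~$t^*$, some edge of $N_{t^*}$ must share an endpoint with $e^*$, which absorbs its weight.

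The heart of the argument is the per-edge cap. Fix $(e,t)\in M$ with endpoints $u,v$. Because $M^*$ is itself $\Delta$-independent, each of $u$ and $v$ hosts at most two edges of $M^*$ within the $(2\Delta-1)$-window $[t-\Delta+1, t+\Delta-1]$, and if exactly two occur at the same endpoint they are pinned to distance exactly $\Delta$ apart (so they "straddle" $t$). Combined with the maximality of $N_t$ (which bounds the contribution from edges of $M^*$ at time $t$ itself) and a fractional distribution that sends only weight $\tfrac{1}{\Delta}$ along each "blocker slot" around $(e,t)$, the per-edge incoming charge is bounded by $\tfrac{2\Delta-1}{\Delta}$. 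Summing over $(e,t)\in M$ gives $|M^*|\le \tfrac{2\Delta-1}{\Delta}\,|M|$, which is equivalent to $|M|\ge \tfrac{\Delta}{2\Delta-1}\,|M^*|$.

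The main obstacle is tightening the charging enough to hit $\tfrac{\Delta}{2\Delta-1}$: a crude integer count that assigns one unit per blocker yields only a ratio around $\tfrac{1}{5}$ for $\Delta=2$. The refinement requires a fractional distribution exploiting both the rigidity of $M^*$ (two OPT edges at a common endpoint in a $(2\Delta-1)$-window are forced into specific positions) and the maximum-matching property of $N_t$ at the critical time $t$. The technical crux will be a careful case analysis on how many of the $2\Delta-1$ slots around $(e,t)$ host OPT edges that charge to it, verifying the tight per-edge cap in each configuration.
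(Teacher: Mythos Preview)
Your sweep algorithm does not achieve the ratio $\tfrac{\Delta}{2\Delta-1}$; it only achieves the trivial $\tfrac{1}{2}$. Here is a counterexample for $\Delta=2$. Take four vertices $a,b,c,d$, lifetime $T=2$, with $E_1=\{\{a,b\}\}$ and $E_2=\{\{a,c\},\{b,d\}\}$. At time~$1$ the sweep is forced to pick $(\{a,b\},1)$; at time~$2$ both $a$ and $b$ are blocked, so the induced subgraph on $\{c,d\}$ has no edges and $N_2=\emptyset$. Your algorithm outputs one time edge, while $M^*=\{(\{a,c\},2),(\{b,d\},2)\}$ has size two. The ratio is $\tfrac{1}{2}<\tfrac{2}{3}$. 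The same two-layer gadget generalizes to any $\Delta$.

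This example also pinpoints where the charging breaks. Both OPT edges at time~$2$ are blocked \emph{only} by $(\{a,b\},1)$, so each must send its full unit of weight there; the single $M$-edge receives charge~$2$, exceeding your claimed cap of $\tfrac{2\Delta-1}{\Delta}=\tfrac{3}{2}$. The ``$\tfrac{1}{\Delta}$ per blocker slot'' rule has no justification when an OPT edge has exactly one blocker; your argument implicitly assumes blocked OPT edges can spread their weight over $\Delta$ blockers, but a greedy run gives no such guarantee. The maximality of $N_t$ only controls OPT edges that are \emph{unblocked} at their own time slot, not ones that were pre-emptively killed by an earlier greedy choice.

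The paper's algorithm is fundamentally different and sidesteps this obstruction. It does not sweep greedily; instead it enumerates all $2\Delta-1$ \emph{$\Delta$-templates} (maximal collections of $\Delta$-windows spaced exactly $\Delta-1$ apart), solves each window optimally and independently, and returns the best template's matching. Because every time slot is covered by exactly $\Delta$ of the $2\Delta-1$ templates, a one-line averaging argument over templates gives
$(2\Delta-1)\,|M|\ge\sum_{\mathcal S}|M^{\mathcal S}|\ge\Delta\,|M^*|$.
The key point is that within any single $\Delta$-window one can compute a genuine optimum (it reduces to a static maximum matching), and the gaps between windows are wide enough that these optima are mutually compatible. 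No charging is needed.
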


The main idea of our approximation algorithm is to compute maximum matchings for slices of size $\Delta$ of the input temporal graph 
that are sufficiently far apart from each other such that they do not interfere with each other, and hence are computable in polynomial time. 
Then we greedily fill up the gaps. 
We try out certain combinations of non-interfering slices of size $\Delta$ in a systematic way and 
then take the largest $\Delta$-matching that was found in this way. 
With some counting arguments we can show that this achieves the desired approximation ratio. 
In the following we describe and prove this claim formally.

We first introduce some additional notation and terminology. 
Recall that $\MaxTempMatchingSize_{\Delta}(\G)$ denotes the size of a maximum $\Delta$-temporal matching in $\G$.
Let~$\Delta$ and~$T$ be fixed natural numbers such that $\Delta \leq T$.
For every time slot $t\in [T-\Delta +1]$, we define the \emph{$\Delta$-window} $W_{t}$ as the interval $[t,t+\Delta -1]$ 
of length $\Delta$. We use this to formalize slices of size $\Delta$ of a temporal graph.
An interval of length at most $\Delta-1$ that either starts at slot 1, or ends at slot $T$ is called
a \textit{partial $\Delta$-window (with respect to lifetime $T$)}. 
For the sake of brevity, we write \textit{partial $\Delta$-window}, when the lifetime~$T$ is clear from the context.
The \emph{distance} between two disjoint intervals $[a_1,b_1]$ and $[a_2,b_2]$ with $b_1 < a_2$ is $a_2 - b_1 - 1$.

A \textit{$\Delta$-template (with respect to lifetime $T$)} is a maximal family $\mathcal{S}$ of $\Delta$-windows or 
partial $\Delta$-windows in the interval $[T]$ such that any two consecutive elements in $\mathcal{S}$ are at distance
exactly $\Delta-1$ from each other. 
Let $\mathS$ be a $\Delta$-template. A $\Delta$-temporal matching $M^\mathS$ in $\G = (G,\lambda)$ is called
a $\Delta$-temporal matching \emph{with respect to $\Delta$-template $\mathS$} if $M^\mathS$ has the maximum possible 
number of edges in every interval $W \in \mathS$, i.e.\ $\big|M^\mathS |_W\big| = \MaxTempMatchingSize_{\Delta}(\G|_W)$ for every $W \in \mathS$.

Now we are ready to present and analyze our $\frac{\Delta}{2\Delta-1}$-approximation algorithm, see \cref{alg:slidingTemplate}.
The idea of the algorithm is simple: for every $\Delta$-template $\mathS$ compute 
a $\Delta$-temporal matching $M^{\mathS}$ with respect to $\mathS$ and 
among all of the computed $\Delta$-temporal matchings return a matching of the maximum cardinality.
\begin{algorithm2e}[t]

	$M \gets \emptyset$.\;
	\ForEach(\label{line:approx-for}){$\Delta$-template $\mathS$}{
		Compute a $\Delta$-temporal matching $M^{\mathS}$ with respect to $\mathS$. \label{line:TemplateTM}\;
		\lIf{$|M^\mathS| > |M|$}{
			$M \gets M^\mathS$.
		}
	}
	\KwRet{$M$.}
	\caption{$\frac{\Delta}{2\Delta-1}$-Approximation Algorithm %
	(\cref{th:approxAlg}).}  
\label[algorithm]{alg:slidingTemplate}  
\end{algorithm2e}
The notions of $\Delta$-window, partial $\Delta$-window, and $\Delta$-template 
are illustrated in  \cref{fig:DeltaTemplate}.
A time slot $t$ is \textit{covered} by a $\Delta$-template $\mathS$ if $t$ belongs to an interval of $\mathS$.
\begin{figure}[t]
	\centering
	\includegraphics[width=\linewidth]{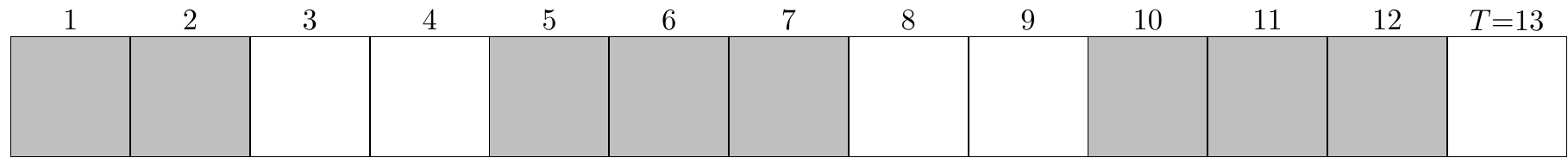}  
	\caption{The gray slots form the intervals of a $\Delta$-template, where $\Delta=3$. Interval $[1,2]$ is a partial $\Delta$-window.
	Intervals $[5,7]$ and $[10,12]$ are $\Delta$-windows.}
	\label{fig:DeltaTemplate}
\end{figure}
We show the following properties of $\Delta$-templates which we need to prove the approximation ratio of our algorithm.

\begin{lemma}
	\label{lem:DeltaTemplates}
	Let $\Delta$ and $T$ be natural numbers such that $\Delta \leq T$. Then
	\begin{enumerate}
		\item[(1)] there are exactly $2\Delta - 1$ different $\Delta$-templates with respect to lifetime $T$;
		\item[(2)] every time slot in $[T]$ is covered by exactly $\Delta$ different $\Delta$-templates.
	\end{enumerate} 
\end{lemma}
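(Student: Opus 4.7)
The plan is to set up a bijection between $\Delta$-templates and a ``phase'' parameter $\phi \in \{1, 2, \ldots, 2\Delta - 1\}$; both assertions will then follow by direct counting. For each $\phi$, define the periodic covered set $C_\phi := \{t \in \mathbb{Z} : (t - \phi) \bmod (2\Delta - 1) \in \{0, 1, \ldots, \Delta - 1\}\}$, which consists of length-$\Delta$ blocks of consecutive integers separated by gaps of length $\Delta - 1$. Let $\mathcal{S}_\phi$ be the family of maximal runs of consecutive integers in $C_\phi \cap [1, T]$; each such run is either an untruncated block (a full $\Delta$-window) or a block truncated at slot $1$ or at slot $T$ (a partial $\Delta$-window of length in $\{1, \ldots, \Delta - 1\}$). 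Consecutive runs inherit distance $\Delta - 1$ from the block structure, and $\mathcal{S}_\phi$ is maximal because any additionally inserted interval would lie strictly inside a gap and so could neither start at $1$ nor end at $T$.

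For part~(1) I will show that $\phi \mapsto \mathcal{S}_\phi$ is a bijection onto the set of all $\Delta$-templates. Injectivity follows because $\phi \bmod (2\Delta - 1)$ is recoverable from any single interval of $\mathcal{S}_\phi$: for a full window $[a, a + \Delta - 1]$ or a right-partial $[a, T]$ one has $\phi \equiv a$, while for a left-partial $[1, b]$ one has $\phi \equiv b - \Delta + 1$. Since $\{1, \ldots, 2\Delta - 1\}$ is a complete residue system modulo $2\Delta - 1$, distinct phases yield distinct templates. For surjectivity, any $\Delta$-template $\mathcal{S}$ determines a unique $\phi$ via any of its intervals; the required gap length together with maximality then force $\mathcal{S} = \mathcal{S}_\phi$. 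This gives exactly $2\Delta - 1$ templates.

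For part~(2), a slot $t \in [T]$ is covered by $\mathcal{S}_\phi$ precisely when $t \in C_\phi$, i.e.\ when $\phi \equiv t - j \pmod{2\Delta - 1}$ for some $j \in \{0, 1, \ldots, \Delta - 1\}$. Each of the $\Delta$ choices of $j$ determines exactly one $\phi$ in the residue system $\{1, \ldots, 2\Delta - 1\}$, so $t$ is covered by exactly $\Delta$ templates.

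The main obstacle I anticipate is a careful treatment of the boundary partial windows in small-lifetime regimes, for instance $T = \Delta$ or $T$ only slightly larger: one must verify that the truncation of a block at slot $1$ or slot $T$ always yields a well-defined partial window of length in $\{1, \ldots, \Delta - 1\}$, that every phase still produces a non-empty template (which is where the assumption $T \geq \Delta$ is used), and that distinct phases remain distinguishable even when $[1, T]$ does not contain a full period of length $2\Delta - 1$. These points all fall out of a short case analysis on the block structure of $C_\phi$, but they are the steps that require more than symbol-pushing.
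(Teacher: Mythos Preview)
Your proposal is correct and follows essentially the same approach as the paper: both parametrize the $\Delta$-templates by $2\Delta-1$ cyclic shifts of a periodic block pattern of period $2\Delta-1$, and both derive part~(2) by observing that each slot is hit by exactly $\Delta$ of these shifts. The only difference is organizational: the paper counts templates for part~(1) by enumerating the possible leftmost intervals ($\Delta-1$ partial windows starting at slot~$1$ plus $\Delta$ windows starting at slots $1,\ldots,\Delta$) and then uses the shifting picture only for part~(2), whereas you use the phase parameter $\phi$ uniformly for both parts, which makes the modular-arithmetic count in part~(2) slightly more explicit.
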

\begin{proof}
	To prove (1), we first observe that a $\Delta$-template $\mathS$ is uniquely determined by its leftmost interval.
	Indeed, by fixing the leftmost interval of $\mathS$, by definition, the subsequent intervals of $\mathS$ are located 
	in $[T]$ uniformly at distance exactly $\Delta-1$ from each other.
	Now, the maximality of $\mathS$ implies that the first interval in $\mathS$ is either a partial $\Delta$-window
	that starts at time slot 1 or a (possibly partial) $\Delta$-window that starts in one of the first $\Delta$ time slots of $[T]$. 
	Since there are $\Delta-1$ intervals of the first type and $\Delta$ intervals of the second type,
	we conclude that there are exactly $2\Delta - 1$ different $\Delta$-templates with respect to lifetime~$T$.
	
	To prove (2), we note that
	all $\Delta$-templates can be successively obtained from the $\Delta$-template $\mathS$ whose first interval
	is the single-slot partial $\Delta$-window $[1]$ by shifting by one time slot to the right all the intervals of 
	the current $\Delta$-template 
	(in each shift we augment the leftmost interval if it was a partial $\Delta$-window and truncate the rightmost
	interval if it covered the last time slot $T$).
	It is easy to see that every time slot will be covered in exactly~$\Delta$ of $2\Delta-1$ shifting iterations.
\end{proof}

By definition, for any two distinct intervals $W_1, W_2$ in a $\Delta$-template $\mathS$ and for any two
time slots $t_1 \in W_1$ and $t_2 \in W_2$ we have $|t_1-t_2| > \Delta$, which implies that no two time edges of $\G$ that appear in time slots of different intervals of $\mathS$ are in conflict.
This observation together with the fact that every interval in $\mathS$ is of length at most $\Delta$ imply that
a $\Delta$-temporal matching with respect to $\mathS$ can be computed in polynomial time by computing a
maximum $\Delta$-temporal matching in $\G|_{W}$ for every $W \in \mathS$ and then taking the union of these
matchings\footnote{The obtained $\Delta$-temporal matching can further be extended greedily to a maximal $\Delta$-temporal matching.}.
Since every $\Delta$-template has $O\left( \frac{T}{\Delta} \right)$ intervals and,
a maximum $\Delta$-temporal matchings in $\G|_{W}$, $W \in \mathS$ can be computed in $O(m(\sqrt{n} + \Delta))$ time,
which follows from \cref{obs:DeltaEqT},
we conclude that a $\Delta$-temporal matching with respect to $\mathS$ can be computed in 
$O\left(Tm \left(\frac{\sqrt{n}}{\Delta} + 1 \right)\right)$ time.

\begin{lemma}
	\label{lem:approxAlg}
	\cref{alg:slidingTemplate} is an $O\left(Tm( \sqrt{n}+\Delta) \right)$-time 
	$\frac{\Delta}{2\Delta-1}$-approximation algorithm for \textsc{Maximum Temporal Matching}.
\end{lemma}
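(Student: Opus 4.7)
The plan is to verify separately the running time and the approximation ratio. For the running time, the outer loop iterates over all $\Delta$-templates, of which there are $2\Delta-1$ by \cref{lem:DeltaTemplates}(1). The discussion preceding the lemma statement already shows that a $\Delta$-temporal matching with respect to a given template can be computed in $O(Tm(\sqrt{n}/\Delta+1))$ time, so summing over all templates yields the claimed $O(Tm(\sqrt{n}+\Delta))$ bound.

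For the approximation ratio, let $M^*$ be a maximum $\Delta$-temporal matching of $\G$, so that $|M^*| = \MaxTempMatchingSize_{\Delta}(\G)$. The core step is a double-counting argument showing that
\[
\sum_{\mathS} \bigl|M^{\mathS}\bigr| \;\geq\; \Delta\cdot |M^*|,
\]
where the sum ranges over all $2\Delta-1$ templates; by averaging, some template $\mathS$ then satisfies $\bigl|M^{\mathS}\bigr| \geq \tfrac{\Delta}{2\Delta-1}|M^*|$, and since \cref{alg:slidingTemplate} returns the largest matching found, it returns at least this many edges. To establish the inequality I would fix a template $\mathS$ and an interval $W\in \mathS$. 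The restriction $M^*|_W$ (the time edges of $M^*$ whose time slot lies in $W$) is itself a $\Delta$-temporal matching of $\G|_W$, and since $M^{\mathS}$ is defined so as to contain a maximum-cardinality $\Delta$-temporal matching inside every interval of $\mathS$, we obtain $\bigl|M^{\mathS}|_W\bigr| \geq \bigl|M^*|_W\bigr|$. The intervals of a single template are pairwise disjoint (consecutive ones lie at distance exactly $\Delta-1$), so summing over $W\in \mathS$ yields
\[
\bigl|M^{\mathS}\bigr| \;\geq\; \sum_{W\in \mathS} \bigl|M^{\mathS}|_W\bigr| \;\geq\; \sum_{W\in \mathS} \bigl|M^*|_W\bigr|.
\]
Summing over all templates and swapping the order of summation, every time edge $(e,t)\in M^*$ is counted once for each template whose intervals cover time slot $t$, which by \cref{lem:DeltaTemplates}(2) is exactly $\Delta$. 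Hence $\sum_{\mathS} \bigl|M^{\mathS}\bigr| \geq \Delta\cdot |M^*|$, which yields the approximation guarantee.

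The main conceptual point, and the reason the ratio comes out to $\tfrac{\Delta}{2\Delta-1}$ rather than the trivial $\tfrac{1}{2}$, is the mismatch between the two halves of \cref{lem:DeltaTemplates}: there are $2\Delta-1$ templates in total but each individual time slot is covered by only $\Delta$ of them, leaving slack in the averaging step. The other step that requires a moment of attention is the disjointness of the intervals within a single template, which is precisely what allows the local lower bounds $\bigl|M^{\mathS}|_W\bigr| \geq \bigl|M^*|_W\bigr|$ to be summed into a global bound on $\bigl|M^{\mathS}\bigr|$ without any double counting across intervals.
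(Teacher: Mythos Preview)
Your proposal is correct and follows essentially the same approach as the paper: both arguments use \cref{lem:DeltaTemplates}(1) for the running-time bound and the same double-counting/averaging argument based on \cref{lem:DeltaTemplates}(2) for the approximation ratio, exploiting that $M^{\mathS}$ is maximum within each window of the template.
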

\begin{proof}
	Let $\G = (G, \lambda)$ be an arbitrary temporal graph of lifetime $T$ and $\Delta$ be a natural number such that $\Delta \leq T$.
	Let also $M^*$ be a maximum $\Delta$-temporal matching of $\G$.
	
	We show that, given the instance $(\G, \Delta)$ of \textsc{Maximum Temporal Matching}, \cref{alg:slidingTemplate} produces in time 
	$O\left(Tm( \sqrt{n}+\Delta) \right)$
	a $\Delta$-temporal matching $M$ of size at least $\frac{\Delta}{2\Delta-1}|M^*|$, where $n$ and $m$ are the number
	of vertices and the number of edges in the underlying graph $G$, respectively.
	
	Clearly, the algorithm outputs a feasible solution as $M$ is a $\Delta$-temporal matching with respect to
	some $\Delta$-template. We show next that $M$ is the desired approximate solution.
	As in the pseudocode of \cref{alg:slidingTemplate}, for a $\Delta$-template $\mathcal{S}$ we denote by 
	$M^{\mathS}$ the $\Delta$-temporal matching with respect to $\mathS$ computed in Line \ref{line:TemplateTM} of
	 \cref{alg:slidingTemplate}. 
	Let $\mathfrak{S}$ be the family of all $\Delta$-templates with respect to lifetime $T$, and
	let $\mathS' \in \mathfrak{S}$ be a $\Delta$-template such that $M = M^{\mathS'}$.
	It follows from the algorithm that $|M^{\mathS'}| \geq |M^{S}|$ for every $S \in \mathfrak{S}$.
	By definition, for every $\mathS \in \mathfrak{S}$ and for every interval $W \in \mathS$ we have
	$
		\sum_{t \in W} |M^{\mathS}_t| \geq \sum_{t \in W} |M_t^*|,
	$
	where $M_t =  M \cap E_t$.
	Hence
	
	$$
		|M^{\mathS}| \geq \sum_{W \in \mathS} \sum_{t \in W} |M^{\mathS}_t| \geq \sum_{W \in \mathS} \sum_{t \in W} |M_t^*|.
	$$
	
	\noindent
	Using the above inequalities and \cref{lem:DeltaTemplates} we derive
	\begin{align*}
		(2\Delta-1) |M^{\mathS'}|
		&\geq
		\sum_{\mathS \in \mathfrak{S}}|M^{\mathS}|\\ &\geq
		\sum_{\mathS \in \mathfrak{S}}\sum_{W \in \mathS} \sum_{t \in W} |M^{\mathS}_t| \geq 
		\sum_{\mathS \in \mathfrak{S}}\sum_{W \in \mathS} \sum_{t \in W} |M_t^*| = 
		\Delta \sum_{t=1}^{T} |M_t^*| = \Delta |M^*|,
	\end{align*}
	which implies the $|M| = |M^{\mathS'}| \geq \frac{\Delta}{2\Delta-1}|M^*|$.
	
	Now we analyze the time complexity of the algorithm. By \cref{lem:DeltaTemplates} there are exactly $2\Delta-1$
	different $\Delta$-templates, and therefore the for-loop in Line \ref{line:approx-for} of \cref{alg:slidingTemplate} performs exactly $2\Delta-1$
	iterations. At every iteration the algorithm computes a $\Delta$-temporal matching with respect to
	a $\Delta$-template, which, as we discussed, can be done in $O\left(Tm \left(\frac{\sqrt{n}}{\Delta} + 1 \right)\right)$ time. 
	Altogether, the total time complexity is $O\left(Tm (\sqrt{n}+\Delta \right))$, as claimed.
\end{proof}

We remark that our analysis ignores the fact that the algorithm may add 
time edges from the gaps between the $\Delta$-windows 
defined by the template to the matching 
if they are not in conflict with any other edge in the matching. 
Hence, there is potential room for improvement.
However, %
our analysis of the approximation factor of \cref{alg:slidingTemplate} is tight for $\Delta = 2$.
Namely, there exists a temporal graph $\G$ (see \cref{fig:2/3Approximation}) such that 
on the instance $(\G,2)$ our algorithm (in the worst case) finds
a 2-temporal matching of size two, while the size of a maximum 2-temporal matching in $\G$ is three. 
In this example any improvement
of the algorithm that utilizes the gaps between the $\Delta$-windows would not lead to a better performance.

\begin{figure}[t]
	\centering
\begin{tikzpicture}
      \node[vertex] (v1) at (0,0) {};
      \node[vertex] (v2) at (3,0) {};
      \node[vertex] (v3) at (6,0) {};
      \node[vertex] (v4) at (9,0) {};
      \node[vertex] (v5) at (12,0) {};
      \draw[edge] (v1) -- node[above] {\footnotesize $e_{1}$} node[below] {\footnotesize $\lambda(e_{1})=\{2\}$} (v2);
      \draw[edge] (v2) -- node[above] {\footnotesize $e_{2}$} node[below] {\footnotesize $\lambda(e_{2})=\{1,3\}$} (v3);
      \draw[edge] (v3) -- node[above] {\footnotesize $e_{3}$} node[below] {\footnotesize $\lambda(e_{3})=\{1\}$} (v4);
      \draw[edge] (v4) -- node[above] {\footnotesize $e_{4}$} node[below] {\footnotesize $\lambda(e_{4})=\{2\}$} (v5);
\end{tikzpicture}
    	\caption{A temporal graph witnessing that the analysis of \cref{alg:slidingTemplate} is tight for $\Delta=2$.}
    	\label{fig:2/3Approximation}
\end{figure}
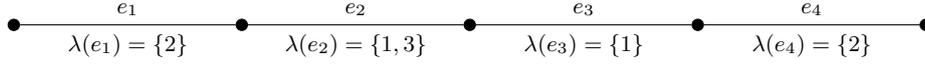

\subsection{Fixed-parameter tractability for the parameter solution size} \label{subs:fptk}

In this subsection, we provide a fixed-parameter algorithm 
for \textsc{Temporal Matching} parameterized by the solution size $k$.
More specifically, we provide a linear-time algorithm for a fixed solution size $k$.
Formally, the main result of this subsection is to show the following.
\begin{theorem}%
	\label{thm:fpt-for-k}
	\textsc{Temporal Matching} can be solved in $k^{O(k)}\cdot |\G|$ time.
\end{theorem}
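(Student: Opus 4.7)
I plan to prove Theorem~\ref{thm:fpt-for-k} via a bounded-search-tree algorithm of depth at most $k$ and branching factor $k^{O(1)}$ per node, yielding a search tree of $k^{O(k)}$ leaves with $O(|\mathcal G|)$ work per node.

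The central branching lemma is the following: let $(e_0,t_0)$ be the time edge of $\mathcal G$ with smallest time stamp (ties broken arbitrarily) and write $e_0=\{u,v\}$. Any $\Delta$-temporal matching $M$ with $|M|\ge k$ may, without loss of generality, be taken to contain either $(e_0,t_0)$ itself or a time edge with an endpoint in $\{u,v\}$ and time in $[t_0,t_0+\Delta-1]$; otherwise $(e_0,t_0)$ is $\Delta$-independent from every time edge of $M$, and so $M\cup\{(e_0,t_0)\}$ is a $\Delta$-temporal matching of size $\ge k$ containing $(e_0,t_0)$. Accordingly, the algorithm branches over all such candidate time edges; in each branch it adds the candidate to the current matching, deletes every time edge in conflict with it from the residual graph (in $O(|\mathcal G|)$ time), and recurses with budget $k-1$. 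Since each recursive call strictly decreases the budget, the recursion depth is at most $k$.

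The main obstacle is bounding the branching factor by $k^{O(1)}$, since a priori the number of candidates can be as large as $\Theta(\Delta\cdot(\deg_G(u)+\deg_G(v)))$, which is not a function of $k$ alone. I plan to control it via two ingredients. First, a preprocessing reduction applied before each branching step: if some vertex $w$ admits $\ge k$ pairwise $\Delta$-independent incident time edges --- detected by a greedy left-to-right scan of $w$'s incident labels in $O(|\mathcal G|)$ time --- return \emph{yes} immediately, since these time edges pairwise share only $w$ and are $\Delta$-apart in time, hence form a $\Delta$-temporal matching through $w$. Second, an exchange argument: among the candidate time edges at $u$ in the window $[t_0,t_0+\Delta-1]$ (which pairwise conflict, so at most one enters the matching), I would show that only $O(k)$ representatives need to be tried, the remainder being absorbed by equivalent choices after pruning to $O(k)$ useful partner vertices of $u$. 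I anticipate that this second ingredient will be the most delicate step of the proof and may require a more refined reduction rule or matroid-theoretic tools in the spirit of~\cref{subs:fptdeltanu}. Once established, the combination yields a search tree of $k^{O(k)}$ leaves, each processed in $O(|\mathcal G|)$ time, giving the claimed total running time of $k^{O(k)}\cdot|\mathcal G|$.
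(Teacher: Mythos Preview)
Your overall architecture---a depth-$k$ search tree with branching factor $k^{O(1)}$ and $O(|\mathcal G|)$ work per node---matches the paper's approach exactly, and your maximality argument (if nothing in $M$ conflicts with $(e_0,t_0)$, add it) is correct. The genuine gap is precisely the one you flag yourself: you have not shown how to cut the branching factor down to $k^{O(1)}$, and your proposed tools do not suffice.

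Your preprocessing rule (``if some vertex $w$ has $\ge k$ pairwise $\Delta$-independent incident time edges, return \emph{yes}'') is sound, but it does not bound what you need. Inside the window $[t_0,t_0+\Delta-1]$, all time edges through $u$ are pairwise in conflict, so your rule says nothing about how many there are; $u$ may still have $\Theta(n)$ distinct neighbours appearing in that window. Your ``exchange argument'' would have to show that only $O(k)$ of these neighbours can possibly be blocked by the rest of an optimal matching, and for that you need a bound on how many time edges any $\Delta$-temporal matching can place in two consecutive $\Delta$-windows. Your rule gives no such global bound.

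The paper fills this gap with a different and much simpler preprocessing step: before branching, test whether the underlying static graph $G$ has a matching of size $\ge k$ (linear time via $k$ augmenting-path searches). If so, output \emph{yes}; otherwise $\nu(G)<k$. Now any $\Delta$-temporal matching places at most $\nu<k$ time edges in each $\Delta$-window, hence at most $4\nu$ vertices are $\Delta$-blocked in the first window by edges of the first two windows. Combined with the easy observation that within the first $\Delta$-window one may always replace a time edge by the \emph{earliest} appearance of the same underlying edge, this yields a kernel of $O(\nu^2)=O(k^2)$ candidate time edges for the first $\Delta$-window (for each vertex of a maximum matching of the first-window underlying graph, keep the $4\nu+1$ earliest incident first-appearances). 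Branching over this kernel gives the claimed $k^{O(k)}\cdot|\mathcal G|$ bound. Once you add the $\nu<k$ check, your exchange argument goes through essentially as the paper's does; without it, it does not.
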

We prove \cref{thm:fpt-for-k} in the remainder of this section.
Due to Baste~et~al.~\cite{baste2018temporal} it is already known that \textsc{Temporal Matching} is fixed-parameter tractable 
when parameterized by the solution size~$k$ \emph{and}~$\Delta$. %
In comparison to the algorithm of Baste~et~al.~\cite{baste2018temporal}
the %
running time of our algorithm is faster and independent of~$\Delta$,
hence improving their result from a parameterized classification standpoint.

The rough idea of our algorithm is the following. 
We develop a preprocessing procedure that reduces the number of time edges of the first $\Delta$-window. 
After applying this procedure, the number of time edges in the first $\Delta$-window is bounded in a function of the solution size parameter $k$.
Then, a search tree algorithm branches on which of the time edges from the first $\Delta$-window is in the solution and solve the remaining part recursively. %

Next, we describe the preprocessing procedure more precisely. 
Referring to kernelization algorithms, we call the result of this procedure \emph{kernel for the first $\Delta$-window}. 
If we count naively the number of $\Delta$-temporal matchings 
in the first $\Delta$-window of a temporal graph, 
then this number clearly depends on $\Delta$.
 This is too large for \cref{thm:fpt-for-k}.
A key observation to overcome this obstacle is that if we look at an edge appearance of a $\Delta$-temporal matching
which comes from the first $\Delta$-window, then 
we can exchange it with the first appearance of the edge.
\begin{lemma}%
	\label{lem:first-edge}
	Let $(G,\lambda)$ be a temporal graph and let $M$ be a $\Delta$-temporal matching in $(G,\lambda)$.
	Let also $e \in E_{t_1} \cap E_{t_2}$, where $t_1<t_2 \leq \Delta$.
	If $(e,t_1) \not \in M$ and $(e,t_2) \in M$,
	then $M' = (M \setminus \{ (e,t_2) \}) \cup \{ (e,t_1) \}$ is a $\Delta$-temporal matching in $(G,\lambda)$.
\end{lemma}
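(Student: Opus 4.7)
The plan is to verify directly that $M'$ satisfies the $\Delta$-independence property of a temporal matching. Since $M' \setminus \{(e,t_1)\} = M \setminus \{(e,t_2)\} \subseteq M$ is already a set of pairwise $\Delta$-independent time edges, it suffices to check that the newly added time edge $(e,t_1)$ is $\Delta$-independent from every other time edge $(e',t') \in M \setminus \{(e,t_2)\}$.

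I would split into two cases according to whether $e$ and $e'$ share an endpoint. If $e \cap e' = \emptyset$, then $(e,t_1)$ and $(e',t')$ are trivially $\Delta$-independent and we are done. The interesting case is $e \cap e' \neq \emptyset$. Here I would exploit the fact that $(e,t_2)$ and $(e',t')$ are both in $M$ and therefore $\Delta$-independent, which forces $|t_2 - t'| \geq \Delta$. The key observation is then that $t_2 \leq \Delta$ together with $t' \geq 1$ rules out the case $t_2 - t' \geq \Delta$ (this would require $t' \leq t_2 - \Delta \leq 0$), so we must have $t' - t_2 \geq \Delta$. Consequently $t' \geq t_2 + \Delta > t_1 + \Delta$, which gives $|t' - t_1| \geq \Delta$, and so $(e,t_1)$ and $(e',t')$ are indeed $\Delta$-independent.

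There is no real obstacle here; the lemma is essentially a one-line observation once one notices that every conflicting partner of $(e,t_2)$ in $M$ must sit strictly to the right of the first $\Delta$-window, so shifting the label from $t_2$ down to $t_1$ can only \emph{increase} the temporal distance to such a partner. The only point where a small care is needed is the direction of the inequality $|t_2 - t'| \geq \Delta$, which is why I would explicitly rule out the possibility $t' < t_2$ using $t_2 \leq \Delta$.
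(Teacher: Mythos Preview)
Your proposal is correct and follows essentially the same approach as the paper's proof: both rest on the observation that, since $t_2 \leq \Delta$, any time edge in $M \setminus \{(e,t_2)\}$ sharing an endpoint with $e$ must lie at a time slot $t' \geq t_2 + \Delta$, so replacing $t_2$ by the smaller $t_1$ only increases the temporal distance. The paper compresses this into a single sentence, whereas you spell out the case distinction and the direction of the inequality explicitly, but the underlying argument is identical.
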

\begin{proof}
	The lemma follows from the observation that
	since $t_2 \leq \Delta$,
	no time edge $(e,t)$, $t < t_2$,
	is in conflict with any time edge in $M \setminus \{(e,t_2)\}$.
\end{proof}

We use \cref{lem:first-edge} to construct a small set $K$ of time edges 
from the first $\Delta$-window 
such that there exists a maximum $\Delta$-temporal matching~$M$ in $(G, \lambda)$
with the property that the restriction of $M$ to the first $\Delta$-window is contained in $K$.
\begin{definition}[Kernelization for the First $\Delta$-Window]
	Let $\Delta$ be a natural number and let $\G$ be a temporal graph.
	We call a set $K$ of time edges of $\G|_{[1,\Delta]}$ a \emph{kernel for the first $\Delta$-window of $\G$} 
	if %
	there exists a maximum $\Delta$-temporal matching $M$ in $\G$
	with~$M|_{[1,\Delta]} \subseteq K$.
\end{definition}
Informally, the idea for computing the kernel for the first $\Delta$-window is to first select vertices that are suitable to be matched. Then, for each of these vertices, we select the earliest appearance of a sufficiently large number of incident time edges, where each of these time edges corresponds to a different edge of the underlying graph. We show that we can do this in a way that the number of selected time edges can be bounded in the size $\nu$ of a maximum matching of the underlying graph $G$.
Formally, we aim at proving the following lemma.
\begin{lemma}%
	\label{lem:first-kernel}
	Given a natural number $\Delta$ and a temporal graph $\G = (G,\lambda)$,  
	we can compute in $O(\nu^2 \cdot |\G|)$ time
	a kernel $K$ for the first $\Delta$-window of~$\G$ 
	such that $|K| \in O(\nu^2)$.
\end{lemma}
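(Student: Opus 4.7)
My plan is to build $K$ in three steps and then verify the kernel property by a greedy exchange argument. First, by \cref{lem:first-edge} it suffices to consider only ``earliest'' time edges, so I let $H$ be the subgraph of $G$ with edge set $\{e \in E : \lambda(e) \cap [1, \Delta] \neq \emptyset\}$ and, for every $e \in E(H)$, define $t_e := \min(\lambda(e) \cap [1, \Delta])$; since each $\lambda(e)$ is sorted, $H$ and all values $t_e$ are computable in $O(|\G|)$ time. Second, I compute a maximum matching $M_H$ of $H$ by iterated augmenting paths in $O(\nu \cdot m)$ time and set $V_H := V(M_H)$; by maximality, $V_H$ is a vertex cover of $H$ of size at most $2\nu$. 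Third, for every $v \in V_H$ I insert into $K$ the time edges $(e, t_e)$ corresponding to the $4\nu + 1$ edges of $H$ incident to $v$ with smallest values $t_e$, taking all incident edges if $v$ has fewer than $4\nu + 1$ neighbors in $H$; linear-time selection per vertex makes this step run in total time $O(m)$. The resulting set $K$ has size at most $2\nu \cdot (4\nu + 1) = O(\nu^2)$, and the whole construction runs in $O(\nu \cdot |\G|) \subseteq O(\nu^2 \cdot |\G|)$ time.

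To verify the kernel property, I take any maximum $\Delta$-temporal matching $M^*$ of $\G$ and, by iteratively applying \cref{lem:first-edge}, assume every time edge in $N^* := M^*|_{[1, \Delta]}$ has the form $(e, t_e)$. I then repeat the following swap until every time edge of $N^*$ lies in $K$. Pick $(e, t_e) \in N^*$ with $e = \{u, v\}$ not among the kept edges and fix an endpoint $u \in V_H$ (which exists since $V_H$ is a vertex cover of $H$). As $e$ is not among the $4\nu + 1$ kept edges at $u$, all kept edges $\{u, w_1\}, \ldots, \{u, w_{4\nu + 1\}}$ at $u$ satisfy $t_{\{u, w_i\}} \leq t_e$ and $w_i \notin \{u, v\}$. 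I replace $(e, t_e)$ in $M^*$ with $(\{u, w_i\}, t_{\{u, w_i\}})$ for an index $i$ whose partner $w_i$ is \emph{free}, meaning $w_i$ is the endpoint of no other time edge of $M^*$ whose label lies in $[1, 2\Delta - 1]$. A routine verification shows that this replacement preserves $\Delta$-independence and $|M^*|$; since it also strictly increases $|N^* \cap K|$, the procedure halts after at most $|N^*| \leq \nu$ iterations.

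The main obstacle is showing that a free $w_i$ always exists; this is the place where the bound $O(\nu^2)$ comes from. Any two distinct time edges of $M^*$ whose labels both lie in $[1, \Delta]$ have label-distance at most $\Delta - 1 < \Delta$, so $\Delta$-independence forces them to be vertex-disjoint in $G$; the same reasoning applies to time edges with labels in $[\Delta + 1, 2\Delta - 1]$, an interval of length $\Delta - 1$. Hence $M^*|_{[1, \Delta]}$ and $M^*|_{[\Delta + 1, 2\Delta - 1]}$ each project to a matching in $G$ of size at most $\nu$, each covering at most $2\nu$ vertices, so $|V(M^*|_{[1, 2\Delta - 1]})| \leq 4\nu$. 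Since $u$ and $v$ both belong to $V(M^*|_{[1, \Delta]})$ but are distinct from every $w_i$, at most $4\nu - 2$ of the $4\nu + 1$ candidates can be blocked, leaving at least three free choices. This completes the exchange argument and proves the lemma.
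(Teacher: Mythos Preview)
Your proposal is correct and follows essentially the same approach as the paper: both construct $K$ by taking a maximum matching of the underlying graph of the first $\Delta$-window, using its matched vertices as a vertex cover, and for each such vertex keeping the $4\nu+1$ incident edges with earliest first appearance; both prove the kernel property by the same exchange argument, bounding the number of vertices $\Delta$-blocked in the first window by $4\nu$ via the two matchings $M^*|_{[1,\Delta]}$ and $M^*|_{[\Delta+1,2\Delta-1]}$ (the paper uses $[\Delta+1,2\Delta]$, which is immaterial). Your phrasing as an iterative swap that increases $|N^*\cap K|$ is equivalent to the paper's extremal choice minimizing $|M|_{[1,\Delta]}\setminus K|$, and your slightly sharper count ($4\nu-2$ blocked candidates, since $u,v$ are already counted) and tighter per-vertex selection bound are cosmetic refinements of the same argument.
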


\cref{alg:first-kernel} presents the pseudocode for the algorithm behind \cref{lem:first-kernel}.
\begin{algorithm2e}[t]

	Let $G'$ be the underlying graph of $\G|_{[1,\Delta]}$.\nllabel{line:ug} and $K = \emptyset$\;
	$A \gets $ a maximum matching of $G'$.\nllabel{line:max-matching}\;
	$V_A \gets $ the set of vertices matched by $A$.\;
  	\ForEach(\nllabel{line:for-Va}){$v \in V_A$}{
		$R_v \gets \big\{ (\{v,w\},t) \mid w \in N_{G'}(v)$ and $t = \min \{ i \in[\Delta] \mid \{v,w\} \in E_i \} \big\}$.\nllabel{line:Mv}\;
		\lIf{$|R_v| \leq 4\nu$}{
			$K \gets K \cup R_v$.
		}
		\uElse{
			Form a subset $R' \subseteq R_v$ such that $|R'| = 4\nu+1$ and 
			for every $(e,t) \in R'$ and $(e',t') \in R_v \setminus R'$ we have $t \leq t'$.\nllabel{line:large-set}\;
			$K \gets K \cup R'$.\;
		}
  	}
	\KwRet{$K$.}\;
	\caption{Kernel for the First $\Delta$-Window (\cref{lem:first-kernel}). }

  \label[algorithm]{alg:first-kernel}
\end{algorithm2e}
We show correctness of \cref{alg:first-kernel} in \cref{lem:first-kernel-correct} and examine its running time
in \cref{lem:first-kernel-runtime}.
Hence, \cref{lem:first-kernel} follows from \cref{lem:first-kernel-correct,lem:first-kernel-runtime}.
\begin{lemma}
	\label{lem:first-kernel-correct}
	\cref{alg:first-kernel} is correct, that is, it outputs a size-$O(\nu^2)$
	kernel $K$ for the first $\Delta$-window of $\G$.
\end{lemma}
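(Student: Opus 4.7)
The plan splits into a size bound, which is immediate from the algorithm, and a correctness argument that $K$ contains the first-window restriction of some maximum $\Delta$-temporal matching. The size bound is straightforward: $|V_A|\leq 2\nu$ since $A$ is a matching, and each vertex $v\in V_A$ contributes at most $4\nu+1$ time edges to $K$, so $|K|\leq 2\nu(4\nu+1)\in O(\nu^2)$.

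For correctness, I will start from an arbitrary maximum $\Delta$-temporal matching $M^*$ of $\G$ and iteratively modify it into some $M$ with $|M|=|M^*|$ and $M|_{[1,\Delta]}\subseteq K$. First I will apply \cref{lem:first-edge} repeatedly so that every first-window time edge $(e,t)\in M^*|_{[1,\Delta]}$ sits at the earliest appearance of $e$ within $[1,\Delta]$; call the result $M$. Then, for every $(e,t)\in M|_{[1,\Delta]}\setminus K$, I will perform a local swap: writing $e=\{u,v\}$, I will use the fact that $V_A$ is a vertex cover of $G'$ (because $A$ is a maximal matching of $G'$) to assume $u\in V_A$; the earliest-appearance property then yields $(e,t)\in R_u$, and $(e,t)\notin K$ forces $|R_u|>4\nu$ and $(e,t)\in R_u\setminus R'_u$. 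In particular every $(e',t')\in R'_u$ satisfies $t'\leq t$ and its other endpoint $w$ is one of $4\nu+1$ distinct neighbors of $u$ in $G'$.

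The heart of the proof will be showing that at least one candidate $(\{u,w\},t')\in R'_u$ can replace $(e,t)$ in $M$ without breaking the $\Delta$-temporal matching property. A conflict would come from some $(f,s)\in M\setminus\{(e,t)\}$ sharing a vertex with $\{u,w\}$ and satisfying $|s-t'|<\Delta$. Conflicts on the $u$-side are ruled out because every edge of $M$ incident to $u$ other than $(e,t)$ has label $s\geq t+\Delta\geq t'+\Delta$, so the only possible blockers are edges of $M$ incident to $w$ at times $s\in[1,2\Delta-1]$. I will then count: each blocker is incident to at most two candidate $w$'s (since $f$ has two endpoints), and the time edges of $M$ within any single window of length at most $\Delta$ are pairwise vertex-disjoint, so both $M|_{[1,\Delta]}$ and $M|_{[\Delta+1,2\Delta-1]}$ are static matchings in the underlying graph of $\G$, each of size at most $\nu$. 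This gives at most $2\cdot 2\nu=4\nu$ blocked candidates out of $4\nu+1$ in $R'_u$, leaving at least one valid replacement. The swap is well defined because the chosen $(e',t')$ cannot already be in $M$ (otherwise, being incident to $u$ at time $t'\leq t\leq\Delta$, it would conflict with $(e,t)\in M$), and it strictly decreases $|M|_{[1,\Delta]}\setminus K|$, so finitely many iterations finish the argument.

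The main obstacle will be the counting just described. A naive bound on $|M|_{[1,2\Delta-1]}|$ does not obviously give the factor $4\nu$; the crucial observation is that a window of length at most $\Delta$ forces vertex-disjointness among the time edges of $M$ inside it, so those time edges form a static matching and the matching number $\nu$ bounds both $|M|_{[1,\Delta]}|$ and $|M|_{[\Delta+1,2\Delta-1]}|$. Combined with the fact that each blocker is incident to at most two candidates, this is exactly what justifies the threshold $4\nu+1$ chosen in \cref{alg:first-kernel}.
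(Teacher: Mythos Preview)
Your proposal is correct and follows essentially the same approach as the paper. The paper phrases the argument as a minimality/contradiction argument (pick a maximum $\Delta$-temporal matching $M$ minimizing $|M|_{[1,\Delta]}\setminus K|$ and show this set must be empty), whereas you phrase it as an explicit iterative swap procedure; the underlying swap step, the use of \cref{lem:first-edge} to move to earliest appearances, the vertex-cover property of $V_A$, and the $4\nu$ counting via $|M|_{[1,\Delta]}|\le\nu$ and $|M|_{[\Delta+1,2\Delta-1]}|\le\nu$ are all identical to the paper's proof.
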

\begin{proof}
	Let $M$ be a maximum $\Delta$-temporal matching of $\G$ such that
	$\big| M|_{[1,\Delta]} \setminus K\big|$ is minimized.
	Without loss of generality we can assume that every time edge in $M|_{[1,\Delta]}$ is
	the first appearance of an edge. Indeed, by construction, $K$ contains only the first appearances of edges,
	and therefore if $(e,t) \in M|_{[1,\Delta]}$ is not the first appearance of
	$e$, then by \cref{lem:first-edge} it can be replaced by the first appearance,
	and this would not increase $\big| M|_{[1,\Delta]} \setminus K\big|$.
	Now, assume towards a contradiction that $M|_{[1,\Delta]} \setminus K$ is not empty and 
	let $(e,t)$ be a time edge in $M|_{[1,\Delta]} \setminus K$.
	Since $A$ is a maximum matching in the underlying graph $G'$ of $\G|_{[1,\Delta]}$,
	at least one of the end vertices of $e$ is matched by $A$, i.e.,\ it belongs to
	$V_A$.
	Then for a vertex $v \in V_A \cap e$ we have that ${(e,t)} \in R_v$.
	Moreover, observe that $|R_v| > 4\nu$, because otherwise $(e,t)$ would be in $K$.
	For the same reason $(e,t) \not\in R'$, where
	$R' \subseteq R_v$ is the set of time edges computed in Line~\ref{line:large-set} of the algorithm.
	Let $W = \{ (w,t) \mid (\{v,w\},t) \in R' \}$ be the set of vertex appearances 
	which are adjacent to vertex appearance $(v,t)$ by a time edge in $R'$.
	Since $R_v$ contains only the first appearances of edges, we know that $W$ contains
	exactly $4\nu +1$ vertex appearances of pairwise different vertices.

	We now claim that $W$ contains a vertex appearance which is not $\Delta$-blocked by any time edge in $M$.
	To see this, we recall that $\nu$ is the maximum matching size of the underlying graph of $\G$.	
	Hence it is also an upper bound on the number of time edges in 
	$M|_{[1,\Delta]}$ and $M|_{[\Delta+1,2\Delta]}$, which implies that in the first $\Delta$-window
	vertex appearances of at most $4\nu$ distinct vertices are $\Delta$-blocked by time edges in $M$.
	Since $W$ contains $4\nu +1$ vertex appearances of pairwise different vertices, we conclude that there 
	exists a vertex appearance $(w',t') \in W$  which is not $\Delta$-blocked by~$M$.

	Observe that $t' \leq t$ because $(\{v,w'\},t') \in R'$ and $(e,t) \in R_v \setminus R'$.
	Hence, $(v,t')$ is not $\Delta$-blocked by $M \setminus \{ (e,t) \}$.
	Thus, $M^* := (M \setminus \{ (e,t) \}) \cup \{ (\{v,w'\},t') \}$
	is a $\Delta$-temporal matching of size $|M|$ with 
	$\big|M^*|_{[1,\Delta]} \setminus K \big| < \big| M|_{[1,\Delta]} \setminus K \big|$.
	This contradiction implies
	that $M|_{[1,\Delta]} \setminus K$ is empty and thus $M|_{[1,\Delta]} \subseteq K$.

	It remains to show that $|K| \in O(\nu^2)$.
	Since each maximum matching in $G'$ has at most $\nu$ edges, we have that $|V_A| \leq 2\nu$.
	For each vertex in $V_A$ the algorithm adds at most $4\nu+1$ time edges to $K$.
	Thus, $|K| \leq 2\nu\cdot(4\nu+1) \in O(\nu^2)$.
\end{proof}

\begin{lemma}%
	\label{lem:first-kernel-runtime}
	\cref{alg:first-kernel} runs in $O(\nu^2 (n + m \Delta))$ time.
	In particular, the time complexity of \cref{alg:first-kernel} is dominated by $O(\nu^2 |\G|)$.
\end{lemma}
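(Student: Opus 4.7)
The plan is to decompose Algorithm~\ref{alg:first-kernel} into three phases and bound the running time of each, which will add up to the claimed $O(\nu^2(n+m\Delta))$. The analysis is essentially bookkeeping, so the goal is simply to identify where the factors of $\nu$ come from and avoid being too lossy.

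\emph{Phase~1 (Lines~\ref{line:ug}--\ref{line:max-matching}).} Building the underlying graph $G'$ of $\G|_{[1,\Delta]}$ is a single pass through the input: for every edge $e$ of $G$ and every label $t\in\lambda(e)\cap[1,\Delta]$, we mark $e$ as present in $G'$ (and record the minimum such $t$ for later use). Since the number of time edges in the first $\Delta$-window is $O(m\Delta)$, this scan takes $O(n+m\Delta)$ time. Computing a maximum matching $A$ of $G'$ (which has $n$ vertices and at most $m$ edges) by Micali--Vazirani~\cite{micali1980v} takes $O(\sqrt{n}\cdot m)$ time, and the set $V_A$ of matched vertices can be read off in $O(n)$ additional time, with $|V_A|\le 2\nu$.

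\emph{Phase~2 (for-loop of Line~\ref{line:for-Va}).} The loop executes exactly $|V_A|\le 2\nu$ iterations, so the dominant question is how expensive a single iteration is. For a given $v\in V_A$, constructing $R_v$ amounts to enumerating every neighbour $w$ of $v$ in $G'$ together with the earliest time-slot in $[1,\Delta]$ at which $\{v,w\}$ appears; using the ``first appearance'' information already computed in Phase~1, this is a scan of the incidence list of $v$ in $\G|_{[1,\Delta]}$ and costs $O(n+m\Delta)$ in the worst case. The subsequent selection of the $4\nu+1$ time edges of $R_v$ with smallest time-label (Line~\ref{line:large-set}) can be done in $O(|R_v|)$ time by linear-time selection, or more naively in $O(\nu\cdot|R_v|)\subseteq O(\nu(n+m\Delta))$ by performing $O(\nu)$ passes extracting the current minimum. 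Either way, each iteration is bounded by $O(\nu(n+m\Delta))$, so summed over the $O(\nu)$ iterations the loop costs $O(\nu^2(n+m\Delta))$.

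\emph{Putting it together.} The total time is $O(n+m\Delta) + O(\sqrt{n}\,m) + O(\nu^2(n+m\Delta)) = O(\nu^2(n+m\Delta))$. For the ``in particular'' statement, note that the size of $\G$ is $|\G|=n+\sum_{t=1}^{T}|E_t|\ge n+m\Delta$ when $\Delta\le T$ (and when $\Delta>T$ the algorithm trivially reduces to the $\Delta=T$ case, handled by Observation~\ref{obs:DeltaEqT}), so $O(\nu^2(n+m\Delta))\subseteq O(\nu^2|\G|)$.

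The main obstacle is not any deep argument but rather choosing the iteration-cost bookkeeping loose enough to absorb both the Micali--Vazirani matching cost $O(\sqrt{n}m)$ and the per-iteration work of constructing $R_v$ under a common $O(\nu^2(n+m\Delta))$ roof; once the observation that $|V_A|\le 2\nu$ is in place, everything else is a straightforward scan.
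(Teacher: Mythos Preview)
Your decomposition into phases and the bookkeeping for the for-loop are fine, but there are two genuine gaps in the absorption step.

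First, the Micali--Vazirani bound $O(\sqrt{n}\,m)$ is \emph{not} in general dominated by $O(\nu^2(n+m\Delta))$. Take the underlying graph to be a star $K_{1,n-1}$ and let $\Delta$ be constant: then $\nu=1$, $m=n-1$, so $\sqrt{n}\,m=\Theta(n^{3/2})$ while $\nu^2(n+m\Delta)=\Theta(n)$. The paper avoids this by computing the matching via at most $\nu$ augmenting paths, each found in linear time~\cite{GT85}, giving $O(\nu(n+m))$; this is what absorbs cleanly. Your argument breaks precisely at the sentence ``$O(\sqrt{n}\,m)+O(\nu^2(n+m\Delta))=O(\nu^2(n+m\Delta))$''.

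Second, your justification of the ``in particular'' claim is wrong: the inequality $|\G|\ge n+m\Delta$ is false whenever edges do not appear in every slot of the first $\Delta$-window (e.g.\ each edge appears exactly once, $T\ge\Delta\ge 2$, giving $|\G|=n+m<n+m\Delta$). The two bounds $O(\nu^2(n+m\Delta))$ and $O(\nu^2|\G|)$ are \emph{separate} upper bounds on the same running time, not one a consequence of the other. To get $O(\nu^2|\G|)$ you must argue each phase directly: building $G'$ scans only the time edges in $[1,\Delta]$, hence costs $O(|\G|)$; the augmenting-path matching costs $O(\nu(n+m))\subseteq O(\nu|\G|)$ since every edge has at least one label; and each loop iteration costs $O(\nu n)\subseteq O(\nu|\G|)$.
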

\begin{proof}
	The underlying graph $G'$ of the first $\Delta$-window in Line \ref{line:ug} of \cref{alg:first-kernel}  
	can be computed in~$O(\Delta m)$ time.
	Using the standard augmenting path-based procedure and the linear-time algorithm
	for finding an augmenting path \cite{GT85}, a maximum matching $A$ of $G'$ in Line~\ref{line:max-matching}
	can be computed in $O(\nu(n+m))$ time.
	Since $|V_A| \leq 2\nu$, the for-loop in Line \ref{line:for-Va} performs at most~$2\nu$ iterations.
	At each of these iterations the corresponding set $R_v$ can be computed in $O(n)$ time,
	because it contains at most $n-1$ time edges, and the list of time labels of every edge is ordered by time.
	Finally, observe that $R' \subseteq R_v$ can be computed in $O(\nu\cdot n)$ time and 
	that at each iteration we add at most $4\nu+1$ time edges to $K$.
	Thus, overall \cref{alg:first-kernel} runs in $O(\nu^2 (n + m \Delta))$ time.
\end{proof}

Having \cref{alg:first-kernel} at hand, we can formulate a recursive search tree algorithm 
which (1) picks a time edge $(e,t)$ in the kernel of the first $\Delta$-window,
(2) removes all time edges which are not $\Delta$-independent with $(e,t)$, and
(3) calls itself until $k$ time edges are picked or the temporal graph is empty.

\begin{algorithm2e}[t]
	\KwIn{A temporal graph $\G = (G,\lambda)$ of lifetime $T$ and $\Delta,k \in \mathbb N$.}
	\KwOut{\yes{} if there is a $\Delta$-temporal matching of size $k$, otherwise \no.}
	
	\medskip

	\lIf{$k=0$ \textbf{\textup{or}} maximum matching size of $G$ is at least $k$ \nllabel{fpt-k:base-case1}}{ \KwRet{\yes}. }
	\lIf{$\G$ has no edge appearances}{ \KwRet{\no}.\nllabel{fpt-k:base-case2}} 
	Let $t_0$ be the time slot of the first non-empty snapshot of $\G$.\nllabel{fpt-k:t0}\;
	$\lambda(e) \gets \{ t-t_0+1 \mid t \in \lambda(e) \}$, for all $e \in E(G)$.\nllabel{fpt-k:shift}\;
	$K \gets$ kernel for the first $\Delta$-window of $\G$ computed by \cref{alg:first-kernel}.\nllabel{fpt-k:kernel}\;
	\ForEach(\nllabel{fpt-k:for}){time edge $(e,t) \in K$}{
		$A \gets \{ (e',t') \mid (e',t') \in \mathcal E(\G)$ is not $\Delta$-independent with $(e,t) \}$.\;
		\KwRet{\textbf{\textup{call}} \cref{alg:fpt-for-k} for $\G \setminus A$, $\Delta$, and $k \gets k-1$.\nllabel{fpt-k:call}}
	}

	\caption{Fixed-Parameter Algorithm for the Solution Size $k$ (\cref{thm:fpt-for-k}).}
  \label[algorithm]{alg:fpt-for-k}
\end{algorithm2e}
	The pseudocode for the algorithm behind \cref{thm:fpt-for-k} is stated in \cref{alg:fpt-for-k}.
We show its correctness in \cref{lem:fpt-for-k-correct} 
and the claimed running time in \cref{lem:fpt-for-k-runtime}.
\begin{lemma}
	\label{lem:fpt-for-k-correct}
	\cref{alg:fpt-for-k} is correct.
\end{lemma}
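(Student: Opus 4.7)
The plan is to prove correctness by induction on $k$, using the kernel property of \cref{lem:first-kernel} together with a simple exchange argument showing that after the time-shift of Line~\ref{fpt-k:shift}, no maximum $\Delta$-temporal matching can avoid the first $\Delta$-window entirely.

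First I would dispatch the base cases. If $k=0$, the empty set is a $\Delta$-temporal matching, so returning \yes{} is correct. If the underlying graph $G$ admits a matching $N$ of size at least $k$, then selecting for each $e\in N$ any time edge $(e,t)$ with $t\in\lambda(e)$ yields $k$ pairwise vertex-disjoint (and hence $\Delta$-independent) time edges, so \yes{} is again correct. If $\G$ has no edge appearances and $k\geq 1$, the answer is trivially \no. Also, the relabeling in Line~\ref{fpt-k:shift} is an additive shift applied uniformly to all labels; since $\Delta$-independence depends only on differences of labels, the shifted instance has a $\Delta$-temporal matching of size $k$ if and only if the original does.

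For the inductive step, assume correctness for all calls with parameter strictly less than $k$. For soundness, suppose the algorithm returns \yes{} for some $(e,t)\in K$ at Line~\ref{fpt-k:call}. By induction the recursive call on $\G\setminus A$ with parameter $k-1$ returns a $\Delta$-temporal matching $M'$ of size at least $k-1$. Every time edge in $M'$ survived the removal of $A$, so it is $\Delta$-independent from $(e,t)$; hence $M'\cup\{(e,t)\}$ is a $\Delta$-temporal matching of $\G$ of size at least $k$. For completeness, suppose $\G$ has a $\Delta$-temporal matching of size at least $k$. Then $\G$ has a maximum $\Delta$-temporal matching $M^*$, and by \cref{lem:first-kernel} we may choose $M^*$ so that $M^*|_{[1,\Delta]}\subseteq K$.

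The key observation is that $M^*|_{[1,\Delta]}$ is nonempty. Indeed, the shift in Line~\ref{fpt-k:shift} ensures that the snapshot at time $1$ contains some edge $e^*$. If $M^*|_{[1,\Delta]}=\emptyset$, then no time edge of $M^*$ lies in a slot $t'$ with $|1-t'|<\Delta$, so $(e^*,1)$ is $\Delta$-independent from every element of $M^*$, and $M^*\cup\{(e^*,1)\}$ is a strictly larger $\Delta$-temporal matching, contradicting maximality. Therefore we can pick $(e,t)\in M^*|_{[1,\Delta]}\subseteq K$. On the iteration of the \textbf{foreach} loop that examines this $(e,t)$, the set $A$ consists precisely of the time edges in conflict with $(e,t)$; since $M^*\setminus\{(e,t)\}$ contains no such time edge, it is a $\Delta$-temporal matching of $\G\setminus A$ of size at least $k-1$, and by the induction hypothesis the recursive call at Line~\ref{fpt-k:call} returns \yes. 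Hence, interpreting the \textbf{foreach}/\KwRet construct in the standard branching sense (return \yes{} as soon as some branch returns \yes, and \no{} if all branches fail), the algorithm outputs \yes.

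The main conceptual step is the nonemptiness of $M^*|_{[1,\Delta]}$, which bridges the kernel statement (which only guarantees containment in $K$) and the branching (which needs an actual element of $K$ to try); every other step is routine bookkeeping on top of \cref{lem:first-kernel}.
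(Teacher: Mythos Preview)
Your proof is correct and follows essentially the same approach as the paper: handle the base cases, observe that the time-shift makes the first snapshot nonempty so a maximum $\Delta$-temporal matching must intersect the first $\Delta$-window, invoke \cref{lem:first-kernel} to find such an intersection inside $K$, and then branch. Your write-up is in fact slightly more careful than the paper's, since you make the induction explicit, treat soundness and completeness separately, and correctly note that $M^*\setminus\{(e,t)\}$ (rather than $M^*$ itself) is what survives in $\G\setminus A$.
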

\begin{proof}
	First, observe that an instance with $k=0$ is a trivial \emph{yes}-instance
	and an instance with $k>0$ and no edge appearances is a trivial \emph{no}-instance.
	Second, if there is a matching~$M$ of size at least~$k$ in the underlying graph~$G$,
	then $\{ (e,t) \mid e \in M, t = \min \lambda(e) \}$ is a~$\Delta$-temporal matching in~$\G$
	of size $|M|$.
	Hence, Lines~\ref{fpt-k:base-case1}--\ref{fpt-k:base-case2} are correct.
	In Lines~\ref{fpt-k:t0}--\ref{fpt-k:shift}, we remove the leading edgeless snapshots from the temporal graph if any.
	Note that this does not change the size of any~$\Delta$-temporal matching.
	However, after this preprocessing
	every~$\Delta$-temporal matching~$M$ of maximum size in $\G$
	contains at least one time edge from the first~$\Delta$-window,
	because otherwise $M$ could be extended by a time edge from the first snapshot.
	In Line~\ref{fpt-k:kernel}, a kernel~$K$ for the first~$\Delta$-window of $\G$ is computed by \cref{alg:first-kernel}.
	Hence, there is a~maximum $\Delta$-temporal matchings $M$ in $\G$ 
	such that there is a time edge $(e,t)\in M|_{[1,\Delta]} \cap K$.
	Thus, at the iterations of the for-loop in Line~\ref{fpt-k:for}
	that correspond to $(e,t) \in M|_{[1,\Delta]} \cap K$, we have $A \cap M = \emptyset$.
	Consequently, since $\G \setminus A$ contains all time edges of $M \setminus \{ (e,t)\}$,
	there is a $\Delta$-temporal matching of size at least $k$ in $\G$ 
	if and only if 
	there is a $\Delta$-temporal matching of size at least~$k-1$ in~$\G \setminus A$.
	This implies correctness of Line~\ref{fpt-k:call}.

	\cref{alg:fpt-for-k} terminates because we decrease the parameter $k$ in each recursion until~zero is reached.
\end{proof}
It remains to show that \cref{alg:fpt-for-k} is indeed a linear-time
fixed-parameter algorithm when parameterized by the solution size $k$.
\begin{lemma}
	\label{lem:fpt-for-k-runtime}
	\cref{alg:fpt-for-k} runs in $k^{O(k)}\cdot |\G|$ time.
\end{lemma}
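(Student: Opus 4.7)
The plan is a standard analysis of the search tree produced by \cref{alg:fpt-for-k}, combining the bound on the kernel size from \cref{lem:first-kernel} with the depth of the recursion.

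First, I would bound the recursion depth. Each recursive call in Line~\ref{fpt-k:call} decreases the parameter by one, and the recursion terminates as soon as $k=0$ is reached (Line~\ref{fpt-k:base-case1}); hence the depth of the search tree is at most $k$.

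Next, I would bound the branching factor at every internal node of the tree. If the base-case test in Line~\ref{fpt-k:base-case1} does not apply, then the maximum matching of the underlying graph $G$ has size $\nu<k$. By \cref{lem:first-kernel}, the kernel $K$ computed in Line~\ref{fpt-k:kernel} therefore satisfies $|K|\in O(\nu^2)\subseteq O(k^2)$. Consequently, the for-loop in Line~\ref{fpt-k:for} spawns at most $O(k^2)$ recursive calls, and the total number of nodes of the search tree is bounded by
\[
\sum_{i=0}^{k} O(k^2)^i \;=\; k^{O(k)}.
\]

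It remains to bound the work done at a single node (excluding the recursive calls). Testing whether the maximum matching of $G$ has size at least $k$ can be done via $k$ rounds of augmenting paths in $O(k\,|\G|)$ time, and stripping the leading empty snapshots in Lines~\ref{fpt-k:t0}--\ref{fpt-k:shift} takes $O(|\G|)$ time. By \cref{lem:first-kernel-runtime}, computing the kernel~$K$ in Line~\ref{fpt-k:kernel} takes $O(\nu^2\cdot|\G|)=O(k^2\cdot|\G|)$ time. Inside the for-loop, each of the $O(k^2)$ iterations constructs the conflict set~$A$ by one pass over the edge appearances, costing $O(|\G|)$ per iteration. Thus the total work at each node is $O(k^2\cdot|\G|)$.

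Multiplying the per-node cost by the $k^{O(k)}$ bound on the number of nodes, and observing that the size of the temporal graph along any root-to-leaf path is non-increasing, gives a total running time of $k^{O(k)}\cdot|\G|$, as claimed. There is no real obstacle here: the only non-routine point is to notice that the base case ensures $\nu<k$ throughout the tree, so that the $O(\nu^2)$ kernel bound translates into an $O(k^2)$ branching factor independent of~$\Delta$.
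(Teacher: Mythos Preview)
Your proof is correct and follows essentially the same approach as the paper: bound the recursion depth by~$k$, use the base-case test to ensure $\nu<k$ so that the kernel (and hence the branching factor) is $O(k^2)$, and charge $O(k^2\cdot|\G|)$ work per node. The paper's argument is organized identically, with the same running-time bounds for each line of the algorithm.
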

\begin{proof}
	In Line~\ref{fpt-k:base-case1} of \cref{alg:fpt-for-k}, we use the standard
	augmenting path-based algorithm for maximum matching to check
	if $G$ has a matching of size $k$.
	Since an augmenting path can be found in linear time \cite{GT85}, 
	this step can be executed in $O(k(n+m))$ time.
	If $G$ has a matching of size $k$, then the algorithm terminates in Line~\ref{fpt-k:base-case1} and the lemma holds.
	Hence, we assume that the maximum matching size $\nu$ of $G$ is strictly smaller than $k$.
	To compute Line~\ref{fpt-k:shift}, we first determine in linear time the time slot~$t_0$ of the first non-empty snapshot
	and then iterate a second time over the temporal graph to set the new labels. 
	By \cref{lem:first-kernel}, Line~\ref{fpt-k:kernel} can be computed in $O(\nu^2 \cdot |\G|)$ time.
	Thus, Lines~\ref{fpt-k:base-case1}--\ref{fpt-k:kernel} 
	are computable in $O(k^2 \cdot |\G|)$ time.

	By \cref{lem:first-kernel}, 
	the kernel $K$ for the first $\Delta$-window contains at most $O(k^2)$ time edges.
	Hence, the for-loop in Line \ref{fpt-k:for} runs at most $O(k^2)$ iterations.
	To compute the temporal graph~$\G \setminus A$ of Line~\ref{fpt-k:call} in $O(|\G|)$ time,
	we first iterate once over the temporal graph and remove all time edges which are not $\Delta$-independent with $(e,t)$. %
	In total, Lines \ref{fpt-k:base-case1}-\ref{fpt-k:call} of a single call (without the recursion) of \cref{alg:fpt-for-k} runs in $O(k^2\cdot |\G|)$ time.
	In Line~\ref{fpt-k:call} the algorithm calls itself recursively.
	However, since the parameter $k$ is decreased at every recursive call, the depth of the recursion tree is at most $k$,
	which implies that the size of the tree is $k^{O(k)}$. Hence \cref{alg:fpt-for-k} runs in $k^{O(k)}\cdot |\G|$ time.
\end{proof}

\subsection{Fixed-parameter tractability for the combined parameter $\Delta$ and maximum matching size $\nu$ of the underlying graph}\label{subs:fptdeltanu}
In this subsection, we show that \textsc{Temporal Matching} is fixed-parameter tractable
when parameterized by $\Delta$ and the maximum matching size $\nu$ of the underlying graph.
\begin{theorem}%
	\label{thm:fpt-for-alpha-delta}
	\textsc{Temporal Matching} can be solved 
	in $2^{O(\nu\Delta)}\cdot |\G| \cdot \frac{T}{\Delta}$ time.
\end{theorem}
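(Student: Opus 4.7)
The plan is to design a block-based dynamic program that sweeps across $\lceil T/\Delta\rceil$ consecutive \emph{blocks} $B_1,\dots,B_{\lceil T/\Delta\rceil}$ of $\Delta$ time slots each. First I would compute a maximum matching $A$ of the underlying graph $G$ in polynomial time. Since $A$ is maximum, $V(A)$ is a vertex cover of $G$, so every edge of $G$---and hence every time edge of $\mathcal{G}$---has at least one endpoint in $V(A)$. In particular, within any $\Delta$-window each vertex of $V(A)$ can participate in at most one time edge of a $\Delta$-temporal matching, which bounds the restriction of any solution to a single block by $2\nu$ time edges and determines this restriction completely via the ``use pattern'' of $V(A)$.

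For each block $B_i$ the plan is to enumerate all \emph{schedules} $\sigma_i\colon V(A)\to B_i\cup\{\bot\}$; each schedule records at which time (if any) every $V(A)$-vertex is matched inside $B_i$, and there are at most $(\Delta+1)^{2\nu}\le 2^{O(\nu\Delta)}$ of them. Given $\sigma_i$, I would check whether it is \emph{realizable} by a set of time edges of $\mathcal{G}|_{B_i}$ that constitutes a valid $\Delta$-temporal matching inside the block: this reduces to a polynomial-time bipartite assignment problem in which each scheduled pair $(v,\sigma_i(v))$ must be assigned a partner at time $\sigma_i(v)$, either another scheduled $V(A)$-vertex with the matching time stamp or a non-$V(A)$ vertex that is not reused elsewhere in the block. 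I would then tabulate the number of edges of the largest realization of $\sigma_i$.

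The dynamic program stores, for each block $B_i$ and each realizable schedule $\sigma_i$, the maximum total size of a $\Delta$-temporal matching in $\mathcal{G}|_{B_1\cup\dots\cup B_i}$ whose $B_i$-schedule equals $\sigma_i$. A transition from $\sigma_{i-1}$ to $\sigma_i$ is legal iff (i) for every $v\in V(A)$ with $\sigma_{i-1}(v)=t$ and $\sigma_i(v)=t'\ne\bot$ one has $t'-t\ge\Delta$, and (ii) no non-$V(A)$ vertex used in the final $\Delta-1$ slots of $B_{i-1}$ is reused in the initial $\Delta-1$ slots of $B_i$. I would enforce (ii) by augmenting the schedule with a short ``boundary descriptor'' listing the at most $2\nu$ non-$V(A)$ vertices blocking into the next block together with their deadlines; this keeps the state space at $2^{O(\nu\Delta)}$. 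Summing over the $\lceil T/\Delta\rceil$ blocks with polynomial-time inner processing yields the claimed $2^{O(\nu\Delta)}\cdot |\mathcal{G}|\cdot T/\Delta$ bound.

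The hard part will be bookkeeping of non-$V(A)$ vertex conflicts, both inside a block and across the boundary between consecutive blocks, while staying inside the promised $2^{O(\nu\Delta)}$ budget; pulling this off requires carefully exploiting the per-block bound of $2\nu$ matched $V(A)$-vertices. Should the naive boundary descriptor turn out to be too expensive in corner cases, I would fall back on representative-family techniques from matroid theory, in line with the algorithmic toolbox that the paper's introduction explicitly advertises.
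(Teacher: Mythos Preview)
Your block-based dynamic program and the bound of at most $\nu$ matching edges per $\Delta$-block are correct and match the paper's outer structure. The gap is the sentence ``this keeps the state space at $2^{O(\nu\Delta)}$'': it does not. A boundary descriptor must name \emph{which} non-$V(A)$ vertices carry a conflict into $B_{i+1}$, and there are up to $n-2\nu$ candidates; the number of descriptors is therefore at least $\binom{n-2\nu}{\le 2\nu}=n^{O(\nu)}$. Nothing in the problem ties $n$ to $\Delta$, so this is not $2^{O(\nu\Delta)}$---you end up with an $n^{O(\nu)}$-type bound, i.e.\ XP in $\nu$, not the claimed FPT bound in $\nu+\Delta$. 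This is not a corner case; any instance with many distinct non-$V(A)$ vertices adjacent across a block boundary already forces it.

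Your fallback is exactly what the paper does, and it is the heart of the proof rather than a patch on top of schedules. The paper does not use the vertex-cover/schedule abstraction at all. Instead, for each block $B_\ell$ it computes an \emph{$\ell$-complete} family $\mathcal{M}_\ell$ of size $2^{O(\nu\Delta)}$ of $\Delta$-temporal matchings of $\G|_{B_\ell}$: for every global $\Delta$-temporal matching $M$ some $M'\in\mathcal{M}_\ell$ can replace $M|_{B_\ell}$ without losing feasibility or size (\cref{lem:comp-ext-fam}). The family is produced by a max-representative computation (\cref{thm:matroid-tool}) in a uniform matroid of rank $5\nu\Delta$ whose ground set contains vertices, time-indexed vertex appearances, and time-indexed edges (\cref{const:matroid}); each time edge $(\{v,w\},t)$ is encoded by the $5$-set $\{v,w,v_t,w_t,e_t\}$, and cross-block interaction is captured by disjointness from the at most $4\nu(\Delta-1)$ vertex appearances blocked by the two neighbouring blocks---precisely the set $Y$ in \cref{def:qrep}. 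The DP (\cref{alg:dynamic-program}) then ranges only over pairs $(M_L,M_R)\in\mathcal{M}_{i-1}\times\mathcal{M}_i$, checks whether $M_L\cup M_R$ is a $\Delta$-temporal matching, and updates accordingly. So the non-$V(A)$ bookkeeping you correctly flag as the hard part is exactly where the representative-family machinery is indispensable; your proposal identifies the right tool but does not yet supply the construction that makes it work.
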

The proof of \cref{thm:fpt-for-alpha-delta} is deferred to the end of this section.
Note that \cref{thm:fpt-for-alpha-delta} implies that \textsc{Temporal Matching} is fixed-parameter tractable
when parameterized by~$\Delta$ and the maximum matching size $\nu$ of the underlying graph, 
because there is a simple preprocessing step such that we can assume afterwards that the lifetime $T$ is polynomially bounded in the input size.
This preprocessing step modifies the temporal graph such that it does not contain $\Delta$ consecutive edgeless snapshots.
This can be done by iterating once over the temporal graph.
Observe, that this procedure does not change the maximum size of a $\Delta$-temporal matching and afterwards each $\Delta$-window contains at least one time edge.
Hence,~$\frac{T}{\Delta} \leq |\G|$.

Note that this result is incomparable to the result from the previous subsection (\cref{thm:fpt-for-k}). In some sense, we trade off replacing the solution size parameter $k$ with the structurally smaller parameter $\nu$ but we do not know how to do this without combining it with $\Delta$.
In comparison to the exact algorithm by Baste~et~al.~\cite{baste2018temporal} 
(who showed fixed-parameter tractability with $k$ and $\Delta$) 
we replace $k$ by the structurally smaller $\nu$, hence improving their result from a parameterized classification standpoint.
Furthermore, we note that \cref{thm:fpt-for-alpha-delta} is asymptotically optimal for any fixed $\Delta$ since 
there is no $2^{o(\nu)} \cdot |\G|^{f(\Delta,T)}$ algorithm for \textsc{Temporal Matching}, unless ETH fails (see \cref{th:apx}).

In the reminder of this section, we sketch the main ideas of the algorithm behind \cref{thm:fpt-for-alpha-delta}.
The algorithm works in three major steps: 
\begin{enumerate}
\item\label{step1} The temporal graph is divided into disjoint $\Delta$-windows,
\item\label{step2} for each of these $\Delta$-windows a small family of $\Delta$-temporal matchings is computed, 
and then 
\item\label{step3} the maximum size of a $\Delta$-temporal matching for the whole temporal graph is computed with a dynamic program.
\end{enumerate}

We first discuss how the algorithm performs Step~\ref{step2}.
Afterwards we formulate the dynamic program (Step~\ref{step3}) and prove \cref{thm:fpt-for-alpha-delta}.
In a nutshell, Step \ref{step2} consists of an iterative computation of
a small (upper-bounded in $\Delta+\nu$) family of $\Delta$-temporal matchings 
for an arbitrary $\Delta$-window such that at least one of 
them is ``extendable'' to a maximum $\Delta$-temporal matching for the whole temporal graph.
To this end, we introduce from matroid theory.
\ifarxiv{}
\subparagraph*{Tools from matroid theory.}
\else{}
\paragraph{Tools from matroid theory}
\fi{}
We use standard notation and terminology from matroid theory~\cite{Oxl92}.
  A pair $(U,I)$,
  where $U$~is the \emph{ground set}
  and $I\subseteq 2^U$ is a family of \emph{independent sets},
  is a \emph{matroid} if the following holds:
  \begin{itemize}
  \item  $\emptyset \in I$;
  \item  if $A' \subseteq A$ and $A \in I$, then $A' \in I$;
  \item  if $A,B \in I$ and $|A| < |B|$, then there is an $x \in B \setminus A$ such that $A \cup \{x\} \in I$.
  \end{itemize}
  An inclusion-wise maximal independent set~$A\in I$
  of a matroid~$\Q=(U,I)$ is a \emph{basis}.
  The cardinality of the bases of~$\Q$
  is called the \emph{rank} of~$\Q$.
  The \emph{uniform matroid of rank~$r$} on $U$
  is the matroid~$(U,I)$
  with $I=\{S\subseteq U\mid |S|\leq r\}$.
A matroid~$(U,I)$ is \emph{linear} or \emph{representable over a field $\mathbb F$}
if there is a matrix~$A$ with entries in $\mathbb F$ and the columns labeled by the elements of~$U$
such that $S \in I$ if and only if the columns of~$A$ 
with labels in~$S$ are linearly independent over~$\mathbb F$. The matrix $A$ is called a \textit{representation}
of $(U,I)$.

\begin{definition}[Max $q$-Representative Family]\label[definition]{def:qrep}
  Given a matroid $(U,I)$,
  a family $\mathcal S \subseteq I$ of independent sets,
  and a function~$w\colon \mathcal S \rightarrow \mathbb R$,
  we say that a subfamily~$\widehat{\mathcal S} \subseteq \mathcal S$
  is a \emph{max $q$-representative
  for $\mathcal S$ with respect to~\(w\)} if
  for each set~$Y \subseteq U$ of size at most~$q$
  it holds that
  if there is a set~$X \in \mathcal S$  with
  $X \uplus Y \in I$,  then there is a
  set~$\widehat X \in \widehat{\mathcal S}$ such that  $\widehat X \uplus Y \in I$ and
  $w(\widehat X) \geq w(X)$.
\end{definition}

For linear matroids, there are fixed-parameter algorithms parametrized by rank that compute
representatives for large families of independent sets with respect to additive set functions~\cite{ROP-Arxiv18}.
A function $w : 2^U \rightarrow \mathbb{R}$ on the subsets of a set $U$ 
is \emph{additive set function} if $w(A \uplus B) = w(A) + w(B)$ for all disjoint sets $A,B \subseteq U$. 

\begin{theorem}[van Bevern et al.\ {\cite[Proposition 4.8]{ROP-Arxiv18}}]
	\label{thm:matroid-tool}
  Let $\alpha$, $\beta$, and $\gamma$ be non-negative integers such that $r=(\alpha+\beta)\gamma \geq 1$.
  Let $\Q = (U,I)$~be a linear matroid of rank $r$ and $w \colon 2^U \to \mathbb N$~be an additive set function.
  Furthermore, let  $\mathcal H\subseteq 2^U$~be a $\gamma$-family of size~$t$ and let
  \[\mathcal S = \{ S = H_1 \uplus \dots \uplus H_\alpha 
    \mid
    S\in I\text{ and }
    H_j \in \mathcal H\text{ for }j \in \{1,\dots, \alpha \}\}.
  \]
  Then, given a representation of~$\Q$ over a finite field~$\mathbb F$, one can
  compute a max $\beta \gamma$-representative~$\widehat{\mathcal S}$ of size~$r \choose \alpha \gamma$ for the 
  family $\mathcal{S}$ with respect to~\(w\)
  using $2^{O(r)}\cdot t$~operations over~$\mathbb F$
  and calls to the function $w$.
  \end{theorem}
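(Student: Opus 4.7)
The statement is a standard result from the theory of representative families in linear matroids, so my plan is to follow the well-established \emph{Marx / Fomin--Lokshtanov--Saurabh} computation of representative families and adapt it to the specific product structure of~$\mathcal{S}$. The starting point I would invoke is the classical fact that, given a linear matroid $\mathcal{Q}=(U,I)$ of rank $p+q$ (represented over~$\mathbb F$) and a family $\mathcal{F}\subseteq I$ of $p$-sets of size~$N$, one can compute a max $q$-representative $\widehat{\mathcal{F}}$ of size at most $\binom{p+q}{p}$ in $2^{O(p+q)}\cdot N$~operations over~$\mathbb F$. Instantiating this with $p=\alpha\gamma$ and $q=\beta\gamma$ would directly yield a representative of the required size $\binom{r}{\alpha\gamma}$, \emph{if} we were allowed to enumerate~$\mathcal{S}$. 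The issue is that $|\mathcal{S}|$ can be as large as~$t^{\alpha}$, so applying the classical algorithm black-box gives a time bound with a $t^{\alpha}$ factor, not $2^{O(r)}\cdot t$.

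To avoid this blow-up, I would build~$\mathcal{S}$ incrementally and compress after every step. Define $\mathcal{S}^{(0)}=\{\emptyset\}$ and, for $i=1,\dots,\alpha$,
\[
 \mathcal{S}^{(i)} \;=\; \{\,X\uplus H \mid X\in \widehat{\mathcal{S}^{(i-1)}},\; H\in\mathcal{H},\; X\uplus H\in I\,\},
\]
so that each set in $\mathcal{S}^{(i)}$ has cardinality exactly $i\gamma$ and can be extended by $(\alpha-i+\beta)\gamma$ elements inside~$U$ before exhausting the rank. At each step I would invoke the classical representative-set algorithm with parameters $p=i\gamma$, $q=(\alpha-i+\beta)\gamma$ to compress $\mathcal{S}^{(i)}$ into a max $((\alpha-i)+\beta)\gamma$-representative $\widehat{\mathcal{S}^{(i)}}$ of size at most $\binom{r}{i\gamma}$, carrying the weight of every retained set along. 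The output of the procedure is $\widehat{\mathcal{S}^{(\alpha)}}$.

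The correctness argument rests on two standard properties that I would prove (or cite) as lemmas. First, \textbf{monotonicity}: if $\widehat{\mathcal{S}^{(i-1)}}$ is a max $((\alpha-i+1)+\beta)\gamma$-representative of $\mathcal{S}^{(i-1)}$, then joining with all of~$\mathcal{H}$ yields a family that is a max $((\alpha-i)+\beta)\gamma$-``pre-representative'' of the intended $\mathcal{S}^{(i)}$ built from all of $\mathcal{S}^{(i-1)}\star\mathcal{H}$; this uses that a witness extension of size $((\alpha-i)+\beta)\gamma$ for an uncompressed $X\uplus H$ becomes a witness extension of size $((\alpha-i+1)+\beta)\gamma$ for~$X$ after absorbing~$H$. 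Second, \textbf{transitivity}: a max $q$-representative of a max $q$-representative is itself a max $q$-representative of the original family (with respect to~$w$, which is additive and hence behaves well under the disjoint-union structure). Combining these two shows inductively that $\widehat{\mathcal{S}^{(\alpha)}}$ is a max $\beta\gamma$-representative of the original~$\mathcal{S}$, and by construction its size is $\binom{r}{\alpha\gamma}$.

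For the running time, at iteration~$i$ the intermediate family before compression has size at most $|\widehat{\mathcal{S}^{(i-1)}}|\cdot t\leq \binom{r}{(i-1)\gamma}\cdot t\leq 2^{r}\cdot t$, and running the classical compression on it costs $2^{O(r)}\cdot 2^{r}\cdot t=2^{O(r)}\cdot t$ operations over~$\mathbb F$ plus one call to~$w$ per retained set. Summing over the $\alpha\leq r$ iterations absorbs a factor of~$r$ into $2^{O(r)}$ and yields the claimed $2^{O(r)}\cdot t$ bound. The main obstacle I anticipate is the careful bookkeeping in the monotonicity/transitivity step: one must check that in a chain of compressions, the witness extension used at the final stage (of size exactly $\beta\gamma$) can always be traced back through the intermediate compressions, which amounts to verifying that the parameter~$q$ is decreased by exactly~$\gamma$ at each round so that the accumulated guarantee at the end is still $\beta\gamma$. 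Everything else is standard linear-algebraic manipulation of the matroid representation.
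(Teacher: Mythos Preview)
The paper does not contain a proof of this theorem: it is quoted verbatim as a black box from van Bevern et al.\ \cite{ROP-Arxiv18} (their Proposition~4.8), with only the remark that it ``is based on results of Fomin et al.\ \cite{FLPS16} and Marx \cite{Mar09}.'' There is therefore nothing in the present paper to compare your proposal against.

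That said, your sketch is exactly the standard iterative-compression argument underlying the cited result: build $\mathcal{S}$ one $\mathcal{H}$-block at a time, compress to a $\binom{r}{i\gamma}$-size representative after each step using the Fomin--Lokshtanov--Saurabh machinery, and use monotonicity plus transitivity of max $q$-representatives to propagate correctness. The running-time accounting (each round costs $2^{O(r)}\cdot t$, and $\alpha\le r$ rounds are absorbed into the exponent) is also the standard one. So your proposal matches the approach of the \emph{cited} source, even though the present paper itself offers no proof.
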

  \cref{thm:matroid-tool} is based on results of Fomin et al.\ \cite{FLPS16} and Marx \cite{Mar09}.
  We use \cref{thm:matroid-tool} only for uniform matroids.
  For this reason we expect that
  one can improve the base of the exponential function 
  in $\nu\Delta$ of the running time in \cref{thm:fpt-for-alpha-delta} 
  by replacing Theorem 1.1 of Fomin et al.\ \cite{FLPS16} for linear matroids 
  with its special case Theorem 1.2 for uniform matroids and 
  tighten the running time analysis in \cref{thm:matroid-tool}.
  However, the dependence on the size of the temporal graph in the running time would be worth.

  Furthermore, van Bevern et al.\ \cite{ROP-Arxiv18} proved \cref{thm:matroid-tool}
  for multiple matroids and for more general weight functions than
  additive set functions.
  However, for our purpose the stated version suffices.
  The crucial point of \cref{thm:matroid-tool} is that 
  for a linear matroid of rank $(\alpha+\beta)\gamma$ and a $\gamma$-family $\mathcal H$, 
  we can compute a small (of size $r \choose \alpha\gamma$)~max $\beta\gamma$-representative $\widehat{\mathcal{S}}$
  for a potentially very large (unbounded in the rank of the matroid) family~$\mathcal S$ of all independent sets of size $\alpha \gamma$
  which are disjoint unions of sets from $\mathcal H$.
  An important property of $\widehat{\mathcal S}$ is that for any independent set $Y$ of size $\beta\gamma$
  such that there is a set $X \in \mathcal S$ 
  which is disjoint from $Y$ and the union of $X$ and $Y$ is an independent set, 
  $\widehat{\mathcal S}$ contains a set~$\widehat X$
  which is also disjoint from $Y$, the union of $\widehat X$ and $Y$ is also an independent set, 
  and the weight of $\widehat X$ is at least as large as the weight of $X$.

  \ifarxiv{}
\subparagraph*{Families of $\ell$-complete $\Delta$-temporal matchings.}
\else{}
\paragraph{Families of $\ell$-complete $\Delta$-temporal matchings}
\fi{}
	Throughout this section
	let $\G = (G=(V,E),\lambda)$ be a temporal graph of lifetime $T$ and let $\nu$ be the maximum matching size in $G$.
	Let also $\Delta$ and $\ell$ be natural numbers such that $\ell\Delta \leq T$.

	A family $\mathcal M$ of $\Delta$-temporal matchings of $\G|_{[\Delta(\ell-1)+1, \Delta\ell]}$ is called~\emph{$\ell$-complete}
	if for any $\Delta$-temporal matching $M$ of $\G$ there is $M' \in \mathcal M$ 
	such that $\big(M \setminus M|_{[\Delta(\ell-1)+1, \Delta\ell]}\big) \cup M'$ is a $\Delta$-temporal matching of $\G$ of
	size at least $|M|$.
	A central part of our algorithm is an efficient procedure for computing an $\ell$-complete family.
	Formally, we aim for the following lemma.
\begin{lemma}%
	\label{lem:comp-ext-fam}
	There exists a $2^{O(\nu\Delta)} \cdot |\G|$-time algorithm to compute
	an $\ell$-complete family of size 
	$2^{O(\nu \Delta)}$
	of $\Delta$-temporal matchings of $\G|_{[\Delta(\ell-1)+1, \Delta\ell]}$.
\end{lemma}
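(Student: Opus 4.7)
Fix $\ell$ and write $W_\ell := [\Delta(\ell-1)+1, \Delta\ell]$. The plan is to apply the representative family machinery of \cref{thm:matroid-tool} to a uniform matroid on a cleverly designed ground set that simultaneously encodes (i) the in-window constraint that time edges in $W_\ell$ must form a $\Delta$-temporal matching and (ii) the across-window compatibility with the rest of a global matching. To set this up, let $M$ be any $\Delta$-temporal matching of $\G$ and write $M_\ell := M|_{W_\ell}$, $Y^* := M \setminus M_\ell$. The only conflicts $M_\ell$ can have with $Y^*$ come from time edges of $Y^*$ in the two adjacent windows $W_{\ell-1}$ and $W_{\ell+1}$, each contributing at most $\nu$ time edges, each of which $\Delta$-blocks at most $2\Delta$ vertex appearances in $W_\ell$. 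Hence the set $B(Y^*) \subseteq V \times W_\ell$ of vertex appearances in $W_\ell$ that are $\Delta$-blocked by $Y^*$ satisfies $|B(Y^*)| \leq 4\nu\Delta$.

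I would then take the ground set $U := V \cup (V \times W_\ell)$ and encode each time edge $\tau = (\{u,v\}, t) \in \mathcal E(\G|_{W_\ell})$ as $\phi(\tau) := \{u, v, (u,t), (v,t)\} \in \binom{U}{4}$, letting $\mathcal{H} := \{\phi(\tau) : \tau \in \mathcal E(\G|_{W_\ell})\}$. Since every two time edges in $W_\ell$ are at time-distance strictly less than $\Delta$, their $\Delta$-independence is equivalent to vertex-disjointness of their underlying edges, which by construction of $\phi$ is in turn equivalent to the disjointness of their $\phi$-images. Consequently $M_\ell \mapsto \biguplus_{\tau \in M_\ell} \phi(\tau)$ is a bijection between $\Delta$-temporal matchings of size $\alpha$ in $W_\ell$ and disjoint unions of $\alpha$ sets from $\mathcal{H}$. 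Moreover, because $B(Y^*)$ consists only of vertex-appearance elements of $U$, the condition that $M_\ell \cup Y^*$ is a $\Delta$-temporal matching of $\G$ is equivalent to $\phi(M_\ell) \cap B(Y^*) = \emptyset$.

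With the encoding in place, I take $\Q$ to be the uniform matroid on $U$ of rank $r := 4\nu(\Delta + 1)$, linearly represented by a Vandermonde-type matrix over a field of size $\Theta(|U|)$. For each $\alpha \in \{1, \dots, \nu\}$ I invoke \cref{thm:matroid-tool} with $\gamma = 4$, this $\alpha$, $\beta := \nu(\Delta+1) - \alpha$ (so that $\beta\gamma \geq 4\nu\Delta$), family $\mathcal{H}$, and the additive weight $w(A) := |A|$, obtaining in $2^{O(r)} \cdot |\mathcal{H}| = 2^{O(\nu\Delta)} \cdot |\G|$ time a max $\beta\gamma$-representative $\widehat{\mathcal S}_\alpha$ of the family $\mathcal{S}_\alpha$ of $\phi$-images of size-$\alpha$ $\Delta$-temporal matchings in $W_\ell$, with $|\widehat{\mathcal S}_\alpha| \leq \binom{r}{4\alpha} \leq 2^{O(\nu\Delta)}$. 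Setting $\widehat{\mathcal M}_\ell := \bigcup_{\alpha=1}^{\nu} \phi^{-1}(\widehat{\mathcal S}_\alpha)$ yields the desired family of size $2^{O(\nu\Delta)}$. For $\ell$-completeness: for any $M$ with $\alpha := |M_\ell|$, take $Y := B(Y^*)$; then $|\phi(M_\ell)| + |Y| \leq 4\alpha + 4\nu\Delta \leq r$ and $\phi(M_\ell) \cap Y = \emptyset$, so the representative property yields $\widehat X = \phi(\widehat M_\ell) \in \widehat{\mathcal S}_\alpha$ with $\widehat X \cap Y = \emptyset$ and $w(\widehat X) \geq w(\phi(M_\ell))$, forcing $|\widehat M_\ell| = \alpha$ and making $(M \setminus M_\ell) \cup \widehat M_\ell$ a $\Delta$-temporal matching of $\G$ of size $|M|$.

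The main difficulty is designing $\phi$ correctly. Encoding a time edge only by its underlying vertex pair loses the temporal information needed to express cross-window compatibility with $B(Y^*)$, while encoding it only by vertex appearances fails to forbid two in-window time edges that share a vertex at different times (which are in conflict yet correspond to disjoint vertex-appearance pairs). Bundling both the plain vertices and the time-tagged vertex appearances into one 4-element set folds both requirements into a single matroid-disjointness test, which is exactly what is needed to bring the problem into the form required by \cref{thm:matroid-tool}.
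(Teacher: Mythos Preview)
Your approach is essentially the paper's: encode time edges of the window as small sets over a ground set mixing plain vertices and vertex appearances, work in the uniform matroid of appropriate rank, and invoke \cref{thm:matroid-tool} to extract a small representative subfamily. The paper packages the same idea slightly differently: it uses $5$-element sets $\{u,v,u_t,v_t,e_t\}$ (the extra element $e_t$ makes the decoding $\widehat X \mapsto \widehat M_\ell$ unique), and instead of iterating over $\alpha \in \{1,\dots,\nu\}$ it introduces dummy $5$-sets so that every in-window matching corresponds to a disjoint union of \emph{exactly} $\nu$ sets from $\mathcal H$, allowing a single invocation of \cref{thm:matroid-tool}; the weight $w(X) = |X \cap E'|$ then genuinely counts the matching size (unlike your $w(A) = |A|$, which is constant on each $\mathcal S_\alpha$ and hence does no work).

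Two small points to tighten. First, your bijection claim is not literally true: two different matchings at the same time slot on the same four vertices, say $\{(\{u,v\},t),(\{w,x\},t)\}$ versus $\{(\{u,w\},t),(\{v,x\},t)\}$, have the same $\phi$-image, so $\phi^{-1}$ is not well-defined on sets. This does not harm the argument---any $\mathcal H$-decomposition of $\widehat X$ yields a valid size-$\alpha$ matching, and compatibility with $B(Y^*)$ depends only on the set $\widehat X$---but you should pick an arbitrary decomposition rather than claim a bijection (this is precisely what the paper's extra $e_t$ element avoids). Second, your union $\bigcup_{\alpha=1}^{\nu}$ omits the case $M|_{W_\ell} = \emptyset$; add $\emptyset$ to $\widehat{\mathcal M}_\ell$ so the family is $\ell$-complete also for such $M$.
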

\noindent
In the proof of \cref{lem:comp-ext-fam} we employ representative families and other tools from matroid theory.
Before we show \cref{lem:comp-ext-fam}, we prove several intermediate lemmas.
These lemmas are used in the proof of \cref{lem:comp-ext-fam} 
which is deferred to the end of this paragraph.
The primary tool in the proof of \cref{lem:comp-ext-fam} is \cref{thm:matroid-tool}
applied to a properly chosen matroid $\Q$, a family~$\mathcal H$, and a weight function $w$.
The idea is that a disjoint union of sets from $\mathcal H$ corresponds to 
a $\Delta$-temporal matching in $\G|_{[\Delta(\ell-1)+1, \Delta\ell]}$ and the weight function tells 
us how large the $\Delta$-temporal matching is.
\begin{construction}[Matroid, Family, and Weight Function]
	\label{const:matroid}
	We define
	\begin{enumerate}
		\item the $\left(5\nu+5\nu(\Delta-1)\right)$-uniform \textit{matroid} $\Q$ on the ground set
		$U := V \cup E' \cup V' \cup D$, where 
		\begin{itemize}
			\item $E' := \big\{ e_t \mid e \in E_t$ and $ t \in
			[\Delta(\ell-1)+1,\Delta\ell] \big\}$,
			\item $V' := \big\{ v_t \mid v \in V $ and $ t \in
			[\Delta(\ell-1)+1,\Delta\ell] $ and $v$ is not isolated in $G_t \big\}$, and
			\item $D := \big\{ d_i \mid i \in [5\nu] \big\}$;
		\end{itemize}
		
		\item a $5$-\textit{family} $\mathcal H := \mathcal H_E \cup \mathcal H_D$, where
		\begin{itemize}
			\item $\mathcal H_E := \big\{ E^{(t)}_{\{v,w\}} = \{ v,w, v_t,w_t,e_t\} \mid
			e= \{v,w\} \in E_t$ and $t \in [\Delta(\ell-1)+1,\Delta\ell] \big\}$, and
			\item $\mathcal H_D := \big\{ D_i = \{ d_{5(i-1)+j} \mid j \in [5]\} \mid i
			\in [\nu] \big\}$;
		\end{itemize}
		
		\item a \textit{weight function} $w : 2^U \to \mathbb N; X \mapsto |X \cap E'|$.
	\end{enumerate}
	\hfill$\lhd$
\end{construction}

Observe that each set in $\mathcal H_E$ corresponds to a time edge of the temporal graph.
Furthermore, $D$ is the set of \textit{dummy} elements and $\mathcal H_D$ is a family of sets of dummy elements,
which we introduce for technical reasons in order to able to apply \cref{thm:matroid-tool} 
and they can be ignored for the moment.

An important property of \cref{const:matroid} that we will employ in the proof of \cref{lem:comp-ext-fam} 
is formalized in the following simple observation.

\begin{observation}
	\label{lem:const-correct}
	Let $M$ be a set of time edges in $\G|_{[\Delta(\ell-1)+1,\Delta\ell]}$.
	Then $M$ is a $\Delta$-temporal matching in $\G|_{[\Delta(\ell-1)+1,\Delta\ell]}$
	if and only if the sets $E^{(t)}_{e}$, $(e,t) \in M$, are pairwise disjoint.
\end{observation}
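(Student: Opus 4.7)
The plan is to observe that within the single window $[\Delta(\ell-1)+1, \Delta\ell]$, conflict reduces to sharing a vertex of the underlying graph. Indeed, any two time slots $t,t'$ in this interval satisfy $|t-t'| \leq \Delta-1 < \Delta$, so for distinct time edges $(e,t),(e',t')$ in $\G|_{[\Delta(\ell-1)+1,\Delta\ell]}$ the $\Delta$-independence condition $e\cap e' = \emptyset \text{ or } |t-t'| \geq \Delta$ simplifies to $e \cap e' = \emptyset$. Thus $M$ is a $\Delta$-temporal matching in the window if and only if the underlying edges of its time edges are pairwise vertex-disjoint.

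Next, I would show that vertex-disjointness of the underlying edges is equivalent to the disjointness of the corresponding sets $E^{(t)}_{e} = \{v,w,v_t,w_t,e_t\}$. The three parts $V$, $V'$, $E'$ of the ground set $U$ are disjoint by \cref{const:matroid}, so it suffices to examine intersections within each part for two distinct pairs $(e,t) \neq (e',t')$ with $e=\{v,w\}$ and $e' = \{v',w'\}$. In $E'$, the labelled edges $e_t$ and $e'_{t'}$ coincide only if $(e,t)=(e',t')$, which is excluded. In $V'$, $\{v_t,w_t\} \cap \{v'_{t'},w'_{t'}\}$ can only be nonempty when the pairs share a vertex and $t=t'$, which already forces $e\cap e' \neq \emptyset$. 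In $V$, the untimed vertex sets $\{v,w\}$ and $\{v',w'\}$ intersect precisely when $e \cap e' \neq \emptyset$.

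Combining these cases: $E^{(t)}_e \cap E^{(t')}_{e'} \neq \emptyset$ if and only if $e\cap e' \neq \emptyset$ (for distinct time edges). Therefore the sets $E^{(t)}_e$, $(e,t)\in M$, are pairwise disjoint if and only if the underlying edges of $M$ are pairwise vertex-disjoint, which by the first paragraph is equivalent to $M$ being a $\Delta$-temporal matching of $\G|_{[\Delta(\ell-1)+1,\Delta\ell]}$. This is a direct case analysis rather than a substantive obstacle; the only subtlety is making sure the $V'$ coordinates cannot produce a spurious intersection when $e\cap e'=\emptyset$, which follows because distinct vertices $v\ne v'$ give distinct elements $v_t \ne v'_{t'}$ regardless of the time labels.
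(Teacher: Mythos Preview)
Your argument is correct and is precisely the direct verification the paper has in mind; note that the paper states this as a ``simple observation'' and gives no explicit proof, so your write-up simply spells out what is left implicit there.
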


Before we proceed to the proof of \cref{lem:comp-ext-fam}, we show that both a representation of the matroid $\Q$
and the family $\mathcal H$ can be computed efficiently.

\begin{lemma}%
	\label{lem:eff-const}
	A representation of the matroid $\Q$ 
	over a finite field $\mathbb F_p$ with $p \in O(|\G|)$
	and the family $\mathcal H$ can be computed
	in $O(\nu \Delta |\G|)$ time.
	Furthermore, one operation over the finite field $\mathbb F_p$ can be computed in constant time.
\end{lemma}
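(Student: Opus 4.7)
The plan is to exploit the fact that $\Q$ is a \emph{uniform} matroid of rank $r = 5\nu\Delta$, which admits a standard linear representation by a Vandermonde matrix over any sufficiently large field. I would first bound the ground-set size: $|V| = n$, $|E'| \leq m\Delta$, $|V'| \leq 2|E'| \leq 2m\Delta$, and $|D| = 5\nu \leq 5n$, hence $|U| \in O(n + m\Delta) \subseteq O(|\G|)$. The family $\mathcal H$ and the sets $E'$, $V'$, $D$ can all be enumerated by a single scan over the time edges whose label lies in $[\Delta(\ell-1)+1, \Delta\ell]$, which takes $O(|\G|)$ time in total.

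Next I would choose the field. By Bertrand's postulate there exists a prime $p$ with $|U| \leq p \leq 2|U|$; such a prime can be found by simple trial division (or by maintaining a small precomputed sieve) in time that is dominated by $O(\nu\Delta|\G|)$. Since $p = O(|\G|)$, every element of $\mathbb F_p$ fits in $O(\log|\G|)$ bits and therefore into a constant number of machine words in the standard word RAM model, so that addition, subtraction, multiplication, and inverse modulo $p$ each take $O(1)$ time.

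For the representation itself, I would label the elements of $U$ by distinct nonzero values $\alpha_1,\dots,\alpha_{|U|} \in \mathbb F_p$ and use the Vandermonde matrix $A \in \mathbb F_p^{r \times |U|}$ whose column indexed by $x \in U$ is $(1,\alpha_x,\alpha_x^2,\dots,\alpha_x^{r-1})^{\mathsf T}$. Any $r$ columns form a square Vandermonde matrix with pairwise distinct parameters and hence nonzero determinant, so every $r$ columns of $A$ are linearly independent over $\mathbb F_p$; this is precisely the condition for $A$ to represent the uniform matroid of rank $r$. Filling in $A$ requires one multiplication per entry (computing $\alpha_x^{j}$ from $\alpha_x^{j-1}$), for a total of $O(r \cdot |U|) = O(5\nu\Delta \cdot |\G|) = O(\nu\Delta |\G|)$ field operations, each of which is $O(1)$ by the previous paragraph.

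Putting everything together gives the claimed $O(\nu\Delta|\G|)$ running time and the constant-time arithmetic in $\mathbb F_p$. The only conceptual subtlety is the choice of $p$: one has to be explicit that a prime of size $O(|\G|)$ suffices (because the rank $r$ columns of the Vandermonde matrix are independent as long as we can exhibit $|U|$ distinct field elements), and that locating such a prime is cheap enough not to dominate the stated bound; everything else is a direct construction.
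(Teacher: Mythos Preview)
Your proposal is correct and follows essentially the same argument as the paper: bound $|U|\in O(|\G|)$, pick a prime $p\in[|U|,2|U|]$ via Bertrand's postulate, represent the uniform matroid by the $r\times |U|$ Vandermonde matrix, and fill it column by column in $O(r\cdot|U|)=O(\nu\Delta|\G|)$ constant-time field operations. The only minor discrepancy is that the paper cites the Lagarias--Odlyzko method to locate the prime in $O(|U|^{1/2+o(1)})$ time, whereas your suggestion of naive trial division or a sieve does not quite stay within $O(\nu\Delta|\G|)$ when $\nu\Delta$ is small; this is a technical detail rather than a conceptual gap.
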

\begin{proof}
	Observe that the ground set $U$ is of size $O(|\G|)$, %
	because $E'$ contains one element for each time edge in $\G|_{[\Delta(\ell-1)+1,\Delta\ell]}$,
	the cardinality of $V'$ is upper-bounded by $2|E'|$, 
	and the set $D$ contain $O(\nu)$ dummy elements, which is dominated by $O(|\G|)$.
	The set $U$ can be constructed in $O(|\G|)$ time by iterating once over the temporal subgraph $\G|_{[\Delta(\ell-1)+1,\Delta\ell]}$
	and adding the dummy elements afterwards.

	Let $s = |U|$ and $r = 5\nu + 5\nu(\Delta-1)$.
	Let $p$ be a prime number with $s \leq p \leq 2s$. Such a prime number exists by the folklore Bertrand-Chebyshev theorem
	and can be found in $O(s^{1/2+o(1)}) \subseteq O(s)$ time using the Lagarias-Odlyzko method \cite{tao2012deterministic}.
	For the sake of completeness, we now show that there exists a representation of $\Q$ over $\mathbb F_p$ and it can be computed in 
	$O(\nu \Delta |\G|)$ time.
	Note that this is a detailed version of the proof of Marx \cite[Section~3.5]{Mar09}.
	Let $x_1, x_2, \ldots, x_s$ be distinct elements of $\mathbb F_p$. 
	We define an $r \times s$ matrix 
	$$
	A =
		\begin{pmatrix}
    			1 & 1 & 1 & \dots  & 1 \\
    			x_{1} & x_{2} & x_{3} & \dots  & x_{s} \\
    			x_{1}^2 & x_{2}^2 & x_{3}^2 & \dots  & x_{s}^2 \\
    			\vdots & \vdots & \vdots & \ddots & \vdots \\
    			x_{1}^{r-1} & x_{2}^{r-1} & x_{3}^{r-1} & \dots  & x_{s}^{r-1}
		\end{pmatrix}
	$$
	and claim that $A$ is a desired representation of $\Q$. Indeed, 
	any $r$ columns (corresponding to $x_{i_1}, x_{i_2}, \ldots, x_{i_r}$)
	of $A$ form a Vandermonde matrix, 
	whose determinant is $\prod_{1 \leq j < k \leq r} (x_{i_j} - x_{i_k}) \neq 0$. 
	Therefore, any $r$-element subset of the ground set is linearly independent, 
	while clearly any larger subsets are linearly dependent.

	To perform a primitive operation in the finite field $\mathbb F_p$, 
	we first perform the primitive operation in $\mathbb Z$ 
	and then take the result modulo $p$.
	Hence, we can compute one operation over $\mathbb F_p$ in constant time,
	because $p \in O(s)$ is small and we assume the RAM model of computation (see
	\cref{sec:preliminaries}).
	
	Notice now that every element of matrix~$A$ is either $1$ or can be obtained 
	by a single multiplication from the previous element in its column. 
	Hence, $A$ can be computed in $rs \in O(\nu \Delta |\G|)$ time.
		
	Finally, the family $\mathcal H$ can be computed in $O(|\G|)$ time by iterating once over $E'$ (to create~$\mathcal H_E$)
	and then adding the dummy sets of $\mathcal H_D$.
	Thus, overall a representation of $\mathcal Q$ and the family $\mathcal H$ can be computed in
	$O(\nu \Delta |\G|)$ time.
\end{proof}

Now we are ready to prove \cref{lem:comp-ext-fam}.
The algorithm behind \cref{lem:comp-ext-fam} is stated in \cref{alg:comp-ext-fam}.
\begin{algorithm2e}[t]
	\KwIn{A temporal graph $\G = (G,\lambda)$ of lifetime $T$ and $\ell, \Delta \in \mathbb{N}$ such that $\ell\Delta \leq T$.}
	\KwOut{An $\ell$-complete family of $\Delta$-temporal matchings for $\G|_{[\Delta(\ell-1)+1,\Delta\ell]}$ of size 
	$2^{O(\nu \Delta)}$.}

	\medskip
	
	$\nu \gets$ the maximum matching size of $G$.\nllabel{line:comp-max-matching}\;
	For input $\G$, $\Delta$, and $\nu$, compute a representation of the matroid $\mathcal Q = (U,I)$ 
	over a finite field $\mathbb F_p$ with $p \in O(|\G|)$, and the family $\mathcal H$ according to \cref{const:matroid}.\nllabel{line:const}\; 
	$\widehat{\mathcal F} \gets$
	\hangindent=2.5\skiptext\hangafter=1
	max $(5\nu(\Delta-1))$-representative family of 
			$ \mathcal F = \big\{ F = H_1 \uplus \dots \uplus H_\nu \mid
				F\in I\text{ and }
			H_j \in \mathcal H\text{ for }j \in [\nu] \big\}
				$ with respect to $w$.\nllabel{line:comp-rep-fam}\;
			$\mathcal M \gets \emptyset$.\;
		\ForEach(\nllabel{comp-ext-fam:for}){$F \in \widehat{\mathcal F}$}{
		$M \gets \big\{ (e,t) \mid e_t \in F \big\}$.\;
		$\mathcal M \gets \mathcal M \cup \{ M \}$.\;
	}
	\KwRet{$\mathcal M$.}
	\caption{Construction of an $\ell$-Complete Family (\cref{lem:comp-ext-fam}).}
  \label[algorithm]{alg:comp-ext-fam}
\end{algorithm2e}
Observe that in the following proof we will use the dummy elements, introduced in \cref{const:matroid}, to fill up the sets 
such that their sizes match the rank of the matroid $\mathcal Q$.
\begin{proof}[Proof of \cref{lem:comp-ext-fam}]
	To prove the lemma we use \cref{alg:comp-ext-fam}.
	We start with the running time analysis of \cref{alg:comp-ext-fam}.
	\begin{enumerate}%
		\item To compute the maximum matching size $\nu$ of the underlying graph $G$,
		we use the standard augmenting path-based algorithm for maximum matching.
		Since an augmenting path can be found in linear time \cite{GT85}, 
		the computation of $\nu$ in Line \ref{line:comp-max-matching} takes $O(\nu|\G|)$ time.
		\item In Line \ref{line:const} the algorithm computes a representation of the matroid $\mathcal Q$ 
		over a finite filed $\mathbb F_p$ with
		$p \in O(|\G|)$ and the family $\mathcal H$ from \cref{const:matroid}.
		By \cref{lem:eff-const}, this can be done in $O(\nu \Delta |\G|)$ time.

		\item Since the rank of $\mathcal Q$ is $5(\nu+\nu(\Delta-1))$ and
		$|\mathcal H| \in O(|\G|)$, by \cref{thm:matroid-tool}, 
		the computation of a max $(5\nu(\Delta-1))$-representative family $\widehat{\mathcal F}$
		in Line~\ref{line:comp-rep-fam} performs 
		$2^{O(\nu\Delta)} \cdot |\G|$ operations in~$\mathbb F_p$ and calls to the function $w$.
		The  algorithm behind \cref{thm:matroid-tool} evaluates function~$w$ on sets of 
		cardinality at most the rank of~$\mathcal Q$, and hence a single call to the function $w$ from \cref{const:matroid}
		can be implemented to work in $O(\nu \Delta)$ time.
		Furthermore, by \cref{lem:eff-const} a single operation in $\mathbb F_p$ takes constant time.
		Hence, the overall time complexity of Line~\ref{line:comp-rep-fam} is $2^{O(\nu\Delta)} \cdot |\G|$.
		\item Since the family $\widehat{\mathcal F}$ is of size at most 
		${{5\nu+5\nu(\Delta-1)} \choose {5\nu}} \in 2^{O(\nu \Delta)}$,
		the for-loop in Line~\ref{comp-ext-fam:for} runs 
		$2^{O(\nu \Delta)}$
		iterations.
		Each of these iterations runs in $O(\nu)$ time, and hence, in total, the for-loop is executed in 
		$2^{O(\nu \Delta)}$ time.
	\end{enumerate}
	Overall the algorithm outputs a family $\M$ of size 
	$2^{O(\nu \Delta)}$
	in time~$2^{O(\nu \Delta)}\cdot |\G|$.

	We are left to show that $\mathcal M$ is an $\ell$-complete family of $\Delta$-temporal
	matchings of $\G|_{[\Delta(\ell-1)+1,\Delta\ell]}$.
	First, we argue that every set in $\mathcal M$ is a $\Delta$-temporal matching of $\G|_{[\Delta(\ell-1)+1,\Delta\ell]}$.
	Indeed, by construction, a set $M$ in $\mathcal M$ corresponds to a set $F$ in $\widehat{\mathcal F}$ that
	contains $\biguplus_{(e,t) \in M} E_e^{(t)}$ as a subset. Hence, by \cref{lem:const-correct}, the set $M$ is a 
	$\Delta$-temporal matching of $\G|_{[\Delta(\ell-1)+1,\Delta\ell]}$.

	We now show that $\mathcal M$ is $\ell$-complete.
	Let $M$ be a $\Delta$-temporal matching of $\G$, $M^{\ell} = M|_{[\Delta(\ell-1)+1,\Delta\ell]}$, and 
	$M' = M \setminus M^{\ell}$.
	Let also $W$ be the set of vertex appearances in $\G|_{[\Delta(\ell-1)+1,\Delta\ell]}$ 
	which are $\Delta$-blocked by $M'$.
	Note that since $M$ is a $\Delta$-temporal matching, no time edge in $M^{\ell}$ is incident with a
	vertex appearance in $W$. The latter together with \cref{lem:const-correct} imply that 
	the sets $Y = \{ v_t \in U \mid (v,t) \in W \}$ and $E_e^{(t)}, (e,t) \in
	M^{\ell}$, are pairwise disjoint.
	Since the maximum matching size of the underlying graph $G$ is $\nu$,
	we have that $|Y| = |W| \leq 4\nu(\Delta-1)$.
	For the same reason $|M^{\ell}| \leq \nu$ and therefore $\mathcal F$ contains a set 
	$X = \biguplus_{(e,t) \in M^{\ell}} E^{(t)}_{e} \uplus D'$ of size $5\nu$, where $D'$ is a set of dummy elements. 
	Consequently, the cardinality of $X \uplus Y$ is at most $5\nu+4\nu(\Delta-1)$
	and hence $X \uplus Y$ is an independent set of $\Q$.
	Furthermore, observe that $w(X) = |M|$.
	Now, since $\widehat{\mathcal F}$ is a max $(5\nu(\Delta-1))$-representative of $\mathcal F$ with respect to $w$,
	the family $\widehat{\mathcal F}$ contains a set $\widehat X$ such that $\widehat X$ is disjoint from $Y$,
	the union $\widehat X \uplus Y$ is an independent set of $\mathcal Q$, and $w(\widehat X) \geq w(X)$.
	Let $\widehat X'$ be the set obtained from $\widehat X$ by removing the dummy elements.
	Hence $w(\widehat X') = w(\widehat X)$ and by construction $\widehat X'$ is the union of pairwise disjoint sets $E^{(t)}_{e}$, $(e,t) \in M''$ for some set $M''$
	of time edges of $\G|_{[\Delta(\ell-1)+1,\Delta\ell]}$. 
	Thus, $w(\widehat X') = |M''|$.
	By \cref{lem:const-correct} we conclude that $M''$ is a
	$\Delta$-temporal matching of $\G|_{[\Delta(\ell-1)+1,\Delta\ell]}$. Moreover, no time edge in $M''$ is 
	incident with vertex appearances in $W$, as $\widehat X'$ is disjoint from $Y$.
	Hence, $M' \cup M''$ is a $\Delta$-temporal matching in $\G$ and 
	$|M' \cup M''| =  |M'| + |M''| = |M'| + w(\widehat X) \geq |M'| + w(X) = |M'| + |M^{\ell}| = |M|$.

\end{proof}

\ifarxiv{}
\subparagraph*{Dynamic program.}
\else{}
\paragraph{Dynamic program}
\fi{}
Now we are ready to combine Step~\ref{step2} of our algorithm with the remaining Steps~\ref{step1} and~\ref{step3}.
More precisely, we employ $\ell$-complete families 
of $\Delta$-temporal matchings of $\Delta$-windows in a dynamic program (Step~\ref{step3}) 
to compute the $\Delta$-temporal matching of maximum size for the whole temporal graph.
The pseudocode of this dynamic program is stated in \cref{alg:dynamic-program}.
This is the algorithm behind \cref{thm:fpt-for-alpha-delta}.
It computes a table~$\T$ where each entry $\T[i,M']$ stores
the maximum size of a $\Delta$-temporal matching~$M$ in the temporal graph $\G|_{[1,\Delta i]}$
such that $M|_{[\Delta(i-1)+1,\Delta i]} = M'$.
Observe that a trivial dynamic program which computes all entries of~$\T$ cannot provide fixed-parameter tractability
of \textsc{Temporal Matching} when parameterized by $\Delta$ and $\nu$ because the corresponding table is simply too large.
The crucial point of the dynamic program is 
that it is sufficient to fix for each $i \in \frac{T}{\Delta}$ an $i$-complete 
family $\mathcal M_i$ of $\Delta$-temporal matchings for $\G|_{[\Delta(i-1)+1,\Delta i]}$
and then compute only the entries $\mathcal T[i,M']$, where~$M' \in \mathcal M_i$.

\begin{algorithm2e}[t]
	\KwIn{A temporal graph $\G = (G,\lambda)$ of lifetime $T$ and an integer $\Delta < T$.}
	\KwOut{The maximum size of a $\Delta$-temporal matching in $\G$.}
		
	\medskip
	$\T[i,M'] \gets 0 $, for every $i \in [\frac{T}{\Delta}]$ and a subset $M'$ of time edges of $\G|_{[\Delta(i-1)+1,\Delta i]}$.\nllabel{dp:init}\;

	$\mathcal M_0 \gets \{ \emptyset \}$.\;
	\For(\nllabel{dp:fori}){$i\gets1$ \KwTo $\frac{T}{\Delta}$}{ 
		$\mathcal M_i \gets$ $i$-complete family of $\Delta$-temporal matchings of $\G|_{[\Delta(i-1)+1,\Delta i]}$\nllabel{dp:comp-fam}.\;	
		\ForEach(\nllabel{dp:for}){$M_L \in \mathcal M_{i-1}$ and $M_R \in \mathcal M_i$}{
			\uIf{$M_L \cup M_R$ is a $\Delta$-temporal matching in $\G$}{
				$\T[i,M_R] \gets \max\big\{\T[i,M_R],\T[i-1,M_L]+|M_R|\big\}$.\nllabel{dp:add}\;
			}	
		}
	}
	\KwRet{$\displaystyle \max_{M' \in \mathcal M_{\frac{T}{\Delta}}} \T[\frac{T}{\Delta},M']$.\nllabel{dp:return}}
	\caption{Fixed-Parameter Algorithm for the Combined Parameter $\Delta$ and Maximum Matching Size $\nu$ of the Underlying Graph (\cref{thm:fpt-for-alpha-delta}).}
  \label[algorithm]{alg:dynamic-program}
\end{algorithm2e}

\begin{lemma}
	\label{lem:dynamic-program-correctness}
	\cref{alg:dynamic-program} is correct, that is, for a given temporal graph $\G = (G,\lambda)$ of lifetime $T$ 
	and an integer $\Delta < T$, the algorithm returns the maximum size of a $\Delta$-temporal matching in $\G$.
\end{lemma}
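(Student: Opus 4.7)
The plan is to prove correctness by induction on $i$, showing that $\max_{M' \in \mathcal{M}_i} \T[i,M']$ equals the size $\mu_i^*$ of a maximum $\Delta$-temporal matching of $\G|_{[1,\Delta i]}$; applying this at $i = T/\Delta$ yields the claim. A direct induction on this statement, however, is too weak: the update in Line~\ref{dp:add} combines $M_R \in \mathcal{M}_i$ only with $M_L \in \mathcal{M}_{i-1}$, so we must guarantee that the restriction to these families still captures the optimum. The fix is to carry a ``tail'' matching through the induction. Concretely, I would prove the following strengthened hypothesis $P(i)$:

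\emph{For every $\Delta$-temporal matching $N$ of $\G|_{[1,\Delta i]}$ and every $\Delta$-temporal matching $R$ of $\G$ with labels in $[\Delta i +1,T]$ such that $N \cup R$ is a $\Delta$-temporal matching of $\G$, there exists $M' \in \mathcal{M}_i$ with $\T[i,M'] \geq |N|$ and such that $M' \cup R$ is a $\Delta$-temporal matching of $\G$.}

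The base $P(0)$ is immediate since $\mathcal{M}_0 = \{\emptyset\}$, $\T[0,\emptyset]=0$, and $N=\emptyset$. For the inductive step, I would first apply the $i$-completeness of $\mathcal{M}_i$ (\cref{lem:comp-ext-fam}) to the matching $N \cup R$ of $\G$, obtaining $M_R \in \mathcal{M}_i$ such that $N' \cup R$ is a $\Delta$-temporal matching of $\G$ of size at least $|N \cup R|$, where $N' := (N \setminus N|_{[\Delta(i-1)+1,\Delta i]}) \cup M_R$; in particular $|N'| \geq |N|$. I would then apply $P(i-1)$ with $N_{\mathrm{new}} := N|_{[1,\Delta(i-1)]}$ and $R_{\mathrm{new}} := M_R \cup R$ (which has labels in $[\Delta(i-1)+1,T]$, and whose union with $N_{\mathrm{new}}$ equals $N' \cup R$, hence is a $\Delta$-temporal matching). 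This yields $M_L \in \mathcal{M}_{i-1}$ with $\T[i-1,M_L] \geq |N|_{[1,\Delta(i-1)]}|$ such that $M_L \cup M_R \cup R$ is a $\Delta$-temporal matching. The $(M_L,M_R)$ iteration of Line~\ref{dp:for} then sets
\[
\T[i,M_R] \;\geq\; \T[i-1,M_L] + |M_R| \;\geq\; |N|_{[1,\Delta(i-1)]}| + |M_R| \;\geq\; |N'| \;\geq\; |N|,
\]
and $M_R \cup R$ is a matching as a subset of $M_L \cup M_R \cup R$, establishing $P(i)$.

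For the matching upper bound, a straightforward parallel induction shows that whenever $\T[i,M']$ is updated in Line~\ref{dp:add} its value is the size of a concrete $\Delta$-temporal matching in $\G|_{[1,\Delta i]}$ whose last $\Delta$-window restriction is $M'$: this follows because the \textbf{if}-condition certifies that $M_L \cup M_R$ is a matching, so appending $M_R$ to the matching witnessing $\T[i-1,M_L]$ (obtained by the inductive hypothesis) yields such a matching. Applying $P(T/\Delta)$ with $N$ equal to a maximum $\Delta$-temporal matching of $\G$ and $R = \emptyset$ shows the return value in Line~\ref{dp:return} is at least $\mu_\Delta(\G)$, and the upper bound shows it is at most $\mu_\Delta(\G)$, giving equality.

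The main obstacle I anticipate is recognizing that a naive induction on $\max_{M' \in \mathcal{M}_i}\T[i,M']$ does not go through: once $M_R$ is selected via $i$-completeness, one cannot a priori guarantee that $\mathcal{M}_{i-1}$ contains a compatible $M_L$ of sufficient weight. Threading the auxiliary tail matching $R$ through the inductive statement is the key conceptual move, and it is exactly the condition needed so that when $P(i-1)$ is invoked the constructed $M_L$ is automatically compatible with the chosen $M_R$.
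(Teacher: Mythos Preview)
Your proof is correct. The paper takes a somewhat different route: rather than carrying a tail matching $R$ through the induction, it proves the stronger invariant that for every $M' \in \mathcal{M}_i$ the entry $\T[i,M']$ equals the maximum size of a $\Delta$-temporal matching of $\G|_{[1,\Delta i]}$ whose restriction to the $i$-th $\Delta$-window is exactly~$M'$. With that invariant no tail is needed, because compatibility with everything to the right is already encoded by the boundary condition ``last window equals $M'$''. Accordingly, in the inductive step the paper applies $(i{-}1)$-completeness (to replace the $(i{-}1)$-th window of a putative larger matching by some $M^{i-1}\in\mathcal{M}_{i-1}$ while keeping the $i$-th window fixed), whereas you first apply $i$-completeness to pick $M_R$ and then recurse. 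At the very end the paper invokes $(T/\Delta)$-completeness once more to pass from the table to~$\mu_\Delta(\G)$, where in your argument $P(T/\Delta)$ with $R=\emptyset$ handles this directly. Your explicit upper-bound argument (that every nonzero $\T[i,M']$ is witnessed by an actual matching, using that conflicts across non-adjacent $\Delta$-windows are impossible) is a point the paper leaves implicit; it is needed in both proofs.
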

\begin{proof}
	To prove the lemma we first show by induction on $i \in [\frac{T}{\Delta}]$ that 
	for every $M' \in \M_i$ the entry $\T[i,M']$ contains the maximum size of a $\Delta$-temporal matching~$M$ in $\G|_{[1,\Delta i]}$ such that $M|_{[\Delta(i-1)+1,\Delta i]} = M'$.
	The statement is easily verifiable for $i=1$. 
	
	Let now $i \geq 2$ and assume the statement holds for indices smaller than $i$.
	Let $M^i$ be an arbitrary element in $\M_i$ and assume towards a contradiction that there is 
	a $\Delta$-temporal matching $M$ in $\G|_{[1,\Delta i]}$ 
	such that $|M|> \T[i,M^i]$ and $M|_{[\Delta (i-1) + 1,\Delta i]} = M^i$.
	
	Since $\M_{i-1}$ is an $(i-1)$-complete family of $\Delta$-temporal matchings 
	of the temporal graph $\G|_{[\Delta (i-2)+1,\Delta (i-1)]}$, there exists an $M^{i-1} \in \M_{i-1}$
	such that $M' := \big( M \setminus M|_{[\Delta (i-2)+1,\Delta (i-1)]} \big) \cup M^{i-1}$ 
	is a $\Delta$-temporal matching and $|M'| \geq |M|$.

	Since $M'|_{[\Delta (i-2)+1, \Delta (i-1)]} =  M^{i-1}$, by the induction hypothesis we have
	$\T[i-1, M^{i-1}] \geq \big| M'|_{[1,\Delta (i-1)]} \big|$.
	Furthermore, since both $M^{i-1}$ and $M^i$ are subsets of~$M'$, their union $M^{i-1} \cup M^i$
	is a $\Delta$-temporal matching in $\G$.
	Consequently, Line~\ref{dp:add} of the algorithm implies that 
	$\T[i,M^i] \geq \T[i-1, M^{i-1}] + |M^i|$, and therefore 
	$|M| > \T[i,M^i] \geq \big| M'|_{[1,\Delta (i-1)]} \big| + |M^i| = |M'|$,
	which is a contradiction.	
	
	To complete the proof, we observe that since $\M_{\frac{T}{\Delta}}$ is a $\frac{T}{\Delta}$-complete
	family of $\Delta$-temporal matchings of $\G|_{[T-\Delta+1,T]}$, the above statement implies that 
	the value $\max_{M' \in \mathcal \M_{\frac{T}{\Delta}}} \T[\frac{T}{\Delta},M']$ returned by the algorithm is the
	size of a maximum $\Delta$-temporal matching of $\G$.

\end{proof}

Next, we analyze the running time of the algorithm.
\begin{lemma}
	\label{alg:dynamic-program-runtime}
	\cref{alg:dynamic-program} runs in $2^{O(\nu\Delta)}\cdot |\G| \cdot \frac{T}{\Delta}$ time,
	where $\nu$ is the maximum matching size of underlying graph of $\G$.
\end{lemma}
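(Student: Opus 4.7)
The plan is to bound the cost of each line of \cref{alg:dynamic-program} and then multiply by the number of outer iterations. Most of the work per outer iteration is concentrated in Line~\ref{dp:comp-fam}, which invokes \cref{lem:comp-ext-fam}, so the other lines just have to be shown not to dominate.

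First I would argue that, although Line~\ref{dp:init} formally quantifies over all subsets of time edges, we only ever access $\T[i,M']$ for $M' \in \M_i$ (the loops in Lines~\ref{dp:for} and~\ref{dp:return} never read anything else). Hence we may implement $\T$ as a hash table that stores only the $O(T/\Delta) \cdot 2^{O(\nu\Delta)}$ entries actually used, each initialized on demand in constant time. So initialization is absorbed into the main cost.

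Next, I would bound the $i$-th iteration of the outer loop in Line~\ref{dp:fori}. Line~\ref{dp:comp-fam} constructs $\M_i$ in $2^{O(\nu\Delta)}\cdot|\G|$ time and returns a family of size $2^{O(\nu\Delta)}$ by \cref{lem:comp-ext-fam}. The inner for-loop in Line~\ref{dp:for} then iterates over $|\M_{i-1}|\cdot|\M_i| \le 2^{O(\nu\Delta)}$ pairs. For each pair $(M_L,M_R)$, checking whether $M_L\cup M_R$ is a $\Delta$-temporal matching can be done in $\poly(\nu\Delta)$ time: every element of $\M_{i-1}$ or $\M_i$ is a $\Delta$-temporal matching inside a single $\Delta$-window, and such a matching contains at most $\nu\Delta$ time edges (at most $\nu$ per snapshot, by definition of $\nu$); hence a pairwise comparison takes $O((\nu\Delta)^2)$ time. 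Line~\ref{dp:add} itself is constant-time (given the hash-table implementation of $\T$). Summing, the work for one outer iteration is $2^{O(\nu\Delta)}\cdot|\G|$.

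Finally, I would multiply by the $T/\Delta$ iterations of Line~\ref{dp:fori} and note that the post-loop maximum in Line~\ref{dp:return} costs only $|\M_{T/\Delta}| = 2^{O(\nu\Delta)}$ additional time, which is subsumed. The overall bound is therefore
\[
    \tfrac{T}{\Delta}\cdot 2^{O(\nu\Delta)}\cdot |\G| \;=\; 2^{O(\nu\Delta)}\cdot |\G|\cdot \tfrac{T}{\Delta},
\]
as claimed. The only mild obstacle is the bookkeeping for the table $\T$: a naive reading of Line~\ref{dp:init} would even exceed the target running time, so one has to observe that only the $2^{O(\nu\Delta)} \cdot T/\Delta$ entries indexed by members of the families $\M_i$ are ever touched and implement $\T$ accordingly.
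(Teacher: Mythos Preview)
Your proposal is correct and follows essentially the same approach as the paper: sparse/lazy initialization of $\T$, invoking \cref{lem:comp-ext-fam} for Line~\ref{dp:comp-fam}, bounding the inner loop by $|\M_{i-1}|\cdot|\M_i|=2^{O(\nu\Delta)}$, and multiplying by $T/\Delta$. One small sharpening the paper exploits (but which does not affect the asymptotic): any $\Delta$-temporal matching inside a single $\Delta$-window has at most $\nu$ time edges, not $\nu\Delta$, since all its time edges are pairwise within $\Delta$ time and hence must be vertex-disjoint; thus each inner iteration costs only $O(\nu)$ rather than $\poly(\nu\Delta)$.
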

\begin{proof}
	We represent our table $\T$ by a sparse set \cite{BT93} that stores only non-zero entires of $\T$.
	Hence, Line~\ref{dp:init} can be computed in constant time.
	By \cref{lem:comp-ext-fam}, Line~\ref{dp:comp-fam} 
	can be computed in 
	$2^{O(\nu \Delta)}\cdot |\G|$ time 
	and $|\mathcal M_i| \in 2^{O(\nu\Delta)}$.
	The latter implies that the for-loop of Line~\ref{dp:for} executes $2^{O(\nu\Delta)}$ iterations.
	Furthermore, each of the iterations runs in $O(\nu)$ time.
	Hence, all in all, \cref{alg:dynamic-program} runs in~$2^{O(\nu\Delta)}\cdot |\G| \cdot \frac{T}{\Delta}$ time.
\end{proof}
Finally, we have everything at hand to show \cref{thm:fpt-for-alpha-delta}.
\begin{proof}[Proof of \cref{thm:fpt-for-alpha-delta}]
	Let $(\G,\Delta,k)$ be an instance of \textsc{Temporal Matching}, where 
	$\G$ is a temporal graph of lifetime $T$ and $\Delta,k \in \mathbb N$.
	
	If $\Delta \geq T$, then we check whether the underlying graph $G$ of $\G$
	admits a matching of size at least $k$, which can be done in $O(k|\G|)$ time using 
	the standard augmenting path-based method.

	If $\Delta < T$, then we add at most $\Delta-1$ trailing edgeless snapshots to $\G$ to guarantee that the lifetime
	of the resulting temporal graph is divisible by $\Delta$. 
	Note that this does not change the maximum size of a $\Delta$-temporal matching. 
	We then apply \cref{alg:dynamic-program} to find the maximum size of a 
	$\Delta$-temporal matching in $\G$ and compare the resulting value with $k$.
	By \cref{alg:dynamic-program-runtime} this can be done in $2^{O(\nu\Delta)}\cdot |\G| \cdot \frac{T}{\Delta}$ time,
	which implies the theorem.
\end{proof}

\ifarxiv{}
\subparagraph*{Kernelization lower bound.}
\else{}
\paragraph{Kernelization lower bound}
\fi{}
Lastly, we can show that we cannot hope to obtain a polynomial kernel for the
parameter combination number $n$ of vertices and $\Delta$. In particular, this
implies that we presumably also cannot get a polynomial kernel for the parameter combination $\nu$ and $\Delta$, since $\nu\le \frac{n}{2}$.

\begin{proposition}%
		\label{prop:nopk}
\textsc{Temporal Matching} parameterized by the number $n$ of vertices 
does not admit a polynomial kernel for all $\Delta\ge 2$, unless NP $\subseteq$ coNP/poly.
\end{proposition}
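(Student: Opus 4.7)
The plan is to prove the kernelization lower bound via AND-cross-composition, carried out uniformly for each fixed $\Delta\geq 2$. The source of hard instances is \cref{thm:underlyingpath} for $\Delta=2$, lifted to any $\Delta\geq 2$ by \cref{obs:D1toD} (whose reduction adds empty snapshots and therefore preserves $n$). Fix such a $\Delta$. I would first specify the polynomial equivalence relation $\mathcal{R}$: two well-formed instances $(\G,\Delta,k)$ and $(\G',\Delta,k')$ are $\mathcal{R}$-equivalent iff $n(\G)=n(\G')$ and $k=k'$, with malformed strings grouped into a single additional class. Since there are only polynomially many pairs $(n,k)$ on inputs of length at most $N$, this relation is polynomial-time decidable and has polynomially many classes, meeting the requirements of the cross-composition framework.

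Given $t$ $\mathcal{R}$-equivalent instances $(\G_i,\Delta,k)$ for $i\in[t]$, I would identify their vertex sets into a common set $V$ of size $n_0$ and build the composed instance $(\G^*,\Delta,k^*)$ on $V$ by concatenating the $\G_i$ in time with a gap of $\Delta-1$ empty snapshots between successive instances; that is, shift the labels of $\G_i$ by $\sum_{j<i}(T_j+\Delta-1)$ and set $k^*:=t\cdot k$. Because the last active slot of $\G_i$ and the first active slot of $\G_{i+1}$ differ by exactly $\Delta$, no time edge of $\G_i$ is in conflict with any time edge of $\G_j$ for $i\neq j$. Consequently every $\Delta$-temporal matching of $\G^*$ decomposes into disjoint $\Delta$-temporal matchings $M_i$ of the individual $\G_i$, so $\MaxTempMatchingSize_{\Delta}(\G^*)=\sum_{i}\MaxTempMatchingSize_{\Delta}(\G_i)$, and $\G^*$ admits a $\Delta$-temporal matching of size $k^*=tk$ iff every $\G_i$ admits one of size $k$, which is exactly the conjunction of the input answers.

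Since $n(\G^*)=n_0$ is bounded by the size of any single input instance, this is a valid AND-cross-composition. Drucker's AND-version of the cross-composition theorem then rules out a polynomial kernel for \textsc{Temporal Matching} parameterized by $n$ for this fixed $\Delta$ unless NP$\,\subseteq\,$coNP/poly; since the argument works for every $\Delta\geq 2$, the proposition follows. The main subtlety is the choice of AND- rather than OR-composition: a naive OR-composition via concatenation would fail, because independent matchings from different instances simply add up rather than behaving disjunctively, and forcing the matching to concentrate in a single instance would require a padding gadget shared across all instances without inflating $n$, which is delicate. The AND variant sidesteps this difficulty entirely, since summation of matching sizes is precisely the behavior produced by concatenation with sufficient gaps.
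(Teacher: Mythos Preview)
Your argument has a genuine gap in the correctness of the AND-composition. You correctly establish that $\mu_\Delta(\G^*) = \sum_i \mu_\Delta(\G_i)$, but the step ``$\G^*$ admits a $\Delta$-temporal matching of size $tk$ iff every $\G_i$ admits one of size $k$'' does not follow. The backward direction is fine; the forward direction fails: if some $\G_j$ has $\mu_\Delta(\G_j) > k$, the surplus can compensate for a deficit in another instance $\G_{j'}$ with $\mu_\Delta(\G_{j'}) < k$, so the total can still reach $tk$ even though not all inputs are yes-instances. Nothing in your setup prevents this, since generic \textsc{Temporal Matching} instances (in particular those arising from \cref{thm:underlyingpath} and \cref{obs:D1toD}) carry no a~priori upper bound of $k$ on the optimum.

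The paper handles exactly this issue by choosing the source problem more carefully: it composes from \textsc{Independent Set} on max-degree-three graphs in the promise version where no independent set of size $h+1$ exists (which is still NP-hard), and then applies \cref{constr:lreduction} to each instance. By \cref{lem:nuAlpha} each resulting temporal block then contributes \emph{at most} its individual target $h_1+|F_i|$ to the overall matching, so the sum $k=\ell h_1+\sum_i|F_i|$ is attainable only if every block hits its target, giving a true AND. Your approach could be repaired along the same lines by composing from a promise version of \textsc{Temporal Matching} where $\mu_\Delta(\G_i)\le k$ is guaranteed, but you would then have to argue that this promise version remains NP-hard under the reduction chain you invoke, which is not automatic and is essentially the missing ingredient.
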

 We need the following notation for the proof.
An equivalence
relation~$R$ on the instances of some problem~$L$ is a
\emph{polynomial equivalence relation} if
\begin{enumerate}[(i)]
 \item one can decide for each two instances in time polynomial in their sizes whether they belong to the same equivalence class, and
 \item for each finite set~$S$ of instances, the relation $R$ partitions the set into at most~$(\max_{x \in S} |x|)^{O(1)}$ equivalence classes.  
\end{enumerate}

An \emph{AND-cross-composition} of a problem~$L\subseteq \Sigma^*$ into a
parameterized problem~$P$ (with respect to a polynomial equivalence
relation~$R$ on the instances of~\(L\)) is an algorithm that takes
$\ell$ $R$-equivalent instances~$x_1,\ldots,x_\ell$ of~$L$ and
constructs in time polynomial in $\sum_{i=1}^\ell |x_i|$ an instance
$(x,k)$ of~\(P\) such that
\begin{enumerate}[(i)]
\item $k$ is polynomially upper-bounded in $\max_{1\leq i\leq \ell}|x_i|+\log(\ell)$ and 
\item $(x,k)$ is a yes-instance of $P$ if and only if $x_{\ell'}$ is a yes-instance of $L$ for every $\ell'\in \{1,\ldots,\ell\}$. 
\end{enumerate}

If an NP-hard problem~\(L\) AND-cross-composes into a parameterized
problem~$P$, then~$P$ does not admit a polynomial-size kernel, unless NP $\subseteq$ coNP/poly~\cite{bodlaender2014kernelization},
which would cause a collapse of the polynomial-time hierarchy to the third
level.

\begin{proof}[Proof of \cref{prop:nopk}]
		To prove \cref{prop:nopk}, we provide an AND-cross-composition from \textsc{Independent Set} on graphs with maximum degree three~\cite{garey1974some}.
Intuitively, we can just string together instances produced by \cref{constr:lreduction} in the time axis such that the large instance contains a large $\Delta$-temporal matching if and only if all original instances are \yes-instances.
 
In this problem we are asked to decide whether a given graph $H = (U,F)$ with maximum degree three contains a set of at least $h$ pairwise non-adjacent vertices.
Furthermore, it is important to observe that, given graph $H = (U,F)$ with maximum degree three, it is NP-complete to decide whether $H$ contains an independent set of size~$h$ even if it is known that $H$ does not contain an independent set of size $h+1$~\cite{garey1974some}. In the following, we assume that all instances have this property.
We define an equivalence relation~$R$ as follows: Two instances $(H=(U,F),h)$ and $(H'=(U',F'),h')$ are equivalent under~$R$ if and only if the number of vertices is the same, that is, $|U|=|U'|$ and we have that $h=h'$. Clearly,~$R$~is a polynomial equivalence relation. 

Now let $(H_1=(U_1,F_1),h_1),\ldots,(H_\ell=(U_\ell,F_\ell),h_\ell)$ be $R$-equivalent instances of \textsc{Independent Set} with the above described extra conditions. We arbitrarily identify the vertices of all instances, that is, let $U=U_1=\ldots=U_\ell$. For each $(H_i,h_i)$ with $i\in[\ell]$ we construct an instance of \textsc{Temporal Matching} as described in \cref{constr:lreduction} (for an illustration see \cref{fig:APXreduction}) with the only difference that we add a fourth snapshot that does not contain any edges. Now we put all constructed temporal graphs next to each other in temporal order, that is, if $\G^{(i)}=(G^{(i)}=(V^{(i)},E^{(i)}),\lambda^{(i)})$ with $\lambda^{(i)}:E^{(i)}\rightarrow [4]$ is the graph constructed for $(H_i,h_i)$, then the overall temporal graph is $\G=(G(\bigcup_{i\in[\ell]}V^{(i)},\bigcup_{i\in[\ell]}E^{(i)}),\lambda)$ with $\lambda(e) =\bigcup_{i\in[\ell]} \lambda^{(i)}(e)$, where we assume that $\lambda^{(i)}(e)=\emptyset$ if $e\notin E^{(i)}$. Note that $|\bigcup_{i\in[\ell]}V^{(i)}|\le 2|U|+\binom{|U|}{2}$ since the temporal graphs produced by \cref{constr:lreduction} contain two vertices for every vertex of the \textsc{Independent Set} instance and one vertex for every edge of the \textsc{Independent Set} instance. Further, we set~$\Delta=2$ and~$k=\ell\cdot h_1 + \sum_{i\in[\ell]}|F_i|$.

This instance can be constructed in polynomial time and~$|V|$ is polynomially upper-bounded by the maximum size of an input instance. It is easy to check that the extra edgeless snapshot contained in each constructed temporal graph $\G^{(i)}$ prevents the $\Delta$-temporal matchings from two adjacent constructed graphs $\G^{(i)}$ and $\G^{(i+1)}$ for $i\in[\ell-1]$ to interfere, that is, matching two vertices with a time edge from $\G^{(i)}$ cannot block vertices from $\G^{(i+1)}$ from being matched. Furthermore, since we assume that no instance $(H_i,h_i)$ of \textsc{Independent Set} contains an independent set of size $h_1+1$, it cannot happen that the $\Delta$-temporal matching of a constructed temporal graph $\G^{(i)}$ is larger than $h_1+|F_i|$. It follows from the proof of \cref{th:apx} that the constructed \textsc{Temporal Matching} instance is a 
\yes-instance if and only if for every~$i\in [n]$ the \textsc{Independent Set} instance $(H_i,h_i)$ is a \yes-instance.

Since \textsc{Independent Set} is NP-hard under the above described restrictions~\cite{garey1974some} and we AND-cross-composed it into \textsc{Temporal Matching} with $\Delta=2$ parameterized by $|V|$, this proves the proposition.
\end{proof}

\section{Conclusion}

The following issues remain future research challenges.
First, on the side of polynomial-time approximability,
improving the constant approximation factors is desirable and seems
feasible. Beyond, by lifting polynomial time to FPT~time, even approximation
schemes in principle seem possible, thus circumventing our APX-hardness result
(\cref{th:apx}).
Taking the view of parameterized complexity analysis in order to cope
with NP-hardness, a number of directions are naturally coming up.
For instance, based on our fixed-parameter tractability result for the
parameter solution size $k$, 
the question for a polynomial-size kernel for parameter $k$ naturally arise.
To enlarge the range of promising and relevant
parameterizations,
one may extend the parameterized studies to
structural graph parameters combined with $\Delta$ or the lifetime of the temporal graph.
In particular, treedepth combined with $\Delta$ is left open,
since it is a ``stronger'' parameterization than the one used in
\cref{thm:fpt-for-alpha-delta} but unbounded in all known NP-hardness reductions.

\bibliographystyle{siamplain}
\bibliography{references}
\end{document}